\newtheorem{thm}{Theorem}[section]
\newtheorem{lem}[thm]{Lemma}
\newtheorem{prop}[thm]{Proposition}
\newtheorem{quest}[thm]{Question}
\theoremstyle{definition}
\newtheorem{Def}[thm]{Definition}
\newtheorem{rem}[thm]{Remark}
\newtheorem*{ack}{Acknowledgement}
\newtheorem{case}{Case}
\numberwithin{equation}{section}
\numberwithin{figure}{section}
\def\End{{\text{\rm{End}}}}
\def\trace{{\text{\rm{trace}}}}
\def\rchi{{\hbox{\raise1.5pt\hbox{$\chi$}}}}
\def\Aut{{\text{\rm{Aut}}}}
\def\tensor{\otimes}
\def\Ker{{\text{\rm{Ker}}}}
\def\const{{\text{\rm{const}}}}
\def\a{\alpha}
\def\b{\beta}
\def\lam{\lambda}
\def\gam{\gamma}
\def\Gam{\Gamma}
\def\Sym{{\text{\rm{Sym}}}}
\newcommand{\bea}{\begin{eqnarray}}
\newcommand{\eea}{\end{eqnarray}}
\newcommand{\be}{\begin{equation}}
\newcommand{\ee}{\end{equation}}
\newcommand{\D}{{\frak D}}
\newcommand{\Mbar}{{\overline{\mathcal{M}}}}
\newcommand{\bP}{{\mathbb{P}}}
\newcommand{\bC}{{\mathbb{C}}}
\newcommand{\bE}{{\mathbb{E}}}
\newcommand{\bQ}{{\mathbb{Q}}}
\newcommand{\bR}{{\mathbb{R}}}
\newcommand{\bZ}{{\mathbb{Z}}}
\newcommand{\cM}{{\mathcal{M}}}
\newcommand{\cD}{{\mathcal{D}}}
\newcommand{\cH}{{\mathcal{H}}}
\newcommand{\cO}{{\mathcal{O}}}
\newcommand{\cU}{{\mathcal{U}}}
\newcommand{\la}{{\langle}}
\newcommand{\ra}{{\rangle}}
\newcommand{\half}{{\frac{1}{2}}}
\newcommand{\bp}{{\mathbf{p}}}
\newcommand{\rar}{\rightarrow}
\newcommand{\lrar}{\longrightarrow}
\newcommand{\hxi}{{\hat{\xi}}}
\begin{document}
\large
\setcounter{section}{0}

\title[Spectral curves and the Schr\"odinger equation]
{Spectral curves and the Schr\"odinger equations
for the Eynard-Orantin recursion}

\author[M.\ Mulase]{Motohico Mulase}
\address{
Department of Mathematics\\
University of California\\
Davis, CA 95616--8633, U.S.A.}
\email{mulase@math.ucdavis.edu}

\author[P.\ Su\l kowski]{Piotr Su\l kowski}
\address{
Faculty of Physics\\
 University of Warsaw\\
  ul. Ho\.za 69, 00-681 Warsaw, Poland,}
  \address{
  Institute for Theoretical Physics\\ 
  University of Amsterdam\\
Science Park 904, 1090 GL, Amsterdam, The Netherlands, and}
\address{Department of Physics\\
California Institute of Technology\\
Pasadena, CA 91125, U.S.A.}
\email{psulkows@theory.caltech.edu}

\begin{abstract}
It is predicted that the principal 
specialization of the partition function of 
a B-model topological string theory, that
is mirror dual to an A-model enumerative geometry 
problem,
satisfies a Schr\"odinger equation, and that
the characteristic variety of the 
Schr\"odinger operator gives the spectral curve
of the B-model theory, when 
an algebraic K-theory obstruction vanishes.
In this paper we present two 
concrete mathematical
A-model examples whose mirror dual partners
exhibit these predicted features on the B-model
side. The  A-model examples 
we discuss are  the
generalized  Catalan numbers
of an arbitrary genus and the  single Hurwitz numbers.
In each case,
we show that the Laplace transform of the
counting functions satisfies the
Eynard-Orantin topological recursion, that
the B-model 
partition function satisfies the KP equations,
and that the principal specialization of the partition
function satisfies a Schr\"odinger equation
whose total symbol is exactly the 
Lagrangian immersion of the spectral curve
of the Eynard-Orantin theory.

\end{abstract}

\subjclass[2000]{Primary: 14H15, 14N35, 05C30, 11P21;
Secondary: 81T30}

\begin{flushright}
\begin{tabular}{l}
\small{CALT-68-2896} \\

\\ [.3in]
\end{tabular}
\end{flushright}

\maketitle

\allowdisplaybreaks

\tableofcontents

\section{Introduction and the main results}
\label{sect:intro}

In a series of remarkable 
papers of Mari\~no \cite{M2} and  
Bouchard, Klemm, Mari\~no, and Pasquetti
\cite{BKMP},
these authors have 
developed an inductive mechanism to calculate
a variety of quantum invariants and solutions to 
enumerative geometry questions, based on the
fundamental work of
Eynard and Orantin
\cite{EO1, EO2, EO3}.
The validity of their method, known as the
\emph{remodeled B-model} based on the
\emph{topological recursion} of
Eynard-Orantin, has been established for many
different enumerative geometry problems,
such as  single Hurwitz numbers
(\cite{BEMS, EMS, MZ}, based on the conjecture
of Bouchard and Mari\~no \cite{BM}), 
open Gromov-Witten invariants of 
smooth 
toric Calabi-Yau threefolds (\cite{EO3,
Zhou2}, based on the \emph{remodeling 
conjecture} of Mari\~no \cite{M2}
and Bouchard, Klemm, Mari\~no, 
Pasquetti \cite{BKMP}), and
the number of lattice points on $\cM_{g,n}$
and its symplectic and Euclidean volumes
(\cite{CMS, DMSS, MP2012}, based on
\cite{N1,N2}). It is expected that 
double Hurwitz numbers, 
stationary Gromov-Witten invariants of
$\bP^1$ \cite{NS1, NS2}, certain Donaldson-Thomas
invariants,
and many other quantum invariants would also
fall into this category.

Unlike the familiar \emph{Topological Recursion 
Relations} (TRR) of the Gromov-Witten theory, the
Eynard-Orantin recursion is a 
B-model formula
\cite{BKMP,M2}. The significant feature of the
formula is its universality: independent of
the A-model problem, the B-model  
recursion takes always  the same form. 
The input data of this 
B-model consist of a 
holomorphic Lagrangian immersion 
\begin{equation*}
\begin{CD}
\iota :\Sigma @>>> T^*\bC
\\
&&@VV{\pi}V
\\
&&\bC
\end{CD}
\end{equation*}
of an open Riemann surface $\Sigma$ (called 
a \emph{spectral curve} of the Eynard-Orantin
recursion) into
the cotangent bundle $T^*\bC$ equipped with 
the tautological $1$-form $\eta$, and the
symmetric second derivative  of the logarithm of
\emph{Riemann's prime form} 
\cite{Fay, MumfordTataII} defined 
on $\Sigma\times \Sigma$. The procedure
of Eynard-Orantin \cite{EO1} then defines,
inductively on $2g-2+n$,
a meromorphic symmetric differential $n$-form
$W_{g,n}$ on $\Sigma^n$ for every $g\ge 0$
and $n\ge 1$ subject to $2g-2+n> 0$. 
A particular choice of the Lagrangian immersion
gives a different $W_{g,n}$, which then gives 
 a generating function of the solution to
a different enumerative geometry problem.

Thus the real question is how to find the
right Lagrangian immersion from a given
A-model. 

Suppose we have a solution to an
enumerative geometry problem (an A-model
problem). Then we
know a generating function of these quantities.
In \cite{DMSS} we proposed an idea of 
identifying the spectral curve $\Sigma$, which 
states 
that the spectral curve is the 
\emph{Laplace transform}
 of the disc amplitude of
the A-model problem. Here the Laplace transform
plays the role of  mirror symmetry. Thus we obtain 
a Riemann surface $\Sigma$. 
Still we do not see the
aspect of the Lagrangian immersion in this manner.

Every curve in $T^*\bC$ is trivially a 
Lagrangian. But not every Lagrangian is 
realized as the mirror dual to an A-model 
problem. The obstruction seems to
lie in the K-group
$K_2\big(\bC(\Sigma)\big)\tensor \bQ$. 
When this obstruction vanishes, 
we call $\Sigma$ a \emph{$K_2$-Lagrangian}, 
following Kontsevich's terminology. 
For a $K_2$-Lagrangian $\Sigma$, 
we expect the existence of a \emph{holonomic
system} that characterizes the \emph{partition
function} of the B-model theory, and at the same time,
the characteristic variety of this holonomic system
recovers the spectral curve $\Sigma$ as the
Lagrangian immersion.
A generator of this holonomic system is
called a \emph{quantum Riemann surface}
\cite{ADKMV,DHS, DHSV},
because it is a differential operator whose 
total symbol is the spectral curve
realized as a Lagrangian immersion \cite{DV2007}.
It is the work of Gukov and Su\l kowski \cite{GS}
that suggested the obstruction to the
existence of the holonomic system with 
algebraic K-theory
as an element of  $K_2$.

Another mysterious link of the 
Eynard-Orantin theory is its relation to 
integrable systems of the KP/KdV type
\cite{BE1, EO1}. We note that the
partition function of the B-model  is always
the \emph{principal specialization} of 
a $\tau$-function of the KP equations 
for all the examples we know by now.

The purpose of the present paper is to give 
 the simplest non-trivial
  mathematical examples of the
theory that exhibit these key features
mentioned above.
With these examples one can calculate 
all quantities involved, give proofs of the statements
predicted in physics, and examine  
the mathematical structure of the theory.
Our examples are based on enumeration 
problems of branched coverings of 
$\bP^1$.

The idea of \emph{homological mirror symmetry}
of Kontsevich \cite{K1994} allows us
to talk about the mirror symmetry 
\emph{without} underlying spaces,
because the formulation is based on 
the derived equivalence of
categories. Therefore, we can consider
the mirror dual B-models corresponding to 
the enumeration problems of branched 
coverings on the A-model side. 
At the same time, being the \emph{derived}
equivalence, the homological mirror 
symmetry does not tell us any direct relations 
between the quantum invariants on the 
A-model side and the complex geometry 
on the B-model side. This is exactly where
Mari\~no's idea of \emph{remodeling} 
B-model comes in
for rescue. The remodeled B-model of
\cite{BKMP, M2} is not a derived category
of coherent sheaves. Although its applicability 
is restricted to the case when there is a 
family of curves $\Sigma$ that exhibits the geometry 
of the B-model, the new idea is to construct a
network of inter-related 
 differential forms on the symmetric powers of
 $\Sigma$
 via the Eynard-Orantin
 recursion, and to understand this infinite system
 as \emph{the B-model}. The advantage of this idea
 is that we can relate the solution of the
 geometric enumeration problem on the
 A-model side and the symmetric differential
 forms on the B-model side
 through the \emph{Laplace transform}.
 In this sense we consider 
 the Laplace transform as a mirror symmetry.

The  first example 
we consider in this paper is
the generalized Catalan numbers of an arbitrary
genus. This is equivalent to the ``$c=1$ model''
of \cite[Section~5]{GS}. In terms of enumeration,
we are counting the number of algebraic curves
defined over $\overline{\bQ}$ in a systematic
way by using the dual graph of Grothendieck's
\emph{dessins d'enfants} \cite{Belyi,SL}.

Let
$D_{g,n}(\mu_1,\dots,\mu_n)$ denote the
automorphism-weighted count of the
number of connected cellular graphs 
on a closed oriented surface of 
genus $g$
(i.e., the $1$-skeleton of  cell-decompositions
of the surface),
  with $n$ labeled vertices
of degrees $(\mu_1,\dots,\mu_n)$. The letter
D stands for `dessin.'
The \emph{generalized Catalan numbers} of
type $(g,n)$ are defined by
$$
C_{g,n}(\mu_1,\dots,\mu_n) =
\mu_1\cdots\mu_n D_{g,n}(\mu_1,\dots,\mu_n).
$$
While $D_{g,n}(\vec{\mu})$ is a rational number
due to the graph automorphisms, the generalized
Catalan number
$C_{g,n}(\vec{\mu})$ is always a
non-negative  integer. It gives the 
dimension of a linear skein space. In particular,
the $(g,n)=(0,1)$ case recovers the original
Catalan numbers:
$$
C_{0,1}(2m) = C_m = \frac{1}{m+1}
\binom{2m}{m} = \dim \End_{\cU_q(s\ell_2)}
(T^{\tensor m}\bC^2).
$$
As explained in \cite{DMSS}, the 
mirror dual to the Catalan numbers $C_m$ is
the plane  curve $\Sigma$ defined by 
\begin{equation}
\label{eq:CLag}
\begin{cases}
x=z+\frac{1}{z}\\
y=-z,
\end{cases}
\end{equation}
where 
\begin{equation}
\label{eq:z(x)}
z(x) = \sum_{m=0}^\infty C_m\frac{1}{x^{2m+1}}.
\end{equation}
Note that (\ref{eq:CLag}) also gives 
a Lagrangian immersion
$\Sigma\lrar T^*\bC$. 
Let us introduce the \emph{free energies}
by
\begin{equation}
\label{eq:Cfree}
F^C_{g,n}\big(z(x_1),\dots,z(x_n)\big)
=\sum_{\vec{\mu}\in\bZ_+ ^n}
D_{g,n}(\vec{\mu}) e^{-(w_1\mu_1+\dots+
w_n\mu_n)}
=
\sum_{\vec{\mu}\in\bZ_+ ^n}
D_{g,n}(\vec{\mu}) \prod_{i=1}^n
\frac{1}{x_i ^{\mu_i}}
\end{equation}
as the Laplace transform of the number of dessins, 
where the coordinates are related by
(\ref{eq:z(x)}) and 
$x_i = e^{w_i}.$
The free energy
$F^C_{g,n}(z_1,\dots,z_n)$ is a 
symmetric function in  $n$-variables, 
and its \emph{principal specialization}
is defined by
$F^C_{g,n}(z,\dots,z)$. 
Now let 
$$
W_{g,n}^C(z_1,\dots,z_n) = d_1\cdots d_n
F_{g,n}(z_1,\dots,z_n).
$$
It is proved in \cite{DMSS} that $W_{g,n}^C$'s
satisfy the Eynard-Orantin topological 
recursion.

The \emph{Catalan partition 
function} is  given by the formula of
\cite{EO1}:
\begin{equation}
\label{eq:Cpf}
Z^C(z,\hbar)=\exp
\left(
\sum_{g=0}^\infty \sum_{n=1}^\infty
\frac{1}{n!}\; \hbar^{2g-2+n}
F^C_{g,n}(z,z,\dots,z)
\right).
\end{equation}
In this paper we prove

\begin{thm}
\label{thm:C}
The Catalan partition function satisfies the 
Schr\"odinger equation
\begin{equation}
\label{eq:Csch}
\left(
\hbar^2 \frac{d^2}{dx^2}+
\hbar x\frac{d}{dx}+1
\right) \; Z^C\big(z(x),\hbar\big)=0.
\end{equation}
The characteristic variety of this ordinary 
differential operator, $y^2+xy +1=0$
for every fixed choice of $\hbar$,  is 
exactly the 
Lagrangian immersion \rm{(\ref{eq:CLag})},
where we identify the $xy$-plane as the 
cotangent bundle $T^*\bC$ with the fiber coordinate
$y=\hbar \frac{d}{dx}$.
\end{thm}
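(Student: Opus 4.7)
The plan is to first handle the characteristic-variety statement by elimination, then reduce the Schr\"odinger equation to a WKB hierarchy, and finally derive that hierarchy from the Eynard-Orantin recursion for $W^{C}_{g,n}$ established in \cite{DMSS}. The characteristic-variety claim is immediate: with $y=\hbar\frac{d}{dx}$ the total symbol of $\hbar^{2}\frac{d^{2}}{dx^{2}}+\hbar x\frac{d}{dx}+1$ is $y^{2}+xy+1$, and substituting $y=-z$, $x=z+1/z$ from \eqref{eq:CLag} gives $z^{2}-z^{2}-1+1=0$, so $\iota(\Sigma)=\{y^{2}+xy+1=0\}$.

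For the differential equation, write $\log Z^{C}(z(x),\hbar)=\sum_{m\ge -1}\hbar^{m}S_{m}(x)$, where
\begin{equation*}
S_{m}(x)\;=\;\sum_{2g-2+n=m}\frac{1}{n!}\,F^{C}_{g,n}(x,\ldots,x),
\end{equation*}
so that $S_{-1}=F^{C}_{0,1}$, $S_{0}=\tfrac{1}{2}F^{C}_{0,2}(x,x)$, and so on. Using $(Z^{C})'/Z^{C}=S'$ and $(Z^{C})''/Z^{C}=S''+(S')^{2}$, equation \eqref{eq:Csch} is equivalent to $\hbar^{2}\bigl(S''(x)+(S'(x))^{2}\bigr)+\hbar\,x\,S'(x)+1=0$. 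Matching the coefficient of $\hbar^{\ell}$ for every $\ell\ge 0$ produces the WKB hierarchy
\begin{equation*}
(*)_{\ell}\;:\;\;S_{\ell-2}''(x)\;+\sum_{\substack{i+j=\ell-2\\ i,j\ge -1}}S_{i}'(x)\,S_{j}'(x)\;+\;x\,S_{\ell-1}'(x)\;+\;\delta_{\ell,0}\;=\;0,
\end{equation*}
with the convention $S_{m}=0$ for $m<-1$. Proving the theorem amounts to verifying $(*)_{\ell}$ for all $\ell\ge 0$.

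The base case $(*)_{0}$ reads $(S_{-1}')^{2}+xS_{-1}'+1=0$. From \eqref{eq:Cfree} and the identity $2m\,D_{0,1}(2m)=C_{0,1}(2m)=C_{m}$, one computes $S_{-1}'(x)=-\sum_{m\ge 0}C_{m}x^{-2m-1}=-z(x)$ by \eqref{eq:z(x)}, and then $(*)_{0}$ becomes $z^{2}-xz+1=0$, which is the defining relation $x=z+1/z$ multiplied by $z$. For $\ell\ge 1$, I would derive $(*)_{\ell}$ by starting from the Eynard-Orantin recursion for $W^{C}_{g,n}=d_{1}\cdots d_{n}F^{C}_{g,n}$, integrating in the differential variables back to $F^{C}_{g,n}$, setting all arguments to a common $x$, and summing over $(g,n)$ with $2g-2+n=\ell$ weighted by $1/n!$. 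The splitting term of the EO recursion should assemble into the quadratic sum $\sum S_{i}'S_{j}'$, the handle term $W^{C}_{g-1,n+1}$ with its two marked legs pushed to the diagonal should yield $S_{\ell-2}''$, and the residues against the EO kernel at the two ramification points $z=\pm 1$ of $\pi$ should combine into the transport coefficient $xS_{\ell-1}'$.

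The main obstacle I foresee is precisely this last translation: converting a multivariate residue identity on $\Sigma^{n}$ into a single-variable ODE after full diagonal specialization and summation. The factors $1/n!$, the splittings of index sets into $I\sqcup J$, and the handle gluing must recombine exactly into $(*)_{\ell}$ with no spurious terms and with the correct prefactor $x$ multiplying $S_{\ell-1}'$. The problem is tractable because $\Sigma\simeq\bP^{1}$ admits $z$ as a global uniformizer, so every $W^{C}_{g,n}$ is a rational differential in the $z$-variables and every residue computation is purely algebraic; nevertheless, organizing the bookkeeping so the recursion matches $(*)_{\ell}$ term-by-term is where the bulk of the technical work lies, and I expect no deeper input beyond the recursion of \cite{DMSS} and the explicit form \eqref{eq:CLag} to be needed.
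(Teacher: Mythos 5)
Your reduction of the theorem to the WKB hierarchy is correct and is exactly how the paper proceeds: the characteristic-variety claim is the trivial symbol computation, the substitution $Z=e^{S}$ with $S=\sum\hbar^{m}S_{m}$ turns \eqref{eq:Csch} into $\hbar^{2}(S''+(S')^{2})+\hbar xS'+1=0$, and the $\hbar^{0}$ coefficient is $z^{2}-xz+1=0$, i.e.\ the spectral curve. Your $(*)_{\ell}$ is the correct coefficient extraction (modulo an index shift: the paper's $S_{m}$ is your $S_{m-1}$), and your base case matches (\ref{eq:S0'})--(\ref{eq:const terms}).

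The genuine gap is the inductive step, which you describe only as a plan. The paper does \emph{not} integrate the residue form of the Eynard--Orantin recursion directly; it interposes a separate result, Theorem~\ref{thm:FC recursion} (quoted from \cite{MZhou}), giving a \emph{differential} recursion for the primitives $F^{C}_{g,n}$ themselves, and then carries out the principal specialization of that recursion in Appendix~\ref{app:Catalan}. Obtaining such an integrated recursion from the residue formula is itself a substantial step (it is the content of \cite{MZhou}, and requires knowing that each $F^{C}_{g,n}$ vanishes at $t_{i}=-1$, i.e.\ $x_{i}=\infty$, to fix the integration constants), and you have not supplied it. Moreover, your anticipated term-by-term correspondence is not accurate in detail: after full diagonal specialization the second derivative $S_{m}''$ does not come from the handle term alone but from the sum of the diagonal second derivative $\partial_{u_{1}}\partial_{u_{2}}F^{C}_{g-1,n+1}$ \emph{and} the single-variable second derivative $\partial_{t_{1}}^{2}F^{C}_{g,n-1}$ arising from the pairing terms, recombined via the chain-rule identity (\ref{eq:dfdt}) for symmetric functions; the coefficient of $S_{\ell-1}'$ is not produced directly as $x$ by the residues at $z=\pm1$, but arises from moving the $i=-1$ and $i=0$ cross-terms $2S_{-1}'S_{\ell-1}'$ out of the quadratic sum and using $x+2S_{-1}'=\frac{1}{z}-z\neq0$ to solve for $S_{\ell-1}'$; and the unstable contribution $F_{0,2}^{C}(t,t)=-\log(1-z^{2})$ must be computed explicitly (Proposition~\ref{prop:F01 F02}) and inserted as the coefficient $2S_{1}'$ of the first-derivative term. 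Until the integrated recursion is established and this bookkeeping is actually performed, the proof is incomplete.
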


\begin{rem}
A purely geometric reason of
 our interest in  the function appearing 
 as the principal specialization
$F^C_{g,n}(z,\dots,z)$ is that, in
 the stable range $2g-2+n>0$,  it
is a \emph{polynomial} in 
\begin{equation}
\label{eq:s}
s=\frac{z^2}{z^2-1}
\end{equation}
of degree 
$6g-6+3n$. It is indeed the virtual 
Poincar\'e polynomial of $\cM_{g,n}\times
\bR_+^n$ \cite{MP2012}, 
and its special value at $s=1$ 
gives the Euler characteristic
$(-1)^n \rchi(\cM_{g,n})$ of the moduli space
$\cM_{g,n}$ 
of smooth $n$-pointed curves of genus $g$. 
Thus $Z^C(z,\hbar)$ is the exponential 
generating function of the virtual 
Poincar\'e polynomials of $\cM_{g,n}\times
\bR_+^n$.
\end{rem}

As such, the generating function $Z^C(z,\hbar)$ is
also expressible in terms of a Hermitian matrix
integral 
\begin{equation}
\label{eq:matrix}
Z^C(z,\hbar) = \int_{\cH_{N\times N}}
\det(1-\sqrt{s}X)^N e^{-\frac{N}{2}\trace (X^2)}
dX
\end{equation}
with the identification (\ref{eq:s}) and 
$\hbar = 1/N$. Here $dX$ is the normalized
Lebesgue measure on the space of $N\times N$
Hermitian matrices $\cH_{N\times N}$.
It is a well-known fact that 
Eq.(\ref{eq:matrix}) is  the principal 
specialization of a KP $\tau$-function 
(see for example, \cite{M1994}).

Another example we consider in this
paper is based on 
single Hurwitz numbers. As a counting problem
it is easier to state than the previous example,
but the Lagrangian immersion requires a
transcendental function, and hence the
resulting Schr\"odinger equation exhibits a
rather different nature. 

Let $H_{g,n}(\mu_1\dots,\mu_n)$ be the
automorphism-weighted count of the 
number of topological types of  
Hurwitz covers $f:C\lrar \bP^1$ of
a connected non-singular algebraic curve 
$C$ of genus $g$. A holomorphic map $f$
is a \emph{Hurwitz cover} of 
\emph{profile} $(\mu_1\dots,\mu_n)$
if it has $n$ labeled 
poles of orders $(\mu_1\dots,\mu_n)$
and is simply ramified otherwise. Introduce 
the Laplace transform of single Hurwitz 
numbers by 
\begin{equation}
\label{eq:FgnH intro}
F_{g,n}^H\big(t(w_1),\dots,t(w_n)\big)
= \sum_{\vec{\mu}\in\bZ_+ ^n}
H_{g,n}(\vec{\mu}) e^{-(w_1\mu_1+\cdots
w_n\mu_n)},
\end{equation}
where 
$$
t(w) = \sum_{m=0}^\infty \frac{m^m}{m!}e^{-mw}
$$
is the tree-function. Here again $F_{g,n}^H(t_1,
\dots,t_n)$ is a polynomial of degree $6g-6+3n$
if $2g-2+n>0$
\cite{MZ}. Bouchard and Mari\~no
have conjectured \cite{BM} that 
$$
W_{g,n}^H(t_1,\dots,t_n) = d_1\cdots d_n
F_{g,n}^H(t_1,\dots,t_n)
$$
satisfy the Eynard-Orantin topological recursion,
with respect to the Lagrangian immersion 
\begin{equation}
\label{eq:Lambert intro}
\begin{cases}
x = e^{-w} = z e^{-z}\in \bC^*\\
y = z\in \bC,
\end{cases}
\end{equation}
where 
$$
z = \frac{t-1}{t}
$$
and we use $\eta = y\frac{dx}{x}$ as the tautological
$1$-form on $T^*\bC^*$. 
The Bouchard-Mari\~no
conjecture was proved in 
\cite{BEMS, EMS, MZ}.

Now we define the \emph{Hurwitz partition
function}
\begin{equation}
\label{eq:ZH intro}
Z^H(t,\hbar)=
\exp\left(\sum_{g=0}^\infty \sum_{n=1}^\infty
\frac{1}{n!}\;\hbar^{2g-2+n}F_{g,n}^H(t,\dots,t)
\right).
\end{equation}
Then we have the following

\begin{thm}
The Hurwitz partition function satisfy 
 two equations:
\begin{align}
\label{eq:Sch H1 intro}
&\left[\half \hbar \frac{\partial^2}{\partial w^2}
+\left(1+ \half{\hbar}\right)
\frac{\partial}{\partial w}
-\hbar \frac{\partial}{\partial \hbar}
\right] Z^H\big(t(w),\hbar\big) = 0,
\\
\label{eq:Sch H2 intro}
&
\left(
\hbar \frac{d}{dw}+e^{-w}e^{-\hbar \frac{d}{dw}}
\right)Z^H\big(t(w),\hbar\big) = 0.
\end{align}
Moreover, each of the two equations recover
the Lagrangian immersion 
{\rm{(\ref{eq:Lambert intro})}} from the 
asymptotic analysis at $\hbar \sim 0$. 
And if we view $\hbar$ as a fixed constant in 
{\rm{(\ref{eq:Sch H2 intro})}}, then its
total symbol is the Lagrangian immersion 
{\rm{(\ref{eq:Lambert intro})}} with the 
identification $z = -\hbar \frac{d}{dw}$.
\end{thm}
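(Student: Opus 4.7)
The plan is to deduce both Schr\"odinger equations from the combinatorial structure of single Hurwitz numbers, and then to extract the Lambert curve from the semiclassical limit.

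First, for equation~\eqref{eq:Sch H1 intro}, I would start from the classical cut-and-join equation for the full Hurwitz generating function,
\begin{equation*}
\hbar\,\partial_\hbar Z_{\mathrm{full}}(\vec p,\hbar) = \tfrac{1}{2}\sum_{i,j\ge 1}\Bigl[(i+j)\,p_i p_j\,\partial_{p_{i+j}} + ij\,p_{i+j}\,\partial_{p_i}\partial_{p_j}\Bigr]Z_{\mathrm{full}}(\vec p,\hbar),
\end{equation*}
where $Z_{\mathrm{full}}$ is the exponential generating function in the power-sum variables $p_i$. Performing the principal specialization $p_i\mapsto e^{-iw}$ collapses the degree operator $\sum_k k\,p_k\,\partial_{p_k}$ into $-\partial_w$, and translates the quadratic pieces in $\partial_p$ into operators built from $\partial_w$ and $\partial_w^2$. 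A careful bookkeeping of the multinomial factors then repackages the specialized equation as an identity in $\hbar$, $\partial_w$, and $\partial_\hbar$ acting on $Z^H(t(w),\hbar)$, matching exactly the left-hand side of \eqref{eq:Sch H1 intro}.

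Second, for equation~\eqref{eq:Sch H2 intro}, the cleanest approach uses Okounkov's realization of $Z^H$ as the principal specialization of a KP $\tau$-function, presented as a vacuum expectation value in the infinite wedge representation. I would rewrite the target equation as the differential-difference identity
\begin{equation*}
\hbar\,\partial_w Z^H(w,\hbar) + e^{-w}\,Z^H(w-\hbar,\hbar) = 0,
\end{equation*}
divide by $Z^H(w,\hbar)$, and interpret the ratio $Z^H(w-\hbar,\hbar)/Z^H(w,\hbar)$ as $\exp\bigl(\log Z^H(w-\hbar)-\log Z^H(w)\bigr)$. Expanding order-by-order in $\hbar$ converts the equation into an infinite system of identities among the principally specialized $F_{g,n}^H$, which can be verified using the Eynard--Orantin topological recursion for the Lambert spectral curve---already proved in \cite{BEMS, EMS, MZ}---combined with the functional identity $x = ze^{-z}$ and the relation $z = (t-1)/t$.

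For the asymptotic analysis, expand $\log Z^H(w,\hbar) = \hbar^{-1}S_{-1}(w) + S_0(w) + \hbar S_1(w) + \cdots$, with $S_{-1}(w) = F_{0,1}^H(t(w))$. Substituting into \eqref{eq:Sch H2 intro} and collecting the leading $\hbar^0$ terms yields $S_{-1}'(w) + e^{-w}e^{-S_{-1}'(w)} = 0$, which under $x = e^{-w}$ and $z = -S_{-1}'(w)$ becomes $x = ze^{-z}$, exactly \eqref{eq:Lambert intro}. A parallel leading-order computation for \eqref{eq:Sch H1 intro} (using $\hbar\,\partial_\hbar\,\hbar^{-1}S_{-1} = -S_{-1}$) gives a Hamilton--Jacobi equation whose characteristic variety is again the Lambert curve after changing between the $t$- and $z$-coordinates. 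For the total-symbol statement, treating $\hbar$ as a fixed constant in \eqref{eq:Sch H2 intro} and replacing $-\hbar\,d/dw$ by a momentum coordinate $z$ turns the operator into $-z + e^{-w}e^{z}$, whose zero locus is $e^{-w} = ze^{-z}$---which is \eqref{eq:Lambert intro}.

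The main obstacle is equation~\eqref{eq:Sch H2 intro} itself. In contrast to Theorem~\ref{thm:C}, where the quantum curve is a second-order ODE, quantization of the transcendental Lambert curve inevitably produces an infinite-order shift operator $e^{-\hbar\,d/dw}$ that couples all $F_{g,n}^H$ in an intricate way. The cleanest route is probably through the Okounkov formula, since the Cauchy identity for Schur functions under principal specialization naturally manufactures the shift operator and turns the analytic identity into a combinatorial identity among characters of symmetric groups; inducting directly on $2g-2+n$ via the topological recursion appears considerably more laborious.
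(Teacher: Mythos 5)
Your proposal assembles the right raw materials but leaves genuine gaps at the two decisive steps. For \eqref{eq:Sch H1 intro}, you propose to specialize the cut-and-join equation \eqref{eq:heat CAJ} directly in the power-sum variables via $p_i\mapsto e^{-iw}$. This does not work as stated: the second-order part $\sum ij\,p_{i+j}\partial_{p_i}\partial_{p_j}$ is not tangent to the one-dimensional specialization locus, so its action on $e^{\mathbf{H}}$ is not determined by the restricted function $Z^H$, and no ``bookkeeping of multinomial factors'' can convert it into an operator in $\partial_w$ alone without further input. (Your specialization also omits the factor $\hbar^{-i}$ of \eqref{eq:pj=xjh}, which is precisely what generates the $\hbar\partial_\hbar$ and $\hbar^{-1}\partial_w$ terms.) The paper avoids this problem entirely: it works with the $n$-variable Laplace transforms $F^H_{g,n}(t_1,\dots,t_n)$ and the polynomial recursion \eqref{eq:FgnH recursion} (the Laplace transform of cut-and-join from \cite{MZ}), restricts to the diagonal $t_1=\cdots=t_n=t$ using the elementary symmetric-function lemma \eqref{eq:dfdt}, carefully separates the unstable $(0,1)$ and $(0,2)$ contributions to get \eqref{eq:SmH equation}, and only then resums in $\hbar$. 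That diagonal restriction of an honest $n$-variable identity is the step your sketch is missing.

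For \eqref{eq:Sch H2 intro}, you correctly name the Schur-function route in your closing paragraph, but you do not execute its key computation, and the fallback you describe in detail (expanding $Z^H(w-\hbar)/Z^H(w)$ order by order in $\hbar$ and verifying infinitely many identities among the $F^H_{g,n}$ via the topological recursion) is not a proof --- it is exactly the intractable reorganization the Schur expansion is designed to bypass. The paper's argument is that the \emph{double} principal specialization in the Cauchy identity kills every Schur function except those of one-row partitions $\mu=(m)$, for which $\bp_2[(m)]=m(m-1)$, yielding the closed formula \eqref{eq:ZH in x}, $Z^H=\sum_m \frac{1}{m!}e^{\half m(m-1)\hbar}(x/\hbar)^m$; the differential-difference equation is then a two-line verification on this explicit series. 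Without that collapse to one-row partitions you have no handle on the shift operator. Your semiclassical analysis recovering $x=ze^{-z}$ from \eqref{eq:Sch H2 intro}, and the total-symbol statement with $z=-\hbar\,d/dw$, do match the paper's remark and are fine.
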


\begin{rem}
The second order equation
(\ref{eq:Sch H1 intro}) is a consequence
of the polynomial recursion of \cite{MZ}.
This situation is exactly the same as
Theorem~\ref{thm:C}.
The differential-difference equation
(\ref{eq:Sch H2 intro}), or
a \emph{delay} differential equation,
 follows from the
principal specialization of the 
KP $\tau$-function that
gives another generating function of
single Hurwitz numbers 
\cite{Kazarian, O}. We remark that
(\ref{eq:Sch H2 intro}) is also derived in
\cite{Zhou4}.
The point of view of differential-difference
equation is further developed in
\cite{MSS} for the case of double Hurwitz
numbers and $r$-spin structures, where we
generalize a result of \cite{Zhou4}.
\end{rem}

\begin{rem}
Define two operators by
\begin{align}
\label{eq:P}
P&=
\hbar \frac{d}{dw}+e^{-w}e^{-\hbar \frac{d}{dw}}
\qquad {\text{and}}
\\
Q&=
\half \hbar \frac{\partial^2}{\partial w^2}
+\left(1+ \half{\hbar}\right)
\frac{\partial}{\partial w}
-\hbar \frac{\partial}{\partial \hbar}.
\label{eq:Q}
\end{align}
Then it is noted in \cite{LMS} that
\begin{equation}
\label{eq:PQcommutator}
[P,Q]=P.
\end{equation}
Thus the \emph{heat equation} (\ref{eq:Sch H1 intro})
preserves the space of solutions of the
\emph{Schr\"odinger equation} 
(\ref{eq:Sch H2 intro}). In this sense, 
(\ref{eq:Sch H2 intro}) is holonomic for every fixed 
$\hbar$. The analysis of these equations is further
investigated in \cite{LMS}.
\end{rem}

The existence of a holonomic
system is particularly appealing when we consider
the knot A-polynomial as the defining
equation of a Lagrangian immersion, 
in connection to the AJ conjecture
\cite{Gar, GarLe,Gukov}.
One can ask:

\begin{quest}
Let $K$ be a knot in $S^3$ and $A_K$  its 
A-polynomial \cite{CCGLS}. Is there a concrete
formula for the quantum knot invariants 
of $K$, such as the colored Jones polynomials,
in terms of $A_K$?
\end{quest}

The B-model developed in \cite{DFM}, 
\cite{GS}, and more recently in \cite{BE2},
clearly shows  that the answer is yes, and
it should be given by the 
Eynard-Orantin formalism. 
Although our examples in the
present paper are not related to any knots,
they suggest the existence of  a corresponding
A-model. An interesting theory
of generalized A-polynomials and
quantum knot invariants from the point of view
of mirror symmetry is recently presented in
 Aganagic and Vafa \cite{AV}. 
We also remark that there are
further developments in this 
direction \cite{BEM, FGS1, FGS2}.

The paper is organized as follows.
In Section~\ref{sect:EO}, we give 
the definition of the Eynard-Orantin topological
recursion. We emphasize the aspect of
Lagrangian
immersion in our presentation.
In Section~\ref{sect:Catalan} 
we review the generalized Catalan numbers
of \cite{DMSS}. Then in Section~\ref{sect:CSch}
we derive the Schr\"odinger equation 
for the Catalan partition function.
The equation for the Hurwitz partition function
is given in Section~\ref{sect:Hurwitz}. 
Finally, in Section~\ref{sect:Schur}
we give the proof of (\ref{eq:Sch H2 intro}) 
using the Schur function expansion of the
Hurwitz generating function and its 
principal specialization.

\section{The Eynard-Orantin  topological recursion}
\label{sect:EO}

We adopt the following definition 
for the topological recursion of Eynard-Orantin
 \cite{EO1}. Our emphasis, which is different
 from the original, is the point of view of 
 the Lagrangian immersion.

\begin{Def}
\label{def:spectral}
The \textbf{spectral curve} $(\Sigma,\iota)$
consists of 
 an open Riemann surface $\Sigma$ and 
 a \emph{Lagrangian immersion}
\begin{equation}
\label{eq:LI}
\begin{CD}
\iota :\Sigma @>>> T^*\bC
\\
&&@VV{\pi}V
\\
&&\bC
\end{CD}
\end{equation}
with respect to the 
canonical holomorphic 
symplectic structure $\omega = -d\eta$ on 
$T^*\bC$,
where $\eta$ is the tautological $1$-form on 
the cotangent bundle $\pi:T^*\bC\lrar \bC$.
Recall that $p\in\Sigma$ is a \emph{Lagrangian
singularity} if $d(\pi\circ\iota)=0$ at $p$, and that
$\pi(\iota(p))\in \bC$ is a \emph{caustic} 
of the Lagrangian
immersion. We assume that the projection $\pi$
restricted to the Lagrangian immersion is 
locally simply ramified around each 
Lagrangian singularity.
We denote by $R=\{p_1,\dots,p_r\}\subset 
\Sigma$ 
 the set of Lagrangian singularities, and by 
$$
U = \sqcup_{j=1} ^r U_j
$$
the disjoint union of small neighborhood
 $U_j$ around $p_j$ such that 
 $\pi:U_j\lrar \pi(U_j)\subset \bC$ is
 a double-sheeted covering ramified only at $p_j$.
 We denote by $s_j(z)$ the local 
 Galois conjugate
of $z\in U_j$.
\end{Def}

Another key ingredient of the Eynard-Orantin
theory is the \textbf{normalized
fundamental differential
of the second kind} $B_\Sigma(z_1,z_2)$
\cite[Page 20]{Fay}, \cite[Page 3.213]{MumfordTataII}, which is a symmetric
differential $2$-form on $\Sigma\times \Sigma$
with second-order poles along the diagonal. 
To define it, let us recall a few basic facts about
algebraic curves. Let $C$ be a nonsingular complete
algebraic curve over $\bC$. 
We are considering a nonsingular compactification
$C=\overline{\Sigma}$ of 
the Riemann surface $\Sigma$. 
We identify the Jacobian
variety of $C$ as 
$Jac(C) = Pic^0(C)$, which is isomorphic to
$Pic^{g-1}(C)$. The \emph{theta divisor}
 $\Theta$ of 
$Pic^{g-1}(C)$ is defined by 
$$
\Theta = \{L\in Pic^{g-1}(C)\;|\; \dim H^1(C,L)>0\}.
$$
We use the same notation for the translate divisor
on $Jac(C)$, also called the theta divisor. 
Now consider the diagram
\begin{equation*}
		\xymatrix{& Jac(C)
		\\
		&C\times C\ar[dl]_{pr_1} \ar[u]_{\delta}
		\ar[dr]^{pr_2}
		\\
		C& & 	C ,
		}
\end{equation*} 
where $pr_j$ denotes the projection to the
$j$-th components,
and 
$$
\delta:C\times C \owns (p,q)\longmapsto p-q\in
Jac(C).
$$
Then the \emph{prime form} $E_C(z_1,z_2)$ 
\cite[Page 16]{Fay} is
defined as a holomorphic section 
$$
E_C(p,q) \in H^0\left(C\times C,
pr_1^* K_C^{-\half}\tensor 
pr_2^* K_C^{-\half}\tensor
\delta^*(\Theta)
\right),
$$
where $K_C$ is the canonical line bundle of 
$C$ and we choose  Riemann's spin structure
(or the Szeg\"o kernel) $K_C^\half$ 
(see \cite[Theorem~1.1]{Fay}).
We do not need the precise definition of the 
prime form here, but its characteristic properties
are important:
\begin{enumerate}
\item $E_C(p,q)$ vanishes only along the
diagonal $\Delta\subset C\times C$, and has
simple zeros along $\Delta$.
\item Let $z$ be a local coordinate on 
$C$ such that $dz(p)$ gives the local trivialization of
$K_C$ around $p$. When $q$ is near at $p$,
 $\delta^*(\Theta)$ is also trivialized 
around $(p,q)\in C\times C$, and we have a local
behavior
\begin{equation}
\label{eq:E}
E_C\big(z(p),z(q)\big) = 
\frac{z(p)-z(q)}{\sqrt{dz(p)}\cdot
\sqrt{dz(q)}}\left(1+O\big((z(p)-z(q))^2\big)
\right).
\end{equation}
\item $E_C\big(z(p),z(q)\big) 
= -E_C\big(z(q),z(p)\big) $.
\end{enumerate}

The fundamental $2$-form $B_C(p,q)$
is then defined by
\begin{equation}
\label{eq:B}
B_C(p,q) = d_1\tensor d_2 \log E_C(p,q)
\end{equation}
(aee \cite[Page 20]{Fay}, 
\cite[Page 3.213]{MumfordTataII}).
We note that $dz(p)$ appears in (\ref{eq:E}) just
as the indicator of our choice of the
local trivialization. 
With this local trivialization, the square
$$
E_C(p,q)^2 \in H^0\left(C\times C,
pr_1^* K_C^{-1}\tensor 
pr_2^* K_C^{-1}\tensor
\delta^*(\Theta)^{\tensor 2}
\right)
$$
behaves better because of
\begin{equation}
\label{eq:E2}
E_C\big(z(p),z(q)\big)^2 = 
\frac{\big(z(p)-z(q)\big)^2}{{dz(p)}\cdot
{dz(q)}}\left(1+O\big((z(p)-z(q))^2\big)
\right).
\end{equation}
We then see that
\begin{multline}
\label{eq:B local}
B_C\big(z(p),z(q)\big) =\half
d_1\tensor d_2 \log E^2\big(z(p),z(q)\big)
\\
=
\frac{dz(p)\cdot dz(q)}{\big(z(p)-z(q)\big)^2}
+O(1)\;dz(p)\cdot dz(q)
\\
\in H^0\left(C\times C,pr_1^*K_C\tensor
pr_2^* K_C\tensor \cO(2\Delta)
\right).
\end{multline}

\begin{Def}
\label{def:symmetric}
Let $D$ be a divisor of $\Sigma$. 
A \textbf{meromorphic symmetric differential
form} of degree $n$
with poles along $D$ is an element of 
the symmetric tensor product
$$
\Sym^n H^0 \left( \Sigma,K_\Sigma (*D)
\right).
$$
\end{Def}

\begin{Def}
\label{def:EO}
Meromorphic differential
forms 
$$
W_{g,n}(z_1,\dots,z_n)
\in \Sym^n H^0\left(
\Sigma,K_\Sigma(*R)\right)
$$
 for $g\ge 0$ and $n\ge 1$, subject to
$2g-2+n>0$, with poles along the
Lagrangian singularities of the 
Lagrangian immersion $\Sigma\lrar T^*\bC$,
are said to satisfy the \textbf{Eynard-Orantin
topological recursion}
 if they satisfy the recursion formula
 \begin{multline}
\label{eq:EO}
W_{g,n}(z_1,z_2,\dots,z_n)
= \frac{1}{2\pi i} 
\sum_{j=1} ^r 
\oint_{\gam_j} K_j(z,z_1)
\Bigg[
W_{g-1,n+1}\big(z,s_j(z),z_2,\dots,z_n\big)\\
+
\sum^{\text{No $(0,1)$ terns}} _
{\substack{g_1+g_2=g\\I\sqcup J=\{2,3,\dots,n\}}}
W_{g_1,|I|+1}(z,z_I) W_{g_2,|J|+1}(s_j(z),z_J)
\Bigg].
\end{multline}
Here the integration is taken with respect to
$z\in U_j$ along a positively oriented 
simple closed loop $\gam_j$
around $p_j$, and
$z_I = (z_i)_{i\in I}$
for a subset $I\subset \{1,2,\dots,n\}$.
In the summation, ``No $(0,1)$ terms'' means
the summand does not
contain the terms with $g_1=0$ and $I=\emptyset$ or
$g_2=0$ and $J=\emptyset$.
 The recursion kernels $K_j(z,z_1)$, $j=1, \dots, r$,
 are defined as follows. 
First we define $W_{0,1}$
and $W_{0,2}$.
\begin{equation}
\label{eq:W01}
W_{0,1}(z)=\iota^*\eta =ydx
\in H^0(\Sigma,K),
\end{equation}
where $x$ is a linear coordinate
of $\bC$ and $y$ is the dual coordinate 
of $T^*_0 \bC$ so that the Lagrangian immersion is
 given by 
$$
(x,y):\Sigma\owns t\longmapsto (x(z),y(z))\in 
T^*\bC.
$$
\begin{equation}
\label{eq:W02}
W_{0,2}(z_1,z_2) = 
B_\Sigma(z_1,z_2)
 -\pi^*
\frac{dx_1\cdot dx_2}
{(x_1-x_2)^2}. 
\end{equation}
We note that 
$W_{0,2}(z_1,z_2)$ is holomorphic
along the diagonal $z_1=z_2$.
The recursion kernel $K_j(z_1,z_2)
\in H^0
\left(U_j\times \Sigma,K_{U_j}^{-1}\tensor
K_\Sigma \right)$
for $z_1\in U_j$ and $z_2\in \Sigma$ is defined by
\begin{multline}
\label{eq:kernel}
K_j(z_1,z_2) =
\half\; \frac{1}
{W_{0,1}\big(s_j(z_1)\big)-W_{0,1}(z_1)}
\tensor \int_{z_1} ^{s_j(z_1)} W_{0,2}
(\;\cdot\;,z_2)
\\
=
\half\;
\frac{d_2\left(\log E_\Sigma\big(s_j(z_1), z_2\big)
-\log E_\Sigma(z_1, z_2)\right)}
{\left(y(s_j(z_1)\big)-y(z_1)\right)dx(z_1)}.
\end{multline}
The  kernel $K_j(z_1,z_2)$ is an algebraic operator
that multiplies $dz_2$ while contracts
$\partial/\partial z_1$.
\end{Def}

The topological 
recursion is the reduction of 
$2g-2+n$ by $1$, which is different
from the usual boundary degeneration formula
of $\Mbar_{g,n}$. As shown in 
Figure~\ref{fig:recursion},
the reduction corresponds to degeneration cycles
of codimension $1$ and $2$,
as in Arbarello-Cornalba-Griffiths
\cite[Chapter 17, Section 5, Page 589]{ACG}.
\begin{align*}
&\cM_{0.3}\times \Mbar_{g,n-1}
\lrar \Mbar_{g,n}, 
\qquad
\cM_{0.3}\times \Mbar_{g-1,n+1}
\lrar \Mbar_{g,n},
\\
&\cM_{0.3}\times 
\bigcup_{\substack{g_1+g_2=g\\
n_1+n_2=n-1}}
\Mbar_{g_1,n_1+1}
\times \Mbar_{g_2,n_2+1}
\lrar \Mbar_{g,n}
\end{align*}

\begin{figure}[htb]
\centerline{\epsfig{file=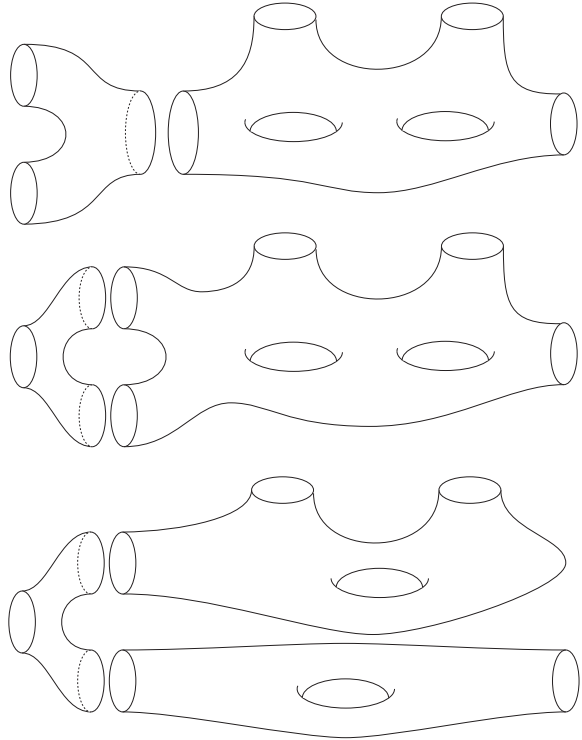, 
width=3in}}
\caption{The topological recursion 
and degeneration.}
\label{fig:recursion}
\end{figure}

The recursion starts from $W_{1,1}$ and
$W_{0,3}$. If we modify (\ref{eq:EO})
slightly, then these can also be calculated
from $W_{0,2}$ \cite{EO1}.

\begin{equation}
\label{eq:W11}
W_{1,1}(z_1) = 
 \frac{1}{2\pi i} 
\sum_{j=1} ^r 
\oint_{\gam_j} K_j(z,z_1) 
\left.\left[
W_{0,2}(u,v)+c\pi^* \frac{dx(u)\cdot dx(v)}
{(x(u)-x(v))^2}
\right]
\right|_{u=z, v=s_j(z)}.
\end{equation}
Formula (\ref{eq:EO})
does not give an apparently symmetric
expression for $W_{0,3}$. 
In terms of the coordinate
$(x,y)\in T^*\bC$ we have an alternative
formula for $W_{0,3}$ \cite{EO1}:
\begin{equation}
\label{eq:W03}
W_{0,3}(z_1,z_2,z_3) =\frac{1}{2\pi i} 
\sum_{j=1} ^r 
\oint_{\gam_j}
\frac{W_{0,2}(z,z_1)W_{0,2}(z,z_2)
W_{0,2}(z,z_3)}
{dx(z)\cdot dy(z)}.
\end{equation}

Suppose we have a solution $W_{g,n}$
to the 
topological recursion. A \textbf{primitive functions}
of the symmetric
differential form $W_{g,n}$ is a 
symmetric meromorphic function $F_{g,n}$ on 
$\Sigma^n$ such that its $n$-fold exterior
derivative recovers the $W_{g,n}$, i.e.,
\begin{equation}
\label{eq:W=dF}
W_{g,n}(z_1,\dots,z_n) = 
d_{z_1}\cdots d_{z_n} F_{g,n}(z_1,\dots,z_n).
\end{equation}
The \textbf{partition function}
of the topological recursion 
for a genus $0$ spectral curve 
is the formal expression
in infinitely many variables
\begin{equation}
\label{eq:partition function}
Z(z_1,z_2,\dots;\hbar)
= \exp\left(
\sum_{g\ge 0,n\ge 1}\frac{1}{n!}\;\hbar^{2g-2+n}
F_{g,n}(z_1,z_2,\dots,z_n)
\right).
\end{equation}
The \emph{principal specialization}
of the partition function is also denoted by 
the same letter $Z$:
\begin{equation}
\label{eq:principal}
Z(z,\hbar)
= \exp\left(
\sum_{g\ge 0,n\ge 1}\frac{1}{n!}\;\hbar^{2g-2+n}
F_{g,n}(z,z,\dots,z)
\right).
\end{equation}

\begin{rem}
The partition function 
depends on the choice of the primitive
functions.
When we consider the topological recursion 
as the B-model corresponding to an A-model
counting problem, then there is always
a canonical choice for the primitives, as the 
Laplace transform of the quantum invariants.
\end{rem}

\begin{rem}
When the spectral curve $\Sigma$ has a higher
genus, the partition function requires a 
non-perturbative factor in terms of a theta function
associated to the curve \cite{BE1,BE2}. 
In this case the algebraic K-theory condition
of \cite{GS},
probably similar to the \emph{Boutroux condition}
of \cite{ADKMV}, becomes essential for the
existence of the quantum curve or the Schr\"odinger
equation. 
Our consideration in
paper is limited to the genus $0$ case.
\end{rem}

\section{The generalized 
Catalan numbers and the 
topological recursion}
\label{sect:Catalan}

A \textbf{cellular graph} of type $(g,n)$
is the one-skeleton of a cell-decomposition of 
a  connected closed oriented surface
of genus $g$ with $n$ 
$0$-cells labeled by the index set 
$[n]=\{1,2,\dots,n\}$. Two cellular graphs
are identified if an orientation-preserving 
homeomorphism  of a surface into another
surface maps one cellular graph to another,
honoring the labeling of each vertex.
Let $D_{g,n}(\mu_1,\dots, \mu_n)$ denote
the number of connected cellular graphs $\Gam$ of
type $(g,n)$ with $n$ labeled vertices
of degrees $(\mu_1,\dots,\mu_n)$, 
counted with the weight 
$1/|\Aut(\Gam)|$. It is generally a rational number.
The orientation of the surface induces a 
cyclic order of incident half-edges at each 
vertex of a cellular graph $\Gam$. Since 
$\Aut(\Gam)$ fixes each vertex, the automorphism
group is a subgroup of the Abelian group
$\prod_{i=1} ^n \bZ\big/\mu_i \bZ$ that rotates
each vertex.
Therefore, 
\begin{equation}
\label{eq:Catalan gn}
C_{g,n}(\mu_1,\dots,\mu_n)
= \mu_1\cdots\mu_n D_{g,n}(\mu_1,\dots,\mu_n)
\end{equation}
is always an integer. 
The cellular graphs counted by
(\ref{eq:Catalan gn}) are connected 
 graphs of genus $g$ with $n$ vertices of degrees
$(\mu_1,\dots,\mu_n)$, and at the $j$-th vertex 
for every $j=1,\dots,n$, an outgoing
arrow is placed on one of the incident
$\mu_j$ half-edges (see Figure~\ref{fig:cellulargraph}).
The placement of $n$ arrows corresponds to the
factors $\mu_1\cdots\mu_n$ on the right-hand side.
We call this integer the generalized
\textbf{Catalan number} of type $(g,n)$.
The reason for this naming comes from the
following theorem.

\begin{figure}[htb]
\centerline{
\epsfig{file=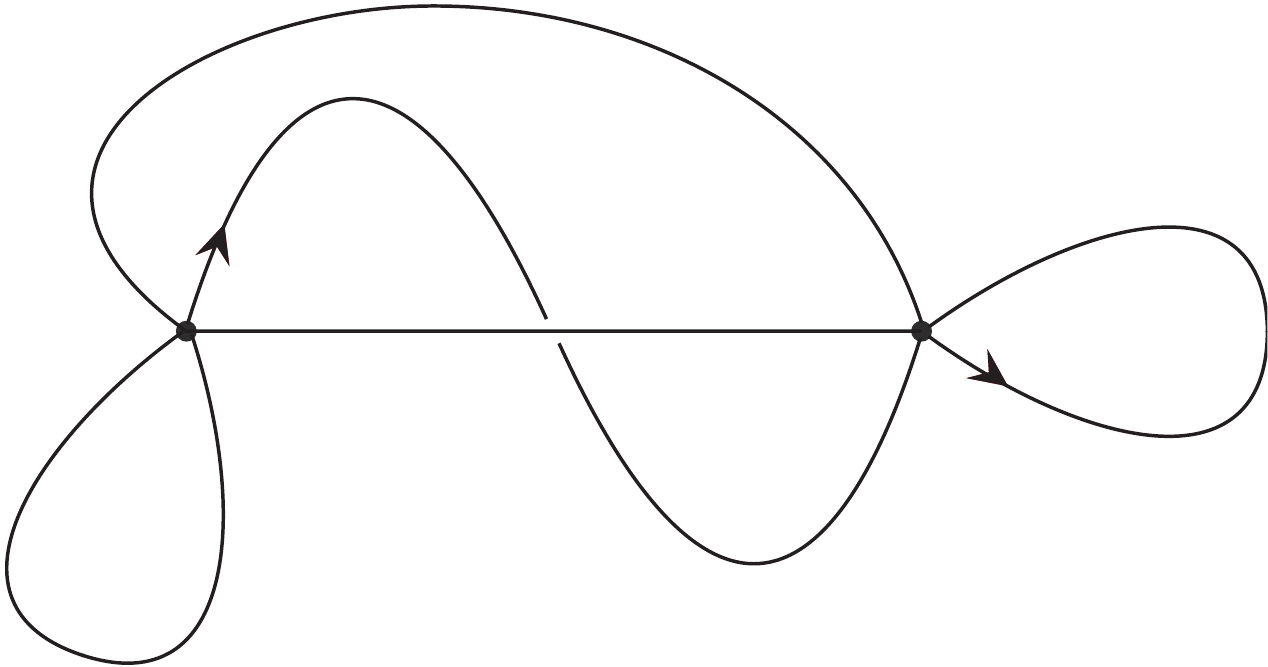, width=2in}}
\caption{A cellular graph of type $(1,2)$.}
\label{fig:cellulargraph}
\end{figure}

\begin{thm}
The generalized Catalan numbers 
{\rm(\ref{eq:Catalan gn})}
satisfy the following equation.
\begin{multline}
\label{eq:Catalan recursion}
C_{g,n}(\mu_1,\dots,\mu_n) 
=\sum_{j=2}^n \mu_j 
C_{g,n-1}(\mu_1+\mu_j-2,\mu_2,\dots,
\widehat{\mu_j},\dots,\mu_n)
\\
+
\sum_{\a+\b = \mu_1-2}
\left[
C_{g-1,n+1}(\a,\b,\mu_2,\cdots,\mu_n)+
\sum_{\substack{g_1+g_2=g\\
I\sqcup J=\{2,\dots,n\}}}
C_{g_1,|I|+1}(\a,\mu_I)C_{g_2,|J|+1}(\b,\mu_J)
\right],
\end{multline}
where $\mu_I=(\mu_i)_{i\in I}$ for 
an index set $I\subset[n]$,
$|I|$ denotes the cardinality of $I$, and
the third sum in the formula is for 
all  partitions of $g$ and 
set partitions of $\{2,\dots,n\}$. 
\end{thm}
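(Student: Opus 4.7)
The plan is to establish the recursion by a bijective edge-contraction argument on the arrowed cellular graphs enumerated by $C_{g,n}$. Recall that $C_{g,n}(\mu_1,\dots,\mu_n)$ counts cellular graphs $\Gam$ of type $(g,n)$ equipped with the choice of a distinguished half-edge (an ``arrow'') at each labeled vertex, the factors $\mu_1\cdots\mu_n$ in the definition $C_{g,n}=\mu_1\cdots\mu_n D_{g,n}$ absorbing exactly the cyclic rotations at each vertex that could otherwise appear in $\Aut(\Gam)$. Fix such an arrowed graph $\Gam$, let $h_1$ be the arrowed half-edge at vertex $1$, let $e$ be the edge of $\Gam$ containing $h_1$, and let $h'$ be the other half-edge of $e$. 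The three terms on the right-hand side of (\ref{eq:Catalan recursion}) correspond, respectively, to the three mutually exclusive situations that arise when one examines the position of $h'$ together with the topological effect of removing $e$.

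Case 1: $h'$ meets a labeled vertex $j\in\{2,\dots,n\}$. Contract $e$ to merge vertices $1$ and $j$ into a single vertex of degree $\mu_1+\mu_j-2$, whose cyclic order of incident half-edges is obtained by splicing the cyclic orders at vertex $1$ and vertex $j$ along the two gaps left by $h_1$ and $h'$. Keep the label $1$ on the merged vertex and place its arrow at the half-edge immediately following the position of $h_1$ in vertex $1$'s original cyclic order. This produces an arrowed cellular graph of type $(g,n-1)$ contributing to $C_{g,n-1}(\mu_1+\mu_j-2,\mu_2,\dots,\widehat{\mu_j},\dots,\mu_n)$. Reversing the contraction requires specifying the summation index $j$ and then choosing an arrow at the restored vertex $j$ in one of its $\mu_j$ positions, which accounts exactly for the weight $\mu_j$ in the first sum.

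Case 2: $h'$ also lies at vertex $1$, so $e$ is a loop. Excising $e$ splits vertex $1$ into two vertices of degrees $\a$ and $\b$ with $\a+\b=\mu_1-2$, whose cyclic orders are inherited from the two arcs of vertex $1$'s cyclic order delimited by $h_1$ and $h'$. If the resulting cellular graph is still connected, then applying Euler's formula $V-E+F=2-2g$ before and after the excision, together with the surface-topology fact that removing a non-separating loop drops the genus by one, shows the new graph has type $(g-1,n+1)$ with degree sequence $(\a,\b,\mu_2,\dots,\mu_n)$; this is the $C_{g-1,n+1}(\a,\b,\mu_2,\dots,\mu_n)$ term. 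If instead excision disconnects the surface, the genera $g_1,g_2$ of the two components add to $g$ and the labels $\{2,\dots,n\}$ are partitioned as $I\sqcup J$ according to which component they lie in, producing $C_{g_1,|I|+1}(\a,\mu_I)\,C_{g_2,|J|+1}(\b,\mu_J)$.

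The hard part will be pinning down the arrow conventions precisely enough to see that the counting is exact: in Case 1 one must check that the arrow placement on the merged vertex is canonical, so that the only reverse-direction ambiguity is where to put the arrow at the restored vertex $j$; and in Case 2 one must verify that the dichotomy between a connected and a disconnected outcome corresponds bijectively, via the Euler-characteristic bookkeeping, to the ``$g\to g-1$'' and ``$g=g_1+g_2$'' terms of the recursion. Once these conventions are in place, the three cases partition all arrowed graphs counted by $C_{g,n}(\mu_1,\dots,\mu_n)$, and summing over them reproduces (\ref{eq:Catalan recursion}).
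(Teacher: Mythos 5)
Your proposal is correct and follows essentially the same edge-contraction/loop-shrinking bijection as the paper's own proof, including the $\mu_j$-fold ambiguity in Case 1 and the connected/disconnected dichotomy in Case 2. The one detail you leave implicit is that in Case 2 one must also canonically place an arrow at each of the two newly created vertices (the paper puts it on the half-edge adjacent to the excised loop), so that the resulting objects are exactly the arrowed graphs counted by $C_{g-1,n+1}$ and $C_{g_1,|I|+1}C_{g_2,|J|+1}$ and the correspondence is a genuine bijection.
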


\begin{proof}
Let $\Gam$ be an arrowed cellular graph counted
by the left-hand side of (\ref{eq:Catalan recursion}).
Since all  vertices of $\Gam$ are labeled, let
$\{p_1,\dots,p_n\}$ denote the vertex set.
We look at the half-edge incident to $p_1$ that
carries an arrow. 

\begin{case}
The arrowed half-edge  extends to an edge $E$ that
connects $p_1$ and $p_j$ for some $j>1$.
\end{case}

In this case, we shrink the edge and join the two
vertices $p_1$ and $p_j$ together. By this process
we create a new vertex of degree $\mu_1+\mu_j-2$.
To make the counting bijective, we need to be able
to go back from the shrunken graph to the original,
provided that we know $\mu_1$ and $\mu_j$.
Thus we place an arrow to the half-edge 
next to $E$ around $p_1$ with respect to the
counter-clockwise cyclic order that comes from 
the orientation of the surface. In this process
we have $\mu_j$ different arrowed graphs 
that produce the same result, because we must
remove the arrow placed around the vertex $p_j$
in the original graph. This gives the 
right-hand side of the first line of 
(\ref{eq:Catalan recursion}). See 
Figure~\ref{fig:case1}.

\begin{figure}[htb]
\centerline{
\epsfig{file=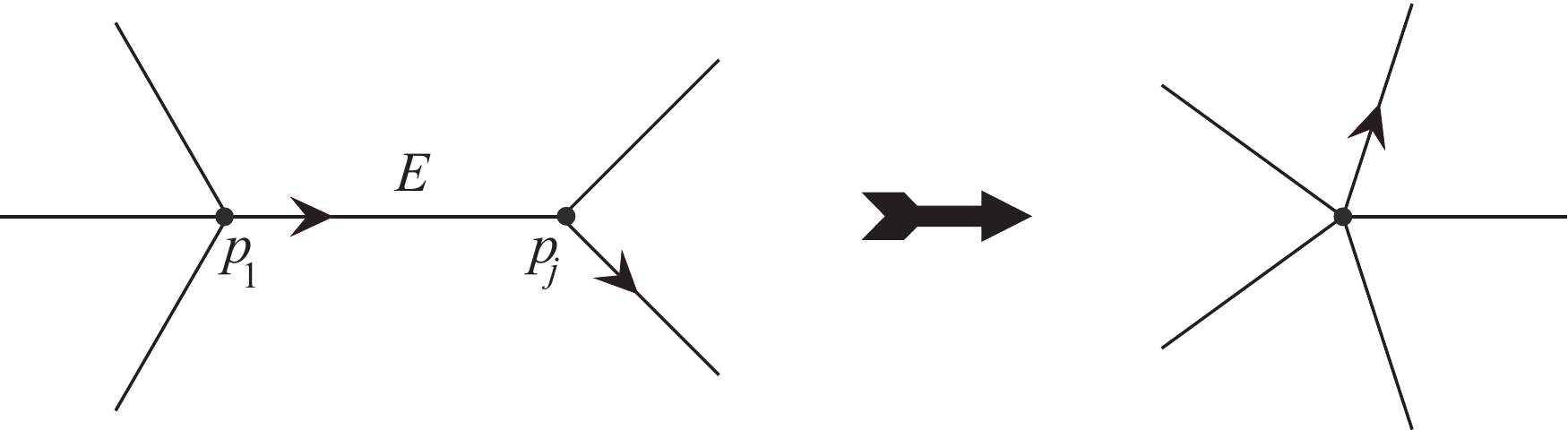, width=2.5in}}
\caption{The process of shrinking the arrowed edge
$E$ that connects vertices $p_1$ and $p_j$, $j>1$.}
\label{fig:case1}
\end{figure}

\begin{case}
The arrowed half-edge at $p_1$ is  a loop
$E$ that goes out and comes back to $p_1$.
\end{case}

The process we apply is again shrinking the loop $E$.
The loop $E$ separates all other half-edges into 
two groups, one consisting of $\a$ of them placed on
one side of the loop, and the other consisting of 
$\b$ half-edges placed on the other side. It can 
happen that $\a=0$ or $\b=0$. 
Shrinking a loop on a surface causes pinching. 
Instead of creating a pinched (i.e., singular) surface, 
we separate the double point into two new vertices
of degrees $\a$ and $\b$. 
Here again we need to remember the place of the 
loop $E$. Thus we place an arrow to the half-edge
next to the loop in each group.
See Figure~\ref{fig:case2}.

\begin{figure}[htb]
\centerline{
\epsfig{file=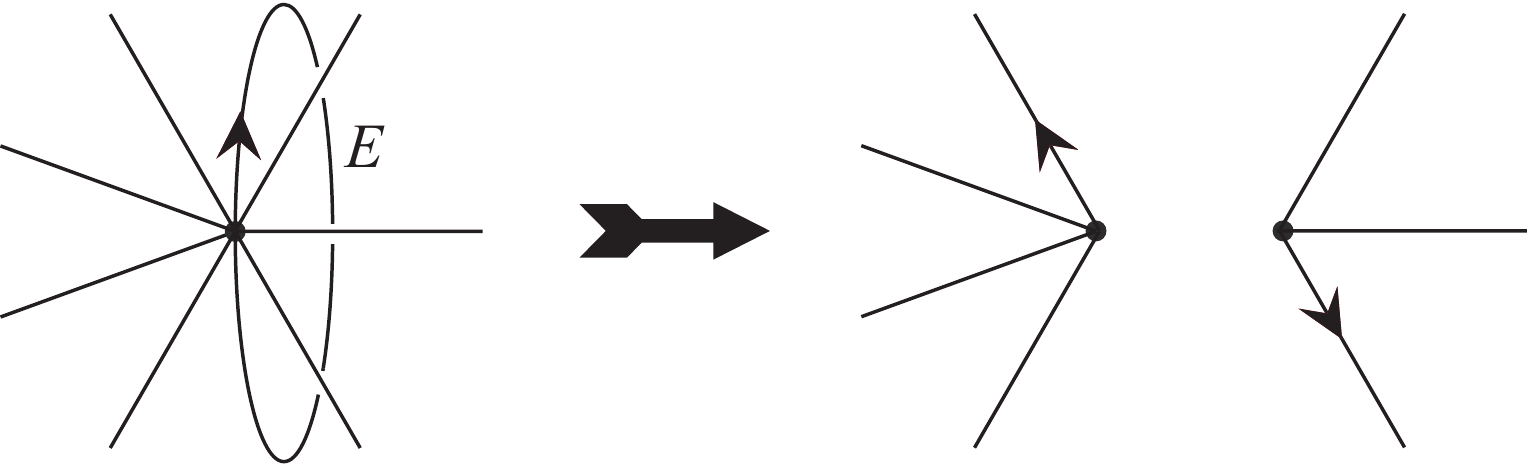, width=2.5in}}
\caption{The process of shrinking the arrowed loop
$E$ that is attached to  $p_1$.}
\label{fig:case2}
\end{figure}

After the pinching and separating the double 
point, the original surface of genus $g$ with 
$n$ vertices $\{p_1,\dots,p_n\}$ may change its
topology. It may have genus $g-1$, or it splits into
two pieces of genus $g_1$ and $g_2$. 
The second line of (\ref{eq:Catalan recursion})
records all such cases. This completes the proof.
\end{proof}

\begin{rem}
For $(g,n) = (0,1)$, the above formula reduces to 
\begin{equation}
\label{eq:Cm}
C_{0,1}(\mu_1) =
\sum_{\a+\b=\mu_1-2} C_{0,1}(\a) C_{0,1}(\b).
\end{equation}
When $n=1$, the degree of the unique vertex $\mu_1$
is always even. By defining $C_{0,1}(0)=1$, 
we find that 
$$
C_{0,1}(2m)=C_m=\frac{1}{m+1}\binom{2m}{m}
$$
is the $m$-th Catalan number. Only for $(g,n)=(0,1)$
we have this irregular case of $\mu_1=0$ 
happens, because a degree $0$ single vertex is
\emph{connected}, and gives a cell-decomposition
of $S^2$. We can imagine that a single vertex on
$S^2$ has
an infinite cyclic group as its automorphism, 
so that $C_{0,1}(0)=1$ is consistent
with 
$$
C_{0,1}(\mu_1) = \mu_1 D_{0,1}(\mu_1).
$$ 
All other cases,
if one of the verteces has degree $0$, then the 
Catalan number $C_{g,n}$ 
is simply $0$ because of the
definition
(\ref{eq:Catalan gn}).
\end{rem}

Let us introduce the generating function of the 
Catalan numbers by
\begin{equation}
\label{eq:Catalan z}
z=z(x) = \sum_{m=0} ^\infty \frac{1}{x^{2m+1}}.
\end{equation}
Then by the quadratic recursion (\ref{eq:Cm})
we find that the inverse function of $z(x)$
that vanishes at $x=\infty$ is given by
\begin{equation}
\label{eq:Catalan x}
x=z+\frac{1}{z}.
\end{equation}
This defines a Lagrangian immersion
\begin{equation}
\label{eq:Catalan immersion}
\Sigma=\bC \owns z\longmapsto
(x(z),y(z))\in T^*\bC,
\qquad 
\begin{cases}
x(z) = z+\frac{1}{z}\\
y(z) = -z
\end{cases}.
\end{equation}
The Lagrangian singularities are located 
at the points at which $dx=0$, i.e., $z=\pm 1$. 
Often it is more convenient to use the 
coordinate 
\begin{equation}
\label{eq:Catalan t}
z = \frac{t+1}{t-1}.
\end{equation}
The following theorem is established in 
\cite{DMSS}.

\begin{thm}[\cite{DMSS}]
The Laplace transform of the Catalan numbers of
type $(g,n)$ defined as symmetric differential forms by
\begin{equation}
\label{eq:WgnC}
W_{g,n}^C(t_1,\dots,t_n) =
(-1)^n
\sum_{(\mu_1,\dots,\mu_n)\in\bZ_+ ^n}
C_{g,n}(\mu_1,\dots,\mu_n) \;e^{-\la w,\mu\ra}
dw_1\cdots dw_n
\end{equation}
satisfies the Eynard-Orantin recursion with respect
to the Lagrangian immersion
 {\rm(\ref{eq:Catalan immersion})}
and 
\begin{equation}
\label{eq:W02C}
W_{0,2}^C(t_1,t_2) = \frac{dt_1\cdot dt_2}
{(t_1-t_2)^2}
-\frac{dx_1\cdot dx_2}{(x_1-x_2)^2}.
\end{equation}
Here the Laplace transform coordinate $w$ is
related to the coordinate $t$ of the Lagrangian by
$$
e^{w_i} = x_i = z_i+\frac{1}{z_i} =
 \frac{t_i+1}{t_i-1}
+\frac{t_i-1}{t_i+1}, \qquad i=1,2,\dots,n,
$$
and $\la w,\mu\ra = w_1\mu_1+\cdots+w_n\mu_n$.
\end{thm}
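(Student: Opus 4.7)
The plan is to prove the theorem by induction on $2g-2+n$, matching the Laplace transform of the combinatorial recursion (\ref{eq:Catalan recursion}) term-by-term with the Eynard-Orantin recursion (\ref{eq:EO}). The base cases $(g,n)=(0,3)$ and $(1,1)$ can be checked by directly enumerating small cellular graphs and comparing with the alternative kernel formulas (\ref{eq:W11}) and (\ref{eq:W03}), so the main content is the inductive step.

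For the inductive step, multiply (\ref{eq:Catalan recursion}) by $\prod_{i=1}^n e^{-w_i\mu_i}$, sum over $\mu\in\bZ_+^n$, and differentiate in $w_1,\dots,w_n$. Using $dw_i = dx_i/x_i$ and the relation $C_{g,n}=\mu_1\cdots\mu_n\,D_{g,n}$, the left-hand side becomes $W_{g,n}^C$. On the right-hand side, the \emph{separating} convolutions $\sum C_{g_1,|I|+1}C_{g_2,|J|+1}$ factorize, after Laplace transform, into products of lower $W$'s evaluated at a common spectral point corresponding to the pinched node. The \emph{genus-reducing} convolutions $\sum_{\alpha+\beta=\mu_1-2}C_{g-1,n+1}(\alpha,\beta,\ldots)$ similarly become the pullback of $W_{g-1,n+1}$ along a diagonal embedding. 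The \emph{edge-contracting} terms $\sum_{j\ge 2}\mu_j\,C_{g,n-1}(\mu_1+\mu_j-2,\ldots)$ translate, using the geometric-series identity
\[
\sum_{\mu_1+\mu_j=\nu+2}u^{\mu_1-1}v^{\mu_j-1}=\frac{u^{\nu+1}-v^{\nu+1}}{u-v}\qquad (u=x_1^{-1},\ v=x_j^{-1}),
\]
into a divided-difference operator in $(x_1,x_j)$ applied to the Laplace transform of $C_{g,n-1}$.

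The next step is to convert each of these Laplace-transformed expressions into residues on the spectral curve $\Sigma=\bC_z$ with $x=z+1/z$, $y=-z$, and Lagrangian singularities at $z=\pm 1$. A direct computation using the local involution $s(z)=1/z$ gives the denominator of the kernel as
\[
\bigl(y(1/z)-y(z)\bigr)\,dx(z)=\frac{(z^2-1)^2}{z^3}\,dz,
\]
so $K(z,z_1)$ has double poles exactly at $z=\pm 1$. The strategy is to show that the Laplace-transformed summand in each case above equals the residue at $z=\pm 1$ of $K(z,z_1)$ paired with the appropriate product of lower $W_{g',n'}$'s. Equivalently, one applies the residue theorem to move the contours to $z=0$, i.e., $x_1=\infty$, which is the point around which every $W_{g,n}^C$ admits a convergent $1/x_1$-expansion whose coefficients are the Catalan numbers $C_{g,n}$; matching those coefficients with (\ref{eq:Catalan recursion}) closes the induction.

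The main obstacle is the verification for the edge-contracting terms: one must show that the divided-difference operator in $(x_1,x_j)$ extracted above agrees with the local behavior of $K(z,z_1)$ near $z=\pm 1$ once the ``spurious'' piece $dx_1\,dx_j/(x_1-x_j)^2$ built into (\ref{eq:W02C}) has been subtracted. Concretely, expand $K(z,z_1)$ as a rational function of $z$ and expand the integrated $W_{0,2}^C$ in partial fractions in $z_1$, then check that the residue at $z=\pm 1$ reproduces the divided-difference formula produced by the Laplace transform of the edge-contracting sum. Once this single analytic identity is established, the remaining steps amount to routine manipulations with the residue theorem and induction on $2g-2+n$.
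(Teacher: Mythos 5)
Your overall strategy---Laplace-transforming the combinatorial recursion \eqref{eq:Catalan recursion} and matching the result against the Eynard--Orantin formula, by induction on $2g-2+n$---is the same route taken in the cited proof, but as written your inductive step has a genuine gap. The convolution sum in \eqref{eq:Catalan recursion} runs over \emph{all} splittings $g_1+g_2=g$, $I\sqcup J=\{2,\dots,n\}$, including $(g_1,I)=(0,\emptyset)$ and $(g_2,J)=(0,\emptyset)$; those terms are $C_{0,1}(\alpha)\,C_{g,n}(\beta,\mu_2,\dots,\mu_n)$ and therefore contain the \emph{same} $(g,n)$ on the right-hand side (the paper flags exactly this point in the Remark following \eqref{eq:CEO}). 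So it is not true that the separating convolutions ``factorize, after Laplace transform, into products of lower $W$'s'': the relation you start from is an induction on $\mu_1$, not on $2g-2+n$, and the induction does not close as stated. The essential content of the proof is precisely the step you omit: isolating the two $(0,1)\times(g,n)$ terms, observing that convolution against $C_{0,1}$ becomes (up to a power of $x_1$) multiplication by $z(x_1)$ on the transform side because of \eqref{eq:z(x)}, moving these terms to the left-hand side, and inverting the resulting multiplication operator. Since $x-2z=\tfrac1z-z=-\bigl(y(s(z))-y(z)\bigr)$ for the involution $s(z)=1/z$, it is exactly this inversion that produces the denominator $1/\bigl(W_{0,1}(s_j(z))-W_{0,1}(z)\bigr)$ of the kernel \eqref{eq:kernel} and converts the non-inductive combinatorial relation into the genuinely inductive formula \eqref{eq:CEO}. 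Without this step the kernel denominator never appears, and the ``single analytic identity'' you single out for the edge-contracting terms is not actually the main obstacle.

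A secondary point: the unstable $(0,2)$ pieces of the convolution ($g_1=0$, $|I|=1$) also do not sit cleanly in the ``products of lower $W$'s'' bucket; they must be combined with the Laplace transform of the edge-contracting sum $\sum_{j\ge 2}\mu_j C_{g,n-1}(\mu_1+\mu_j-2,\dots)$ to assemble the $W_{0,2}^C(t,t_j)\,W_{g,n-1}^C(\cdots)$ lines of \eqref{eq:CEO}, and this is where the subtraction of $dx_1\cdot dx_2/(x_1-x_2)^2$ in \eqref{eq:W02C} actually enters. Your proposal mentions that subtraction but attaches it to the wrong part of the argument. Once the $(0,1)$ and $(0,2)$ bookkeeping is done correctly, the rest of your plan (divided differences for the edge-contracting terms, residue computation at $z=\pm1$, checking the base cases $(0,3)$ and $(1,1)$) is sound and matches the published argument.
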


In this case the Eynard-Orantin
 recursion formula is given by
\begin{multline}
\label{eq:CEO}
W_{g,n}^C(t_1,\dots,t_n)
=
-\frac{1}{64} \; 
\frac{1}{2\pi i}\int_\gam
\left(
\frac{1}{t+t_1}+\frac{1}{t-t_1}
\right)
\frac{(t^2-1)^3}{t^2}\cdot \frac{1}{dt}\cdot dt_1
\\
\times
\Bigg[
\sum_{j=2}^n
\bigg(
W_{0,2}^C(t,t_j)W_{g,n-1}^C
(-t,t_2,\dots,\widehat{t_j},
\dots,t_n)
+
W_{0,2}^D(-t,t_j)W_{g,n-1}^C
(t,t_2,\dots,\widehat{t_j},
\dots,t_n)
\bigg)
\\
+
W_{g-1,n+1}^C(t,{-t},t_2,\dots,t_n)
+
\sum^{\text{stable}} _
{\substack{g_1+g_2=g\\I\sqcup J=\{2,3,\dots,n\}}}
W_{g_1,|I|+1}^C(t,t_I) W_{g_2,|J|+1}^C({-t},t_J)
\Bigg].
\end{multline}
The last sum is restricted to the stable 
geometries. In other words, the partition 
should satisfies
$2g_1-1+|I|>0$ and $2g_2-1+|J|$. The 
contour integral is taken with respect to 
$t$ on the curve defined by Figure~\ref{fig:contourC}.

\begin{figure}[htb]
\centerline{\epsfig{file=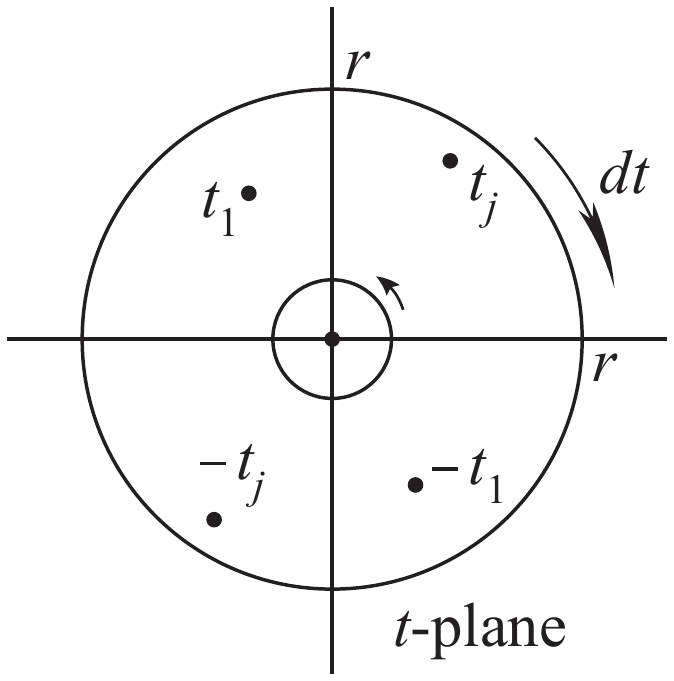, width=1.5in}}
\caption{The integration contour $\gamma$. This contour 
encloses an annulus bounded by two concentric 
circles centered at the origin. The outer one has a 
large radius 
$r>\max_{j\in N} |t_j|$ and the negative orientation,
 and the inner one has an infinitesimally small radius with 
 the positive
 orientation.}
\label{fig:contourC}
\end{figure}

\begin{rem}
The recursion formula (\ref{eq:CEO}) is 
a genuine induction formula with respect to 
$2g-2+n$. Thus from the given $W_{0,1}^C$
and $W_{0,2}^C$, we can calculate 
all $W_{g,n}^C$ one by one. This is a big
difference between (\ref{eq:CEO})
and (\ref{eq:Catalan recursion}). The latter
relation contains the terms with $C_{g,n}$ in the 
right-hand side as well.
\end{rem}

\section{The partition function for the
generalized Catalan numbers and the
Schr\"odinger equation}
\label{sect:CSch}

Let us now define
\begin{equation}
\label{eq:FgnC}
F_{g,n}^C(t_1,\dots, t_n)
=\sum_{(\mu_1,\dots,\mu_n)\in\bZ_+^n}
D_{g,n}(\mu_1,\dots,\mu_n)\;e^{-\la w,\mu\ra}
\end{equation}
for $2g-2+n>0$.
Then from (\ref{eq:Catalan gn}) we have
$$
W_{g,n}^C(t_1,\dots,t_n) = d_1\cdots d_n
F_{g,n}^C(t_1,\dots,t_n).
$$
Therefore, we have a natural primitive function
of $W_{g,n}(t_1,\dots,t_n)$ for every $(g,n)$.
We note that 
$$
t=-1 \Longrightarrow z=0\Longrightarrow
x=\infty.
$$
Therefore,
\begin{equation}
\label{eq:FgnC zero}
\left.F_{g,n}^C(t_1,\dots,t_n)\right|_{t_i=-1} = 0
\end{equation}
for every $i=1,2,\dots,n$.
The following recursion formula 
of \cite{MZhou}
is the key for
our investigation.

\begin{thm}
\label{thm:FC recursion}
The Laplace transform $F^C_{g,n}(t_{[n]})$ 
satisfies the following
differential recursion equation
for every $(g,n)$ subject to $2g-2+n>0$.
\begin{multline}
\label{eq:FC recursion}
\frac{\partial}{\partial t_1}F^C_{g,n}(t_{[n]})
\\
=
-\frac{1}{16}
\sum_{j=2} ^n
\left[\frac{t_j}{t_1^2-t_j^2}
\left(
\frac{(t_1^2-1)^3}{t_1^2}\frac{\partial}{\partial t_1}
F^C_{g,n-1}(t_{[\hat{j}]})
-
\frac{(t_j^2-1)^3}{t_j^2}\frac{\partial}{\partial t_j}
F^C_{g,n-1}(t_{[\hat{1}]})
\right)
\right]
\\
-\frac{1}{16}
\sum_{j=2} ^n
\frac{(t_1^2-1)^2}{t_1^2}\frac{\partial}{\partial t_1}
F^C_{g,n-1}(t_{[\hat{j}]})
\\
-
\frac{1}{32}\;\frac{(t_1^2-1)^3}{t_1^2}
\left.
\left[
\frac{\partial^2}{\partial u_1\partial u_2}
F^C_{g-1,n+1}(u_1,u_2,t_2, t_3,\dots,t_n)
\right]
\right|_{u_1=u_2=t_1}
\\
-
\frac{1}{32}\;\frac{(t_1^2-1)^3}{t_1^2}
\sum_{\substack{g_1+g_2=g\\
I\sqcup J=\{2,3,\dots,n\}}}
^{\rm{stable}}
\frac{\partial}{\partial t_1}
F^C_{g_1,|I|+1}(t_1,t_I)
\frac{\partial}{\partial t_1}
F^C_{g_2,|J|+1}(t_1,t_J).
\end{multline}
Here we use the index convention 
$[n]=\{1,2,\dots,n\}$ and $[\hat{j}] = 
\{1,2,\dots,\hat{j},\dots, n\}$.
\end{thm}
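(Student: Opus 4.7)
My plan is to derive (\ref{eq:FC recursion}) from the Eynard-Orantin topological recursion (\ref{eq:CEO}) by explicit residue calculus, using the primitive relation $W_{g,n}^C = d_1 \cdots d_n F_{g,n}^C$ already established. Iterated integration of this identity against the boundary condition (\ref{eq:FgnC zero}) reduces an equation for $W_{g,n}^C$ to one for $\partial F_{g,n}^C/\partial t_1$, so the bulk of the work is to evaluate the contour integral on the right-hand side of (\ref{eq:CEO}) as a sum of residues and to rewrite each contribution via $W=d_1 d_2 F$.

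First I would enumerate the poles of the integrand inside the annulus bounded by $\gamma$. The kernel factor $(t^2-1)^3/t^2$ has a third-order zero at $t=\pm 1$, so $\pm 1$ contribute no residue despite lying inside the contour. The Lagrangian singularities in the $t$-coordinate are at $t=0$ (since $dx/dt=-8t/(t^2-1)^2$ vanishes there) and $t=\infty$; both lie outside the annulus by construction. Hence the only contributing residues are at $t=\pm t_1$ from the simple poles of the Cauchy kernel $\frac{1}{t-t_1}+\frac{1}{t+t_1}$, and at $t=\pm t_j$ for $j\ge 2$ from the double poles of $W_{0,2}^C(\pm t,t_j)$.

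The residues at $t=\pm t_j$ produce the first two lines of (\ref{eq:FC recursion}). Because $W_{0,2}^C$ has a double pole along the diagonal, the residue forces one derivative and yields two types of contribution: one in which the derivative lands on the kernel, producing a $\partial F_{g,n-1}^C/\partial t_1$ term, and one in which it lands on $W_{g,n-1}^C(-t,\dots)$, producing a $\partial F_{g,n-1}^C/\partial t_j$ term after translating $W = d_1\cdots d_{n-1} F$. Combining the $+t_j$ and $-t_j$ residues using the Galois involution $t\leftrightarrow -t$ and simplifying the partial fractions yields the rational prefactors $t_j/(t_1^2-t_j^2)$ in the first line and $(t_1^2-1)^2/t_1^2$ in the second line. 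The residues at $t=\pm t_1$ give the last two lines: the simple poles of the kernel extract the values of $W_{g-1,n+1}(t,-t,\dots)$ and $\sum W\cdot W$ at $t=\pm t_1$, and the symmetry $W(t,-t,\dots)=W(-t,t,\dots)$ collapses the $\pm t_1$ contributions into a single evaluation at $u_1=u_2=t_1$ of the mixed partial $\partial^2 F_{g-1,n+1}^C/\partial u_1\partial u_2$ (and analogously for the product terms). The prefactor $-\frac{1}{32}(t_1^2-1)^3/t_1^2$ arises from the kernel value at $t=t_1$ together with the factor of $2$ from summing over $\pm t_1$.

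The main obstacle I anticipate is the sign and symmetry bookkeeping attached to the double-pole residue at $t=t_j$: a single residue computation must split cleanly into two distinct partial-fraction structures matching the first and second lines of (\ref{eq:FC recursion}), and this requires a delicate combination of contributions from $t=t_j$ and $t=-t_j$. A secondary subtlety concerns the unstable boundary case $(g,n)=(1,1)$, in which $F_{0,2}^C$ lies outside the stable range of definition (\ref{eq:FgnC}); it must be defined by antidifferentiating (\ref{eq:W02C}), and the evaluation at $u_1=u_2=t_1$ is legitimate because the apparent singularity $1/(u_1-u_2)^2$ in $\partial^2 F_{0,2}^C/\partial u_1\partial u_2$ is exactly cancelled by the $dx(u_1)\,dx(u_2)/(x(u_1)-x(u_2))^2$ regularization present in (\ref{eq:W02C}).
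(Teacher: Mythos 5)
Your overall strategy---recover (\ref{eq:FC recursion}) from the Eynard--Orantin recursion (\ref{eq:CEO}) by residue calculus and then anti-differentiate using (\ref{eq:FgnC zero})---is not the route behind the theorem. The paper quotes the formula from \cite{MZhou}, and there (as in the identical formula of \cite{MP2012}) it is obtained by taking the Laplace transform of the combinatorial recursion (\ref{eq:Catalan recursion}) directly; the Eynard--Orantin recursion (\ref{eq:CEO}) is then \emph{derived} from (\ref{eq:FC recursion}) by applying $d_1\cdots d_n$, not the other way around. Your reversed route is viable in principle, but as sketched it has two genuine gaps. First, the residue bookkeeping is misassigned: the simple poles of the Cauchy kernel at $t=\pm t_1$ extract the value of the \emph{entire} bracket of (\ref{eq:CEO}), including the unstable summands $W_{0,2}^C(t,t_j)W^C_{g,n-1}(-t,\dots)+W_{0,2}^C(-t,t_j)W^C_{g,n-1}(t,\dots)$. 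Those contributions are nonzero and are precisely what produce the $\frac{(t_1^2-1)^3}{t_1^2}\frac{\partial}{\partial t_1}F^C_{g,n-1}(t_{[\hat j]})$ parts of lines 1--2 (the kernel evaluated at $t_1$), while the residues at $t=\pm t_j$ account only for the $\frac{(t_j^2-1)^3}{t_j^2}\frac{\partial}{\partial t_j}F^C_{g,n-1}(t_{[\hat 1]})$ part. Your plan routes all of the $\pm t_1$ residues into lines 3--4 and all of lines 1--2 into the $\pm t_j$ residues, so the computation would not balance: you would be left with unaccounted terms at $t=\pm t_1$ and missing terms in line 1.

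Second, and more fundamentally, part of the right-hand side of (\ref{eq:FC recursion}) is invisible to any identity among the $n$-forms $W^C_{g,n}$: the $j$-th summand of line 2 does not depend on $t_j$ at all, so it is annihilated by $d_2\cdots d_n$, and the same is true of a piece of line 1. No residue evaluation of (\ref{eq:CEO}) can produce these terms; they can only be recovered in the anti-differentiation step, by demanding that the full right-hand side vanish on each hyperplane $t_j=-1$ (one checks, e.g., that the $j$-th summands of lines 1 and 2 cancel there, using $\frac{t_j}{t_1^2-t_j^2}\big|_{t_j=-1}=\frac{-1}{t_1^2-1}$). Your proposal mentions the boundary condition only as a device to pass from $W$ to $\partial F/\partial t_1$, not as the mechanism that pins down these kernel terms, so the argument as written cannot reproduce the stated functional form of lines 1--2. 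This is exactly the ``constants of integration'' issue the paper flags in Section~\ref{sect:conclusion}, and it is the reason the direct Laplace transform of (\ref{eq:Catalan recursion}) is the cleaner proof: it yields every line of (\ref{eq:FC recursion}) in closed form without having to invert $d_2\cdots d_n$. (Your final remark about $F^C_{0,2}$ is a non-issue: it is given in closed form in Proposition~\ref{prop:F01 F02} as $-\log(1-z_1z_2)$, which is already regular on the diagonal.)
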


\begin{rem}
We note that the above formula is 
identical to \cite[Theorem~5.1]{MP2012},
even though $F_{g,n}$ 
 is a different function.
 There we considered the Laplace 
 transform of the number of lattice points
 in $\cM_{g,n}$, and hence $F_{0,1}=
 F_{0,2}=0$.
\end{rem}

\begin{rem}
Because of (\ref{eq:FgnC zero}), 
the recursion
(\ref{eq:FC recursion}) uniquely
determines each $F_{g,n}^C$ by
integrating  from $-1$ to  $t_1$. 
With the same reason, $F_{g,n}^C$ is 
uniquely determined by $W_{g,n}^C$. 
Since we know exactly where $F_{g,n}^C$ 
vanishes, there is no discrepancy of the
constants of integration in (\ref{eq:W=dF}).
\end{rem}

Let us now consider the principal specialization
of the partition function for the Catalan numbers 
\begin{equation}
\label{eq:ZC}
Z^C(t,\hbar)=\exp\left(
\sum_{g=0}^\infty\sum_{n=1}^\infty
\frac{1}{n!}\;
\hbar ^{2g-2+n}F_{g,n}^C(t,t,\dots,t)
\right).
\end{equation}
Since unstable terms $F_{0,1}^C(t)$ and 
$F_{0,2}^C(t,t)$ are included in the above
formula, we need to calculate them first.

\begin{prop}
\label{prop:F01 F02}
In terms of the $z$-variable, we have
\begin{align}
\label{eq:F01C}
F_{0,1}^C(t) &= -\half z^2 +\log z,
\\
\label{eq:F02C}
F_{0,2}^C(t_1,t_2) &= -\log (1-z_1z_2).
\end{align}
\end{prop}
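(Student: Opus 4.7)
The plan is to verify each identity by direct computation in the $z$-coordinate on the spectral curve. Both $F_{0,1}^C$ and $F_{0,2}^C$ are \emph{unstable} cases to which the differential recursion (\ref{eq:FC recursion}) does not apply, so they are determined instead by the requirement that they be primitives of the Eynard--Orantin differentials $W_{0,1}^C$ and $W_{0,2}^C$ together with appropriate normalizations.

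For (\ref{eq:F01C}) the identity to check is $dF_{0,1}^C = W_{0,1}^C = \iota^*\eta = y\,dx = -z\,dx$, which is the Eynard--Orantin convention for the degree-one differential on the Lagrangian immersion (\ref{eq:Catalan immersion}). Substituting $x = z + 1/z$, so that $dx = (1 - 1/z^2)\,dz$, gives
\[
-z\,dx = (-z + 1/z)\,dz,
\]
and integration in $z$ produces $-\half z^2 + \log z$. The integration constant is absorbed into the natural choice of primitive in the unstable $(0,1)$ sector.

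For (\ref{eq:F02C}) I would start from (\ref{eq:W02C}) and rewrite both terms in the $z$-coordinate. Since $t=(z+1)/(z-1)$ is a M\"obius transformation, the first summand is M\"obius invariant and becomes $dz_1\,dz_2/(z_1-z_2)^2$. For the second summand the identities $x_1 - x_2 = (z_1-z_2)(z_1 z_2 - 1)/(z_1 z_2)$ and $dx = (z^2-1)\,dz/z^2$ give
\[
\frac{dx_1\,dx_2}{(x_1-x_2)^2} = \frac{(z_1^2-1)(z_2^2-1)}{(z_1-z_2)^2\,(z_1 z_2 - 1)^2}\,dz_1\,dz_2.
\]
The polynomial identity $(z_1 z_2 - 1)^2 - (z_1^2-1)(z_2^2-1) = (z_1-z_2)^2$ then collapses the difference to $W_{0,2}^C = dz_1\,dz_2/(1 - z_1 z_2)^2$. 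A direct check that $d_1 d_2 [-\log(1-z_1 z_2)] = dz_1\,dz_2/(1-z_1 z_2)^2$ identifies $-\log(1-z_1 z_2)$ as the desired primitive, and any remaining additive functions of a single variable are fixed by the vanishing condition (\ref{eq:FgnC zero}), which here reads $F_{0,2}^C|_{z_i = 0} = 0$.

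The only nontrivial step is the polynomial identity that simplifies $W_{0,2}^C$ in the $z$-coordinate; everything else is bookkeeping with integration constants. I do not expect a conceptual obstacle, and the whole verification reduces to routine computation on the Lagrangian.
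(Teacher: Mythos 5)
Your computation for (\ref{eq:F02C}) is correct and goes by a genuinely different route than the paper's. The paper simply cites \cite[Proposition~4.1]{DMSS}, where $F_{0,2}^C$ is computed by applying the Euler operator $x_1\frac{d}{dx_1}+x_2\frac{d}{dx_2}$ directly to the Laplace-transform series; you instead integrate $W_{0,2}^C$ after rewriting it in the $z$-coordinate, using the M\"obius invariance of $\frac{dt_1\,dt_2}{(t_1-t_2)^2}$, the factorization $x_1-x_2=(z_1-z_2)\frac{z_1z_2-1}{z_1z_2}$, and the identity $(z_1z_2-1)^2-(z_1^2-1)(z_2^2-1)=(z_1-z_2)^2$ to get $W_{0,2}^C=\frac{dz_1\,dz_2}{(1-z_1z_2)^2}$. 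All of these steps check out, and fixing the single-variable ambiguity by the vanishing of the Laplace transform at $z_i=0$ (i.e.\ $x_i\to\infty$) is the same normalization the paper uses. Your argument does presuppose that the Laplace transform $d_1d_2F_{0,2}^C$ of the $(0,2)$ dessin counts coincides with the expression (\ref{eq:W02C}); that identification is part of the quoted theorem from \cite{DMSS}, so you should cite it explicitly rather than treat (\ref{eq:W02C}) as a definition, but this is not a circularity given the paper's setup. Also note that (\ref{eq:FgnC zero}) is stated only for $2g-2+n>0$, so for $(0,2)$ you should instead argue directly that the Laplace transform has no constant term as $x_i\to\infty$ — which is what the paper does.

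The one genuine gap is the integration constant in (\ref{eq:F01C}). The proposition asserts an exact equality for the (regularized) Laplace transform of $D_{0,1}$, so the constant is not free to be ``absorbed into the natural choice of primitive'': it has a determined value, and showing that it is zero is the actual content of this half of the proof. The paper does this by writing
$F_{0,1}^C(t)=\sum_{m\ge0}D_{0,1}(2m)\left(x^{-2m}-\delta_{m,0}\right)$,
observing via $D_{0,1}(2m)=\frac{1}{2m(m+1)}\binom{2m}{m}$ that the regularized $m=0$ term contributes $-\log x$, and then comparing asymptotics as $z\to0$ (so $x\to\infty$): the series behaves like $-\log x\sim\log z$, while $-\half z^2+\log z+\const$ behaves like $\log z+\const$, forcing the constant to vanish. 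You should supply this (or an equivalent) normalization argument; without it your proof establishes (\ref{eq:F01C}) only up to an additive constant.
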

\begin{proof}
Due to the irregularity of $\mu=0$ for 
$D_{0,1}(\mu)$, we need to modify the
definitions (\ref{eq:WgnC}) and  (\ref{eq:FgnC})
for $(g,n) = (0,1)$. It is natural to define
\begin{equation}
\label{eq:W01C}
W_{0,1}(t) = -\sum_{m=0} ^\infty C_{0,1}(2m)
\frac{dx}{x^{2m+1}} = -zdx =\left( -z+\frac{1}{z}
\right)dz
\end{equation}
because of the consistency with (\ref{eq:W01}).
Since $W_{0,1}^C = dF_{0,1}^C$, 
we have 
\begin{align*}
F_{0,1}(t) &= -\half z^2 +\log z +\const
\\
&=
\sum_{m=0} ^\infty D_{0,1}(2m)
\left(\frac{1}{x^{2m}}-\delta_{m,0}
\right),
\end{align*}
where the $m=0$ term is adjusted so that we
do not have the infinity term $D_{0,1}(0)$
in $F_{0,1}^C$.
Using the expression
$$
D_{0,1}(2m) = \frac{1}{2m(m+1)}
\binom{2m}{m},
$$
we have
$$
\lim_{m\rar 0}
D_{0,1}(2m)
\left(\frac{1}{x^{2m}}-\delta_{m,0}
\right) = -\log x.
$$
Since $x=z+\frac{1}{z}$, 
by taking the limit $z\rar 0$ we conclude that
the constant term in $F_{0,1}^C(t)$ is $0$,
which establishes (\ref{eq:F01C}). 

The computation of $F_{0,1}^C(t_1,t_2)$ is 
performed in \cite[Proposition~4.1]{DMSS},
where the idea is to use the 
Euler differential operator
$x_1\frac{d}{dx_1}+x_2\frac{d}{dx_2}$.
By definition $F_{0,1}^C(t_1,t_2)$ does not
have any constant term as $x\rar\infty$, 
therefore there is no constant correction in
(\ref{eq:F02C}), either.
\end{proof}

\begin{thm}
\label{thm:Sch}
The principal specialization of the 
partition function satisfies the following
Schr\"odinger equation.
\begin{equation}
\label{eq:Sch C}
\left(
\hbar^2 \frac{d^2}{dx^2}+\hbar x\frac{d}{dx}+1
\right)
Z^C(t,\hbar) = 0,
\end{equation}
where $t$ is considered as a function in $x$
by
$$
t= t(x) = \frac{z(x)+1}{z(x)-1}
$$
and  {\rm{(\ref{eq:Catalan z})}}.
\end{thm}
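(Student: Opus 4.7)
The plan is to convert the Schr\"odinger equation into a $\hbar$-expansion identity and then deduce it, order by order in $\hbar$, from the differential recursion (\ref{eq:FC recursion}). Writing $Z^C = \exp(S)$ with
\[
S(t,\hbar) \;=\; \sum_{g \ge 0,\, n \ge 1}\, \frac{\hbar^{2g-2+n}}{n!}\; F^C_{g,n}(t,\dots,t),
\]
equation (\ref{eq:Sch C}) becomes
\[
\hbar^{2}\!\left(\frac{d^{2}S}{dx^{2}} + \left(\frac{dS}{dx}\right)^{\!2}\right) + \hbar\, x\, \frac{dS}{dx} + 1 \;=\; 0,
\]
which must hold identically in $\hbar$.

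First I would verify the two unstable orders. From (\ref{eq:W01C}) one has $dF^C_{0,1}/dx = -z$, so the $\hbar^{-2}$ coefficient is $z^{2} - xz + 1$, which vanishes by (\ref{eq:Catalan x}). This is exactly the Lagrangian immersion $y^{2} + xy + 1 = 0$ of (\ref{eq:CLag}) under the identification $y = -z = \hbar\, d/dx$. The $\hbar^{-1}$ order uses (\ref{eq:F02C}), namely $F^C_{0,2}(t,t)=-\log(1-z^{2})$, together with the chain rule $d/dx = \frac{z^{2}}{z^{2}-1}\, d/dz$, and cancels by a direct computation.

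For each $k \ge 0$, I would match the coefficient of $\hbar^{k}$ with the principal specialization $t_i\mapsto t$ of the recursion (\ref{eq:FC recursion}). Under this specialization the following dictionary applies: the term $\partial_{u_1}\partial_{u_2} F^C_{g-1,n+1}$ feeds into $d^{2}S/dx^{2}$, and the splitting sum $\sum_{g_1+g_2=g,\,I \sqcup J}\partial F^C_{g_1,|I|+1}\,\partial F^C_{g_2,|J|+1}$ assembles into $(dS/dx)^{2}$ via the exponential formula, the factor $1/n!$ absorbing the count over $I \sqcup J = \{2,\ldots,n\}$. The prefactor $(t_1^{2}-1)^{3}/t_1^{2}$ in (\ref{eq:FC recursion}) is, up to the Jacobian $dx/dz = (z^{2}-1)/z^{2}$ applied twice, exactly what is needed to convert the operator $\partial_{t_1}$ into $\hbar^{2} d^{2}/dx^{2}$ after the chain rule, thereby producing the $\hbar^{2}$ piece of the Schr\"odinger operator.

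The main obstacle is handling the off-diagonal coupling terms $\frac{t_j}{t_1^{2} - t_j^{2}}\bigl[g(t_1)\partial_{t_1}F^C - g(t_j)\partial_{t_j}F^C\bigr]$, where $g(t)=(t^{2}-1)^{3}/t^{2}$, under the specialization $t_1,t_j \to t$. The apparent pole at $t_1=t_j$ cancels by the symmetry of $F^C_{g,n-1}$, and a L'H\^opital expansion yields a single-variable operator whose contribution, after summing over $j \in \{2,\dots,n\}$ and applying the exponential formula, produces exactly the middle term $\hbar\, x\, dS/dx$ (the factor $x = z + 1/z$ appearing through the identity $\frac{1}{2}g'(t)\cdot (dt/dx) = \text{const}\cdot x$ after simplification). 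Once this identification is in place, summing (\ref{eq:FC recursion}) over all stable $(g,n)$ with weight $\hbar^{2g-2+n}/n!$ assembles into the Schr\"odinger equation (\ref{eq:Sch C}). For the characteristic variety statement, substituting $\hbar\, d/dx \rightsquigarrow y$ in the total symbol $\hbar^{2}\frac{d^{2}}{dx^{2}} + \hbar x \frac{d}{dx} + 1$ gives $y^{2} + xy + 1$, which is the defining equation of the Lagrangian immersion (\ref{eq:CLag}) in $T^{*}\bC$.
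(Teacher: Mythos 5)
Your overall strategy is the paper's: write $Z^C=e^S$, expand the resulting Riccati-type equation in powers of $\hbar$, check the unstable orders by hand using (\ref{eq:F01C}) and (\ref{eq:F02C}), and identify the coefficient of $\hbar^{m+1}$ with the principal specialization of (\ref{eq:FC recursion}) summed over $2g-2+n=m$ with weight $1/(n-1)!$. That framework is sound, and your treatment of the leading order ($(S_0')^2+xS_0'+1=z^2-xz+1=0$) and of the characteristic variety is correct (though note the leading order is the $\hbar^0$ coefficient of the equation, not $\hbar^{-2}$, and the next check is the $\hbar^1$ coefficient $S_0''+(2S_0'+x)S_1'=0$).

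However, your term-by-term dictionary is wrong precisely at the step you single out as ``the main obstacle,'' and as stated it would not close up. At order $\hbar^{m+1}$ the equation reads $S_m''+\sum_{a+b=m+1}S_a'S_b'+xS_{m+1}'=0$, so the middle term $\hbar x\,dS/dx$ contributes $xS_{m+1}'$, a term in $S_{m+1}$; the off-diagonal coupling terms of (\ref{eq:FC recursion}) involve $F^C_{g,n-1}$ and hence can only produce derivatives of $S_m$. In the correct matching, $xS_{m+1}'$ pairs with the cross term $2S_0'S_{m+1}'$ from the square and with the \emph{left-hand side} $\partial_{t_1}F^C_{g,n}$ of (\ref{eq:FC recursion}), via the identity $-(2S_0'+x)\frac{d}{dx}=\frac{1}{2}(t^2-1)\frac{d}{dt}$ as in (\ref{eq:S0'+x}). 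The off-diagonal terms, after the l'H\^opital limit, instead yield $-\frac{1}{32}\bigl(\frac{d}{dt}\frac{(t^2-1)^3}{t^2}\bigr)\frac{dS_m}{dt}$ \emph{plus} a pure second partial $\partial_{t_1}^2F^C_{g,n-1}$; the latter must be combined with the mixed partial coming from the $F^C_{g-1,n+1}$ line, using $\frac{d^2}{dt^2}f(t,\dots,t)=n\,\partial_u^2f+n(n-1)\,\partial_{u_1}\partial_{u_2}f$, to reconstitute $\frac{d^2}{dt^2}S_m$, while the residual first-derivative pieces (together with the second line of (\ref{eq:FC recursion})) account for $2S_1'S_m'$ and for the first-order correction in converting $d^2/dx^2$ to the $t$-variable. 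In particular the identity you invoke, $\frac{1}{2}g'(t)\,(dt/dx)=\mathrm{const}\cdot x$, is false: $x=2(t^2+1)/(t^2-1)$, whereas the left side is a rational function of an entirely different shape. None of this invalidates the approach, but the bookkeeping must be redone along these lines (this is the content of Appendix~\ref{app:Catalan}) before the order-by-order identification constitutes a proof.
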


The rest of this section is devoted to 
proving the above theorem.
Let us define for $m\ge 0$
\begin{equation}
\label{eq:SmC}
S_m = \sum_{2g-2+n=m-1}\frac{1}{n!}\;
 F^C_{g,n}(t,\dots,t),
\end{equation}
and put
$$
F=\sum_{m=0} ^\infty \hbar^{m-1}\;S_m.
$$
Then the Schr\"odinger equation (\ref{eq:Sch C})
becomes
$$
\hbar^2 \left(
\frac{d^2 F}{dx^2}+\left(\frac{dF}{dx}\right)^2
\right)
+\hbar x\frac{dF}{dx} + 1 = 0,
$$
which is equivalent to 
\begin{equation}
\label{eq:S equation}
\sum_{m=0}^\infty S_m'' \hbar^{m+1}
+\left(\sum_{m=0}^\infty S_m' \hbar^m
\right)^2 +
x\sum_{m=0}^\infty S_m'\hbar ^m+1=0,
\end{equation}
where $'=\frac{d}{dx}$.
Since
$S_0 = F^C_{0,1}$ and 
$$
W^C_{0,1}=dF^C_{0,1} = ydx = -z dx,
$$
we obtain
\begin{equation}
\label{eq:S0'}
S_0' = \frac{d}{dx}F^C_{0,1} = -z
 = -\frac{t+1}{t-1}.
\end{equation}
Using the Lagrangian immersion
(\ref{eq:Catalan x}), we see that
the constant terms of (\ref{eq:S equation})
then become
\begin{equation}
\label{eq:const terms}
(-z)^2 +x(-z) +1=z^2-xz+1 = 0.
\end{equation}

 Collecting 
the $\hbar^{m+1}$-contributions
in (\ref{eq:S equation}) for $m\ge 0$, we obtain
\begin{align*}
-x\frac{d}{dx}S_{m+1}
&=\frac{d^2}{dx^2}S_m +\sum_{a+b=m+1}
\frac{dS_a}{dx}\; \frac{dS_b}{dx}
\\
&=
\frac{d^2}{dx^2}S_m +
2S_0'S_{m+1}'+2S_1'S_m'+
\sum_{\substack{a+b=m+1\\a,b\ge 2}}
\frac{dS_a}{dx}\; \frac{dS_b}{dx}.
\end{align*}
Therefore,
\begin{equation}
\label{eq:S equation 2}
-(2S_0'+x)\frac{d}{dx}S_{m+1}
=
\frac{d^2}{dx^2}S_m +2S_1'\frac{d}{dx}S_m+
\sum_{\substack{a+b=m+1\\a,b\ge 2}}
\frac{dS_a}{dx}\; \frac{dS_b}{dx}.
\end{equation}
To use the closed formulas (\ref{eq:F01C}) and
(\ref{eq:F02C}) for $S_0$ and $S_1$ here, 
we need to  switch
from the $x$-coordinate to the $t$-coordinate.
Using the change of variable formulas
 (\ref{eq:Catalan x}) and (\ref{eq:Catalan t}),
we have
$$
dx = dz\left(1-\frac{1}{z^2}\right)
=-\frac{8t}{(t^2-1)^2}\;dt,
$$
hence 
\begin{equation*}
\begin{aligned}
\frac{d}{dx} &= \frac{z^2}{z^2-1}\;\frac{d}{dz}
=-\frac{(t^2-1)^2}{8t}\;\frac{d}{dt},
\\
\frac{d^2}{dx^2} &= 
\frac{(t^2-1)^4}{64 t^2} \frac{d^2}{dt^2}
+\frac{(t^2-1)^2}{8t}
\left(\frac{d}{dt}\;\frac{(t^2-1)^2}{8t}
\right)\frac{d}{dt}
\\
&=
\frac{(t^2-1)^4}{64 t^2} \frac{d^2}{dt^2}
+\frac{(t^2-1)^3}{64t^3}(3t^2+1)\frac{d}{dt}.
\end{aligned}
\end{equation*}
Therefore, 
\begin{align}
\label{eq:S0'+x}
-(2S_0'+x)\frac{d}{dx}
&=
-\frac{(t^2-1)^2}{8t}\left(z-\frac{1}{z}\right)
\frac{d}{dt}=
-\frac{(t^2-1)^2}{8t}\;\frac{4t}{t^2-1}\;
\frac{d}{dt}
=\half\;(t^2-1)\frac{d}{dt},
\\
\label{eq:S1'}
S_1' &=\half \; \frac{d}{dx}F^C_{0,2}(z,z)
=\half \;
\frac{z^2}{z^2-1}\;\frac{d}{dz}(-\log(1-z^2))
=-\frac{(t^2-1)(t+1)^2}{16 t^2}.
\end{align}
Substituting (\ref{eq:S0'}),
(\ref{eq:S0'+x}) and (\ref{eq:S1'})
in (\ref{eq:S equation 2}), we obtain 
\begin{prop}
For $m\ge 0$, the Schr\"odinger equation
{\rm{(\ref{eq:Sch C})}}
is equivalent to a recursion formula
\begin{equation}
\label{eq:S recursion}
\frac{d}{dt}S_{m+1}
= -\frac{(t^2-1)^3}{32 t^2}
\left[
\frac{d^2}{dt^2}S_m + 
\sum_{\substack{a+b=m+1\\a,b\ge 2}}
\frac{dS_a}{dt}\; \frac{dS_b}{dt}
\right]
-\frac{(t^2-1)^2}{16t^3}(2t^2+t+1)\frac{d}{dt}S_m.
\end{equation} 
\end{prop}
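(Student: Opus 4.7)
The plan is to verify the proposition by direct computation, since equation (\ref{eq:S equation 2}) is already the $\hbar^{m+1}$-coefficient of the $\hbar$-expansion of the Schr\"odinger equation (\ref{eq:Sch C}) after substituting the WKB ansatz $Z^C = \exp(F)$ with $F = \sum_{m\geq 0}\hbar^{m-1}S_m$. So what remains is to transform (\ref{eq:S equation 2}) into (\ref{eq:S recursion}) via the change of variables from $x$ to $t$. The $\hbar^0$-coefficient of the expanded Schr\"odinger equation reduces, by (\ref{eq:S0'}) and (\ref{eq:const terms}), to the identity $z^2 - xz + 1 = 0$, i.e., the Lagrangian immersion (\ref{eq:Catalan immersion}) itself, and therefore imposes no further constraint beyond the already-specified $S_0$. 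The equivalence will then be read off from the fact that every step in the transformation is reversible.

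First I would combine $S_0' = -z$ with $x = z + 1/z$ to compute $-(2S_0' + x) = z - 1/z = 4t/(t^2-1)$, and then use $d/dx = -(t^2-1)^2/(8t)\,d/dt$ to show that the operator $-(2S_0'+x)\,d/dx$ on the left of (\ref{eq:S equation 2}) reduces to a constant multiple of $(t^2-1)\,d/dt$ (with the sign to be carefully tracked). Dividing through by this operator isolates $dS_{m+1}/dt$ with a prefactor proportional to $1/(t^2-1)$, which will be responsible for lowering the powers of $(t^2-1)$ in the resulting recursion by one.

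Next I would translate each term on the right-hand side of (\ref{eq:S equation 2}) from $x$-derivatives into $t$-derivatives. The second-derivative formula already derived yields
\[
\frac{d^2 S_m}{dx^2} = \frac{(t^2-1)^4}{64\,t^2}\,\frac{d^2 S_m}{dt^2} + \frac{(t^2-1)^3(3t^2+1)}{64\,t^3}\,\frac{d S_m}{dt},
\]
while $S_1' = -(t^2-1)(t+1)^2/(16 t^2)$ from (\ref{eq:S1'}) gives
\[
2 S_1'\,\frac{dS_m}{dx} = \frac{(t^2-1)^3(t+1)^2}{64\,t^3}\,\frac{dS_m}{dt},
\]
and the quadratic sum over $a+b=m+1$ transforms uniformly with the factor $(t^2-1)^4/(64\,t^2)$. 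Collecting the $dS_m/dt$-coefficients produces $(3t^2+1)+(t+1)^2 = 2(2t^2+t+1)$, which after multiplication by the global prefactor matches exactly the second term $-\,(t^2-1)^2(2t^2+t+1)/(16\,t^3) \cdot dS_m/dt$ in (\ref{eq:S recursion}); the $d^2S_m/dt^2$ and quadratic-sum pieces collect to the bracketed term with coefficient $-(t^2-1)^3/(32\,t^2)$.

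The main obstacle is not conceptual but algebraic bookkeeping: one must track the exact powers of $(t^2-1)$ and of $t$, and the signs introduced by $dx/dt < 0$, through every step of the change of variables. In particular, the small polynomial identity $(3t^2+1)+(t+1)^2 = 2(2t^2+t+1)$ is what produces the somewhat asymmetric factor $2t^2 + t + 1$ in the final recursion, and verifying it along with the uniform cancellation of one power of $(t^2-1)$ against the prefactor from the left-hand operator is the single most delicate part of the computation.
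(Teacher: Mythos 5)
Your proposal is correct and follows essentially the same route as the paper: expand $\log Z^C$ in powers of $\hbar$, note that the $\hbar^0$-coefficient is just the spectral curve relation $z^2-xz+1=0$, and convert the $\hbar^{m+1}$-coefficient (\ref{eq:S equation 2}) to the $t$-variable via (\ref{eq:S0'}), (\ref{eq:S0'+x}), and (\ref{eq:S1'}), with the identity $(3t^2+1)+(t+1)^2=2(2t^2+t+1)$ assembling the first-order term. Your caution about the sign coming from $dx/dt<0$ is well placed: the correct value is $-(2S_0'+x)\frac{d}{dx}=-\frac{1}{2}(t^2-1)\frac{d}{dt}$ (the final equality displayed in (\ref{eq:S0'+x}) carries a sign typo), and it is this negative constant that, upon division, produces exactly the signs appearing in (\ref{eq:S recursion}).
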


\begin{rem}
The recursion equation 
(\ref{eq:S recursion}) is a direct consequence
of the principal specialization applied to 
(\ref{eq:FC recursion}).
\end{rem}

The proof of (\ref{eq:S recursion}) is given 
in Appendix~\ref{app:Catalan}.

\section{Single Hurwitz numbers}
\label{sect:Hurwitz}

The A-model problem that we are interested 
in this section is the automorphism-weighted
count $H_{g,n}(\mu_1,\dots,\mu_n)$
of the number of the topological types of
meromorphic functions $f:C\lrar \bP^1$
of a nonsingular complete irreducible
 algebraic curve $C$ of genus $g$ that has 
 $n$ labeled poles of orders $(\mu_1,\dots,\mu_n)$
 such that all other critical points of $f$ than 
 these poles are unlabeled simple ramification 
 points. Let $r$ denote the number of such
 simple ramification points. Then 
 the Riemann-Hurwitz formula gives
 \begin{equation}
 \label{eq:RH}
 r=2g-2+n+\sum_{i=1}^n \mu_i.
 \end{equation}
 A remarkable formula due to 
Ekedahl, Lando, Shapiro and Vainshtein
\cite{ELSV, GV1, OP1} 
relates Hurwitz numbers and Gromov-Witten
invariants.  For genus
$g\ge 0$ and the number $n\ge 1$ of the poles
 subject to the 
stability condition
 $2g - 2 +n >0$, the ELSV formula states that
\begin{multline}
\label{eq:ELSV}
H_{g,n}(\mu_1,\dots,\mu_n)=
	 \prod_{i=1}^{n}
	 \frac{\mu_i ^{\mu_i}}{\mu_i!}
	 \int_{\overline{\mathcal{M}}_{g,n}} 
	 \frac{\Lambda_g^{\vee}(1)}
	 {\prod_{i=1}^{n}\big( 1-\mu_i \psi_i\big)}
	 \\
	 =
	 \sum_{j=0}^g (-1)^j
	 \sum_{k_1,\dots,k_n\ge 0}
	 \la \tau_{k_1}\cdots \tau_{k_{n}}c_{j}(\bE)
	 \ra_{g,n} 
	   \prod_{i=1}^{n}
	 \frac{\mu_i ^{\mu_i+k_i}}{\mu_i!},
\end{multline}
where $\overline{\mathcal{M}}_{g,n}$ is 
the Deligne-Mumford moduli stack of stable algebraic
curves of genus $g$ with $n$ distinct smooth
marked points, 
$\Lambda_g^{\vee}(1)=1-c_{1}(\bE)+\cdots +(-1)^g
c_{g}(\bE)$ is the alternating sum of the
Chern classes of the Hodge bundle $\bE$  
 on $\overline{\mathcal{M}}_{g,n}$, 
$\psi_i$ is the $i$-th tautological cotangent class, 
and 
\begin{equation}
\label{eq:LHI}
\la \tau_{k_1}\cdots \tau_{k_n}c_{j}(\bE)\ra_{g,n}
= \int_{\overline{\mathcal{M}}_{g,n}}\psi_1^{k_1}\cdots\psi_n ^{k_n}c_j(\bE)
\end{equation}
is the linear Hodge integral,
which is $0$ unless $k_1+\cdots+k_n +j=3g-3+n$.
Although the Deligne-Mumford stack 
$\Mbar_{g,n}$
is not defined for
$2-2g-n <0$,  single Hurwitz numbers 
are well defined for \emph{unstable} geometries
 $(g,n) = (0,1)$ and $(0,2)$, and their values are
\begin{equation}
\label{eq:unstableHurwitz}
H_0((d)) = \frac{d^{d-3}}{(d-1)!}
=\frac{d^{d-2}}{d!}
\qquad \text{and}\qquad
H_0((\mu_1,\mu_2)) =
 \frac{1}{\mu_1+\mu_2}\cdot
\frac{\mu_1^{\mu_1}}{\mu_1!}\cdot \frac{\mu_2^{\mu_2}}{\mu_2!}.
\end{equation}
The ELSV formula remains valid for unstable cases
if we define
\begin{align}
\label{eq:01Hodge}
&\int_{\overline{\cM}_{0,1}} \frac{\Lambda_0 ^\vee (1)}{1-d\psi}
=\frac{1}{d^2},\\
\label{eq:02Hodge}
&\int_{\overline{\cM}_{0,2}} 
\frac{\Lambda_0 ^\vee (1)}{(1-\mu_1\psi_1)(1-\mu_2\psi_2)}
=\frac{1}{\mu_1+\mu_2}.
\end{align}

The ELSV formula predicts that the single
Hurwitz numbers exhibit the polynomial behavior 
in terms of their Laplace transform. 
Following \cite{EMS}, 
and modifying our choice of the 
coordinates slightly, we define
\begin{multline}
 \label{eq:FgnH}
 F_{g,n}^H(t_1,\dots,t_n)
 =
 \sum_{\mu\in\bZ_+^n} 
 H_{g,n}(\mu_1,\dots,\mu_n)
e^{-\left(
\mu_1w_1+\cdots+\mu_nw_n
\right)}
\\
=
\sum_{\mu\in\bZ_+^n}
\sum_{k_1+\cdots+k_n \le 3g-3+n} \la 
\tau_{k_1}\cdots \tau_{k_n}\Lambda_g^{\vee}(1)
\ra_{g,n}
\left(\prod_{i=1}^{n}
\frac{\mu_i^{\mu_i+k_i}}{\mu_i!}
\right)
e^{-(
\mu_1w_1+\cdots+\mu_n w_n
)}
\\
=
\sum_{k_1+\cdots+k_n \le 3g-3+n} \la 
\tau_{k_1}\cdots \tau_{k_n}\Lambda_g^{\vee}(1)
\ra_{g,n}
\,\,\prod_{i=1}^{n}
\hxi_{k_i}(t_i),
 \end{multline}
 where the $\hxi$-functions are given by
 \begin{equation}
\label{eq:xi}
\hxi_k(t) = 
\sum_{\mu=1}^\infty 
\frac{\mu^{\mu+k}}{\mu!}\;
e^{-\mu w}
=
\sum_{\mu=1}^\infty 
\frac{\mu^{\mu+k}}{\mu!}\;x^\mu
\end{equation}
for $k\ge 0$, and $x=e^{-w}$. Although 
$\hxi_k$ are complicated functions in $x$, their
behavior is simple in terms of $\hxi_0$. So 
we introduce
\begin{equation}
\label{eq:t and z}
t = 1 + 
\sum_{\mu=1}^\infty 
\frac{\mu^{\mu}}{\mu!}\;x^\mu
\qquad {\text{and}}
\qquad 
z =  
\sum_{\mu=1}^\infty 
\frac{\mu^{\mu-1}}{\mu!}\;x^\mu.
\end{equation}
Then by the Lagrange inversion formula, we have
\begin{equation}
\label{eq:Lambert}
x=z e^{-z}\qquad
{\text{and}}
\qquad
z = \frac{t-1}{t},
\end{equation}
and moreover, 
each $\hxi_k(t)$ is a polynomial 
of degree $2k+1$ in $t$ for every $k\ge 0$,
recursively defined by
\begin{equation}
\label{eq:xi recursion}
\hxi_{k+1}(t)=t^2(t-1)\frac{d}{dt}\hxi_k(t).
\end{equation}
This is because 
\begin{equation}
\label{eq:wxt}
-\frac{d}{dw}=x\frac{d}{dx}=t^2(t-1)\frac{d}{dt}.
\end{equation}
Therefore, if $2g-2+n>0$, then 
$ F_{g,n}^H(t_1,\dots,t_n)$ is a symmetric
polynomial of degree $6g-6+3n$, and satisfies
\begin{equation}
\label{eq:FgnH zero}
\left.F_{g,n}^H(t_1,\dots,t_n)\right|_{t_j=1}
=0
\end{equation}
for every $j=1,2\dots,n$.

The computation of \cite{DMSS}
adjusted to our current convention of this paper,
 \begin{equation}
 \label{eq:F01H}
 F_{0,1}^H(t) 
= 
\half\left(1-\frac{1}{t^2}\right)
=z-\half z^2
\end{equation}
and
\begin{equation}
\label{eq:F02H}
F_{0,2}^H(t_1,t_2) 
=\log\left(
\frac{z_1-z_2}
{x_1-x_2}
\right)
-(z_1+z_2),
\end{equation}
determines the Lagrangian immersion
$$
\iota :\Sigma = \bC^* \lrar T^*\bC^* 
$$
by
\begin{equation}
\label{eq:LI for H}
\begin{cases}
x = z e^{-z}\\
y=z,
\end{cases}
\end{equation}
and
$$
W_{0,2}^H(t_1,t_2)=
d_1\tensor d_2 F_{0,2}^H(t_1,t_2)
= \frac{dt_1\cdot dt_2}{(t_1-t_2)^2}
-\frac{dx_1\cdot dx_2}{(x_1-x_2)^2}.
$$
Here the tautological $1$-form on
$T^*\bC^*$ is chosen to be 
\begin{equation}
\label{eq:H eta}
\eta = y\;\frac{dx}{x}.
 \end{equation}
It is consistent with 
$$
W_{0,1}^H(t) = dF_{0,1}^H(t) = z \;\frac{dx}{x}
=\iota^*\eta.
$$
We also note that
\begin{equation}
\label{eq:F01H zero}
\left.F_{0,1}^H(t)\right|_{t=1}=0,
\end{equation}
and 
\begin{equation}
\label{eq:F02H zero}
\left.F_{0,2}^H(t_1,t_2)\right|_{t_j=1}=0,\qquad 
j=1 \text{ or }2.
\end{equation}
The latter equality holds because $t_2=1
\Longrightarrow z_2=0\Longrightarrow
x_2=0$, and hence from
(\ref{eq:F02H}) we have
$$
F_{0,2}^H(t_1,1) = \log\frac{z_1}{x_1}-z_1
=0.
$$

Since
$$
\left.\frac{x_1-x_2}{z_1-z_2}\right|_{z_1=z_2=z}
= (1-z) e^{-z},
$$
the diagonal value of $F_{0,2}^H$ is calculated
as
\begin{equation}
\label{eq:F02H diag}
F_{0,2}^H(t,t) = 
\log\left(\frac{e^{z}}{-z} 
\right)-2z
=-z-\log(1-z) = \frac{1-t}{t}+\log t.
\end{equation}

The single Hurwitz numbers 
$H_{g,n}(\vec{\mu})$ satisfy the
\emph{cut-and-join} equation
\cite{GJ, V}
\begin{multline}
\label{eq:CAJ}
\left(2g-2+n+\sum_{i=1}^n \mu_i
\right)H_{g,n}(\mu_1,\dots,\mu_n)
=
\half\sum_{i\ne j} (\mu_i+\mu_j)
H_{g,n-1}(\mu_i+\mu_j,\mu_{[\hat{i},\hat{j}]})
\\
+
\half
\sum_{i=1}^n \sum_{\a+\b=\mu_i}
\a\b
\left[
H_{g-1,n+1}(\a,\b,\mu_{[\hat{i}]})
+
 \sum_{\substack{g_1+g_2=g\\
I\sqcup J=[\hat{i}]}}
H_{g_1,|I|+1}(\a,\mu_I)H_{g_2,|J|+1}(\b,\mu_J)
\right],
\end{multline}
where we use the convention for indices as in 
Section~\ref{sect:Catalan}.
The Laplace transform of (\ref{eq:CAJ}) is the
polynomial recursion of \cite{MZ} and takes the 
form
\begin{multline}
\label{eq:FgnH recursion}
\left(
2g-2+n+\sum_{i=1}^n t_i(t_i-1)
\frac{\partial}{\partial t_i}
\right)
F_{g,n}^H(t_1,\dots,t_n)
\\
=
\half\sum_{i\ne j}
\frac{t_it_j}{t_i-t_j}
\left(
t_i^2(t_i-1)^2 \frac{\partial}{\partial t_i}
F_{g,n-1}^H(t_{[\hat{j}]})
-
t_j^2(t_j-1)^2 \frac{\partial}{\partial t_j}
F_{g,n-1}^H(t_{[\hat{i}]})
\right)
\\
-\sum_{i\ne j}
t_i^3(t_i-1)\frac{\partial}{\partial t_i}
F_{g,n-1}^H(t_{[\hat{j}]})
\\
+
\half \sum_{i=1}^n 
\left(t_i^2(t_i-1)\right)^2
\left.
\frac{\partial^2}{\partial u_1\partial u_2}
F_{g-1,n+1}^H (u_1,u_2,t_{[\hat{i}]})
\right|_{u_1=u_2=t_i}
\\
+
\half \sum_{i=1}^n 
\left(t_i^2(t_i-1)\right)^2
\sum_{\substack{g_1+g_2=g\\
I\sqcup J=[\hat{i}]}} ^{\text{stable}}
\frac{\partial}{\partial t_i}
F_{g_1,|I|+1}^H(t_i,t_I)\cdot
\frac{\partial}{\partial t_i}
F_{g_1,|J|+1}^H(t_i,t_J).
\end{multline}

It is proved in \cite{EMS} that
the $n$-fold exterior differentiation of 
the above formula is exactly the Eynard-Orantin 
recursion, as predicted by Bouchard and
Mari\~no \cite{BM}. Thus we obtain a natural
integration of the Eynard-Orantin recursion by
taking the Laplace transform of the A-model
quantity again, which is the single Hurwitz umber
$H_{g,n}(\vec{\mu})$.

\begin{thm}
\label{thm:SmH}
Let us define
\begin{equation}
\label{eq:SmH}
S_m^H (t)
= \sum_{2g-2+n=m-1}\frac{1}{n!}F_{g,n}^H
(t,\dots,t).
\end{equation}
Then $S_m^H$'s satisfy
the following second order differential equations:
\begin{equation}
\label{eq:SmH equation}
\left(e^{mw}\frac{d}{dw}e^{-mw}\right)
S_{m+1}^H
=
-\half
\left[
\frac{d^2}{dw^2}S_m^H
+\sum_{a+b=m+1}\frac{dS_a^H}{dw}\cdot 
\frac{dS_b^H}{dw}+\frac{d}{dw}S_m^H
\right].
\end{equation}
Here the $w$-dependence of $t$ is given by
$x=e^{-w}$ and {\rm(\ref{eq:t and z})}.
We also note that $S_m^H(t)$ is a
polynomial of degree 3m-3 for every $m\ge 2$,
and for all values of $m$ we have
\begin{equation}
\label{eq:SmH zero}
\left.S_m^H(t)\right|_{t=1} =0.
\end{equation}
\end{thm}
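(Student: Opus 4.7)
The plan is to derive (\ref{eq:SmH equation}) as a direct consequence of the principal specialization of the polynomial recursion (\ref{eq:FgnH recursion}), in exact parallel with the Catalan derivation of Section~\ref{sect:CSch} and Appendix~\ref{app:Catalan}. For each stable $(g,n)$ I would first set $t_1=\cdots=t_n=t$ in (\ref{eq:FgnH recursion}). By symmetry of $F_{g,n}^H$ the first-derivative sum collapses on the diagonal to $\frac{d}{dt}$ acting on $F_{g,n}^H(t,\ldots,t)$; each $\sum_{i\ne j}$ sum contributes $n(n-1)$ copies of a single expression, the apparent pole $\frac{t_it_j}{t_i-t_j}$ being resolved by a second-order Taylor expansion of the antisymmetric bracket along the diagonal; the $\sum_i$ second-order derivative term contributes $n$ copies of $\left.\partial_{u_1}\partial_{u_2}F_{g-1,n+1}^H(u_1,u_2,t,\ldots,t)\right|_{u_1=u_2=t}$; and the stable product sum becomes a binomial-weighted diagonal convolution via $\sum_{I\sqcup J=[\hat{i}]}1=\binom{n-1}{|I|}$.

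Next I would multiply the specialized identity by $\frac{1}{n!}$ and sum over all $(g,n)$ with $2g-2+n=m\ge 1$. On the left the factor $(2g-2+n)=m$ comes out of the sum, producing a combination of $mS_{m+1}^H$ and a multiple of $\frac{d}{dt}S_{m+1}^H$; the coordinate change (\ref{eq:wxt}) then assembles this into $-e^{mw}\frac{d}{dw}e^{-mw}S_{m+1}^H$. On the right, the combinatorial weights $\frac{n(n-1)}{n!}$ and $\frac{1}{n!}\binom{n-1}{k}$ rearrange so that the two $\sum_{i\ne j}$ contributions and the $\sum_i$ second-derivative term assemble into $\frac{1}{2}\bigl[\frac{d^2}{dw^2}S_m^H+\frac{d}{dw}S_m^H\bigr]$, while the stable-restricted product sum yields the stable part of $\frac{1}{2}\sum_{a+b=m+1}\frac{dS_a^H}{dw}\cdot\frac{dS_b^H}{dw}$; the missing unstable contributions involving $S_0^H=F_{0,1}^H$ and $S_1^H=\frac{1}{2}F_{0,2}^H(t,t)$ are supplied by the explicit term $\sum_{i\ne j}t_i^3(t_i-1)\partial_{t_i}F_{g,n-1}^H$ and the singular piece of the first $\sum_{i\ne j}$ sum, which is the standard mechanism by which the $W_{0,1}$ and $W_{0,2}$ data enter the recursion. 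The base case $m=0$, which is not covered by the stable recursion, is checked directly from (\ref{eq:F01H}) and (\ref{eq:F02H diag}). The polynomial degree $3m-3$ of $S_m^H$ for $m\ge 2$ is inherited from the degree $6g-6+3n=3(m-1)$ of each stable $F_{g,n}^H(t_1,\ldots,t_n)$ implied by (\ref{eq:FgnH}) and (\ref{eq:xi recursion}); and the vanishing (\ref{eq:SmH zero}) is term-by-term immediate from (\ref{eq:FgnH zero}), (\ref{eq:F01H zero}) and (\ref{eq:F02H zero}).

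The hard part is the diagonal-limit computation of the first line of (\ref{eq:FgnH recursion}): the antisymmetric bracket $t_i^2(t_i-1)^2\partial_{t_i}F_{g,n-1}^H(t_{[\hat{j}]})-t_j^2(t_j-1)^2\partial_{t_j}F_{g,n-1}^H(t_{[\hat{i}]})$ must be Taylor-expanded in $(t_i-t_j)$ to second order so as to extract simultaneously a $\frac{d^2}{dw^2}$ and a $\frac{d}{dw}$ contribution along the diagonal, and these must then be combined with the explicit second $\sum_{i\ne j}$ term and with the stable-to-full matching for the convolution before the conversion to $w$-derivatives via (\ref{eq:wxt}) produces a clean expression. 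Intermediate quantities are not polynomials in either $t$ or $w$ alone, and cleanliness emerges only after all prefactors combine. Since the combinatorics of (\ref{eq:FgnH recursion}) and (\ref{eq:FC recursion}) are completely parallel, the most economical route is to model the computation verbatim on Appendix~\ref{app:Catalan}.
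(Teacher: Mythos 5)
Your overall strategy is the paper's: specialize (\ref{eq:FgnH recursion}) to the diagonal using the symmetric-function derivative identities and l'H\^opital's rule for the $\frac{t_it_j}{t_i-t_j}$ factor, apply $\sum_{2g-2+n=m}\frac{1}{n!}$, restore the unstable $(0,1)$ and $(0,2)$ contributions, and convert to $w$-derivatives via (\ref{eq:wxt}); the degree count $3(m-1)$ and the vanishing at $t=1$ follow exactly as you say.

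There is, however, one concrete point where your accounting would not close as written. The left-hand side of (\ref{eq:FgnH recursion}) carries the operator $t_i(t_i-1)\frac{\partial}{\partial t_i}$, so after specialization and summation it reads $\left(m+t(t-1)\frac{d}{dt}\right)S_{m+1}^H$, whereas $-e^{mw}\frac{d}{dw}e^{-mw}=m-\frac{d}{dw}=m+t^2(t-1)\frac{d}{dt}$ by (\ref{eq:wxt}); the coordinate change alone does \emph{not} bridge the gap, which is the term $t(t-1)^2\frac{d}{dt}S_{m+1}^H$. That missing piece cannot be ``supplied by'' the explicit term $\sum_{i\ne j}t_i^3(t_i-1)\partial_{t_i}F^H_{g,n-1}$ nor by the diagonal limit of the first $\sum_{i\ne j}$ sum: every term on the right of (\ref{eq:FgnH recursion}) involves only $F^H_{g',n'}$ with $2g'-2+n'\le m-1$, hence only $S^H_k$ with $k\le m$, and nothing containing $\frac{d}{dt}S_{m+1}^H$ can arise there. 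It comes instead from the $a=0$ and $b=0$ entries of the full convolution: using $\frac{dS_0^H}{dt}=t^{-3}$ one finds $\half\cdot 2\,\frac{dS_0^H}{dw}\frac{dS_{m+1}^H}{dw}=(t^2(t-1))^2\,t^{-3}\,\frac{dS_{m+1}^H}{dt}=t(t-1)^2\frac{dS_{m+1}^H}{dt}$, and moving this to the left upgrades the operator to $m+t^2(t-1)\frac{d}{dt}$. The remaining unstable entries $2\frac{dS_1^H}{dw}\frac{dS_m^H}{dw}$ are what get matched against the two explicit first-derivative contributions (the $\frac{d}{dt}\big(t^2(t-1)^2\big)$ piece from the diagonal limit and the $-t^3(t-1)$ term), together with the correction $\left(t^2(t-1)\frac{d}{dt}\right)^2-(t^2(t-1))^2\frac{d^2}{dt^2}=t^3(t-1)(3t-2)\frac{d}{dt}$ needed to turn $\frac{d^2}{dt^2}$ into $\frac{d^2}{dw^2}$; after cancellation these leave exactly the lone $\frac{d}{dw}S_m^H$ inside the bracket of (\ref{eq:SmH equation}). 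With this redistribution of the unstable terms your argument becomes the paper's proof.
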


The proof  is similar to the case of 
the Catalan numbers (Section~\ref{sect:CSch}
and Appendix~\ref{app:Catalan}). 
First we compute
the principal specialization of the differential
equation (\ref{eq:FgnH recursion}).
We then assemble them according to
(\ref{eq:SmH}). By adjusting the unstable 
geometry terms $(g,n) = (0,1)$ and $(0,2)$,
we obtain (\ref{eq:SmH equation}).
However, due to the difference between 
the cut-and-join equation and the
edge-shrinking operation of 
Section~\ref{sect:Catalan}, the resulting
equation becomes quite different.

Choose $m\ge 2$ and $(g,n)$ so that
$2g-2+n = m$. Then the principal specialization
of the left-hand side of (\ref{eq:FgnH recursion})
is
$$
\left(m+t(t-1)\frac{d}{dt}\right)
 F_{g,n}^H(t,\dots,t).
$$
The first line of the right-hand side of 
(\ref{eq:FgnH recursion})
gives
\begin{multline*}
\half t^2 \sum_{i\ne j}
\left.
\frac{\partial}{\partial t_i}
\left(
t_i^2(t_i-1)^2 \frac{\partial}{\partial t_i}
F_{g,n-1}^H(t_i,t,\dots,t)
\right)
\right|_{t_i=t}
\\
=
\frac{n(n-1)}{2}t^2\frac{d}{dt}
\big(t^2(t-1)^2\big)\cdot
\frac{1}{n-1}\frac{d}{dt}
F_{g,n-1}^H(t,\dots,t)
\\
+
\frac{n(n-1)}{2} t^4(t-1) ^2
\left.
\frac{\partial^2}{\partial u^2}
F_{g,n-1}^H(u,t,\dots,t)
\right|_{u=t}
\\
=
\half n!\;t^2\frac{d}{dt}
\big(t^2(t-1)^2\big)\cdot
\frac{1}{(n-1)!}\frac{d}{dt}
F_{g,n-1}^H(t,\dots,t)
\\
+
\half n!(n-1) t^4(t-1) ^2
\frac{1}{(n-1)!}
\left.
\frac{\partial^2}{\partial u^2}
F_{g,n-1}^H(u,t,\dots,t)
\right|_{u=t}.
\end{multline*}
The second line of the right-hand side of 
(\ref{eq:FgnH recursion}) becomes
$$
- n!\; t^3(t-1) \frac{1}{(n-1)!}
\frac{d}{dt}F_{g,n-1}^H(t,\dots,t).
$$
The third line simply produces
$$
\frac{n!}{2}\big(t^2(t-1)\big)^2
\frac{1}{(n+1)!}(n+1)n
\left.
\frac{\partial^2}{\partial u_1\partial u_2}
F_{g-1,n+1}^H(u_1,u_2,t\dots,t)
\right|_{u_1=u_2=t}.
$$
Finally, since the set partition 
becomes the partition of numbers because
all variables are set to be equal,
 the fourth line of the right-hand side of 
(\ref{eq:FgnH recursion}) gives
$$
\frac{n!}{2} \big(t^2(t-1)\big)^2
\sum_{\substack{g_1+g_2=g\\
n_1+n_2=n-1}} ^{\text{stable}}
\frac{1}{(n_1+1)!}
\frac{d}{dt}
F_{g_1,n_1+1}^H(t,\dots,t)\cdot
\frac{1}{(n_2+1)!}
\frac{d}{dt}
F_{g_1,n_2+1}^H(t,\dots,t).
$$
We now apply the operation 
$$
\sum_{2g-2+n=m} \frac{1}{n!}
$$
to the above terms. The left-hand
side becomes 
$$
\left(m+t(t-1)\frac{d}{dt}\right)S_{m+1}^H.
$$
The right-hand side terms are re-assembled
into the sum of $(g',n')$ subject to
$2g'-2+n'=m-1$, following the topological 
structure of the recursion (\ref{eq:FgnH recursion}).
Noticing that unstable geometers are contained
only in $S_0^H$ and $S_1^H$, we obtain
\begin{multline}
\label{eq:SmH recursion}
\left(m+t(t-1)\frac{d}{dt}\right)S_{m+1}^H(t)
\\
=
\half 
\left(t^2 \frac{d}{dt}\big(t^2(t-1)^2\big)
\cdot \frac{d}{dt}S_m^H(t)
+ \big(t^2(t-1)\big)^2
\frac{d^2}{dt^2}S_m^H(t)
- 2t^3(t-1) \frac{d}{dt}S_m^H(t)
\right)
\\
+
\half
\big(t^2(t-1)^2\big)^2
\sum_{\substack{a+b=m+1\\
a,b\ge 2}}
\frac{d}{dt}S_a^H(t)\cdot \frac{d}{dt}S_b^H(t)
\\
=
\half \big(t^2(t-1)\big)^2
\left( 
\frac{d^2}{dt^2}S_m^H(t)
+ \sum_{\substack{a+b=m+1\\
a,b\ge 2}}
\frac{d}{dt}S_a^H(t)\cdot \frac{d}{dt}S_b^H(t)
\right)
+
2t^3(t-1)^2 \frac{d}{dt}S_m^H(t).
\end{multline}

\begin{prop}
The functions $S_m(t)$ are recursively
determined by
\begin{align}
\label{eq:S0H}
S_0^H(t) &= \half\left(1-\frac{1}{t^2}\right),
\\
\label{eq:S1H}
S_1^H(t) &=\half\left(\frac{1-t}{t}+\log t \right),
\end{align}
and
\begin{multline}
\label{eq:SmH integral recursion}
S_{m+1}^H(t) = \left(\frac{t-1}{t}\right)^{-m}
\int_{1} ^t
\Bigg[
\half t^{3-m} (t-1)^{m+1}
\left( 
\frac{d^2}{dt^2}S_m^H(t)
+ \sum_{\substack{a+b=m+1\\
a,b\ge 2}}
\frac{d}{dt}S_a^H(t)\cdot \frac{d}{dt}S_b^H(t)
\right)
\\
+
2t^{2-m}(t-1)^{m+1} \frac{d}{dt}S_m^H(t)
\Bigg]dt.
\end{multline}
\end{prop}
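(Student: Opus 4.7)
The plan is to verify the two base cases directly from the known closed forms for $F_{0,1}^H(t)$ and $F_{0,2}^H(t,t)$, and then to obtain (\ref{eq:SmH integral recursion}) from the ODE (\ref{eq:SmH recursion}) by integrating with a suitable integrating factor. For the base cases, the condition $2g-2+n=m-1$ with $g\ge 0$ and $n\ge 1$ forces $(g,n)=(0,1)$ when $m=0$ and $(g,n)=(0,2)$ when $m=1$. Substituting (\ref{eq:F01H}) and (\ref{eq:F02H diag}) into the definition (\ref{eq:SmH}) then yields (\ref{eq:S0H}) and (\ref{eq:S1H}) at once.

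For $m\ge 1$ I would read (\ref{eq:SmH recursion}) as a linear inhomogeneous first-order ODE in $S_{m+1}^H(t)$ whose right-hand side $R_m(t)$ involves only $S_a^H$ with $a\le m$ (the constraint $a,b\ge 2$ in the convolution sum forces $a,b\le m-1$, so the induction is well-founded). Dividing by $t(t-1)$ and using
\begin{equation*}
\frac{m}{t(t-1)} \;=\; m\left(\frac{1}{t-1}-\frac{1}{t}\right) \;=\; \frac{d}{dt}\log\left(\frac{t-1}{t}\right)^{m},
\end{equation*}
I would identify the integrating factor $\mu(t)=\bigl((t-1)/t\bigr)^m$ and recast the ODE as
\begin{equation*}
\frac{d}{dt}\bigl[\mu(t)\,S_{m+1}^H(t)\bigr] \;=\; \frac{(t-1)^{m-1}}{t^{m+1}}\,R_m(t).
\end{equation*}
Distributing the prefactor $(t-1)^{m-1}/t^{m+1}$ across the two summands $\half\bigl(t^2(t-1)\bigr)^2[\cdots]$ and $2t^3(t-1)^2\,dS_m^H/dt$ of $R_m(t)$ produces exactly the two terms inside the bracket of (\ref{eq:SmH integral recursion}). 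Integrating from $1$ to $t$ and multiplying by $\mu(t)^{-1}$ yields the stated formula.

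The main, and essentially the only, subtlety is the vanishing of the boundary term at $t=1$. Here I would appeal to (\ref{eq:SmH zero}), which asserts $S_m^H(1)=0$ for every $m$: the cases $m=0,1$ are visible from the formulas just established, and for $m\ge 2$ each summand $F_{g,n}^H(t,\ldots,t)/n!$ vanishes at $t=1$ by (\ref{eq:FgnH zero}). Consequently $\mu(t)S_{m+1}^H(t)\to 0$ as $t\to 1$, so no constant of integration is produced. Moreover, writing $S_m^H(t)=(t-1)q_m(t)$ with $q_m$ regular at $1$, the factor $(t-1)^{m+1}$ sitting in front of $dS_m^H/dt$ and its square ensures that the integrand vanishes to high order at $t=1$, so the integral converges and the resulting $S_{m+1}^H$ is again regular at $1$, consistent with the polynomial degree bound of Theorem~\ref{thm:SmH} and closing the induction.
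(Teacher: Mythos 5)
Your proposal is correct and follows essentially the same route as the paper: the integrating factor $\bigl((t-1)/t\bigr)^m$ you extract from $\frac{m}{t(t-1)}=\frac{d}{dt}\log\bigl(\frac{t-1}{t}\bigr)^m$ is precisely the paper's operator identity $m+t(t-1)\frac{d}{dt}=t(t-1)\bigl(\frac{t-1}{t}\bigr)^{-m}\frac{d}{dt}\bigl(\frac{t-1}{t}\bigr)^{m}$, and the integration from $1$ to $t$ with the boundary term killed by (\ref{eq:SmH zero}) is exactly how the paper concludes. Your explicit check that the convolution sum only involves $S_a^H$ with $a\le m-1$ and your remark on the order of vanishing of the integrand at $t=1$ are welcome additions that the paper leaves implicit.
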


\begin{proof}
As a differential operator,
$$
m+t(t-1)\frac{d}{dt} = 
t(t-1)\left(\frac{t-1}{t}\right)^{-m}
\frac{d}{dt}
\left(\frac{t-1}{t}\right)^{m}.
$$
Therefore, (\ref{eq:SmH recursion}) is equivalent
to
\begin{multline*}
\left[
\left(\frac{t-1}{t}\right)^{-m}
\frac{d}{dt}
\left(\frac{t-1}{t}\right)^{m}
\right]S_{m+1}^H(t)
\\
=
\half t^3(t-1)
\left( 
\frac{d^2}{dt^2}S_m^H(t)
+ \sum_{\substack{a+b=m+1\\
a,b\ge 2}}
\frac{d}{dt}S_a^H(t)\cdot \frac{d}{dt}S_b^H(t)
\right)
+
2t^2(t-1) \frac{d}{dt}S_m^H(t).
\end{multline*}
On the right-hand side only $S_k^H(t)$ with 
$k\le m$ appear. Using the
fact of the zero  (\ref{eq:SmH zero}),
we obtain (\ref{eq:SmH integral recursion}).
\end{proof}

On the third line of (\ref{eq:SmH recursion})
the terms with $S_0^H$ and $S_1^H$ are
not included. More precisely, these omitted
terms are
$$
t(t-1)^2 \frac{d}{dt}S_{m+1}^H(t)
+\half t^2(t-1)^3  \frac{d}{dt}S_{m}^H(t).
$$
When we add these terms
 to (\ref{eq:SmH recursion}),
and adjust the second order differentiation as
$$
\left( t^2(t-1)\frac{d}{dt}\right)^2
= \big(t^2(t-1)\big)^2 \frac{d^2}{dt^2}
+ t^3(t-1)(3t-2)\frac{d}{dt},
$$
we finally obtain
\begin{multline}
\label{eq:SmH equation in t}
\left(m+t^2(t-1)\frac{d}{dt}\right)S_{m+1}^H(t)
\\
=
\half 
\left( 
\left(t^2(t-1)\frac{d}{dt}\right)^2 S_m^H(t)
+ \big(t^2(t-1)\big)^2\sum_{a+b=m+1}
\frac{dS_a^H(t)}{dt}\cdot \frac{dS_b^H(t)}{dt}
-
t^2(t-1) \frac{d}{dt}S_m^H(t)
\right).
\end{multline}
Then (\ref{eq:SmH equation}) follows from
(\ref{eq:SmH equation in t}) and (\ref{eq:wxt}).
This completes the proof 
Theorem~\ref{thm:SmH}.

\begin{thm}
\label{thm:Sch H}
Let us define
the  partition function for the single Hurwitz 
numbers  in a similar way:
\begin{equation}
\label{eq:ZH}
Z^H(t,\hbar)=
\exp\left(
\sum_{m=0} ^\infty S_m^H \hbar^{m-1}
\right)
=
\exp\left(
\sum_{g=0} ^\infty \sum_{n=1}^\infty
\frac{1}{n!}\; \hbar^{2g-2+n}F_{g,n}^H(t,\dots,t)
\right).
\end{equation}
Then
 we have
\begin{equation}
\label{eq:Sch H}
\left[
\frac{1}{2}\; \frac{\partial^2}{\partial w^2}
+\left(\frac{1}{2}+\frac{1}{\hbar}\right)
\frac{\partial}{\partial w} 
-\frac{\partial}{\partial \hbar}
\right] Z^H(t,\hbar)=0.
\end{equation}
\end{thm}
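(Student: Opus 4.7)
The strategy is to convert the Schr\"odinger equation into a hierarchy of recursive identities for the functions $S_m^H$, and then to recognize that hierarchy as precisely Theorem~\ref{thm:SmH}. Write $F=\sum_{m=0}^\infty S_m^H(t)\,\hbar^{m-1}$, so that $Z^H=\exp F$. Dividing the Schr\"odinger equation \eqref{eq:Sch H} by $Z^H$ and using $\partial_w Z^H=(\partial_w F)Z^H$, $\partial_w^2 Z^H=\bigl[(\partial_w F)^2+\partial_w^2 F\bigr]Z^H$, $\partial_\hbar Z^H=(\partial_\hbar F)Z^H$, the assertion becomes
\begin{equation*}
\tfrac12 (\partial_w F)^2 + \tfrac12 \partial_w^2 F + \Bigl(\tfrac12 + \tfrac{1}{\hbar}\Bigr)\partial_w F - \partial_\hbar F \;=\; 0 .
\end{equation*}

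Next I would expand each term as a formal Laurent series in $\hbar$ and equate the coefficient of $\hbar^{m-2}$ to zero. Writing primes for $d/dw$, the coefficient of $\hbar^{m-2}$ contributed by the five pieces is, respectively, $\tfrac12\sum_{a+b=m}S_a' S_b'$, $\tfrac12 S_{m-1}''$, $\tfrac12 S_{m-1}'$, $S_m'$, and $-(m-1)S_m^H$ (the last from $\partial_\hbar F = \sum_m (m-1)S_m^H\hbar^{m-2}$). Hence \eqref{eq:Sch H} is equivalent to the collection of identities
\begin{equation*}
S_m' - (m-1)S_m^H + \tfrac12 S_{m-1}'' + \tfrac12 S_{m-1}' + \tfrac12 \sum_{a+b=m} S_a' S_b' \;=\; 0 \qquad (m\ge 0),
\end{equation*}
where by convention $S_{-1}^H\equiv 0$.

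For $m\ge 1$, after renaming $m\mapsto m+1$, this is exactly equation \eqref{eq:SmH equation} of Theorem~\ref{thm:SmH}, noting that $e^{mw}(d/dw)e^{-mw}S_{m+1}^H = S_{m+1}' - m S_{m+1}^H$. Thus all of these hold by the theorem already proved. The only additional verification is the single ``leading'' identity at $m=0$, namely
\begin{equation*}
\tfrac12 (S_0')^2 + S_0' + S_0^H \;=\; 0 ,
\end{equation*}
which is the semiclassical/WKB constraint and recovers the Lagrangian immersion \eqref{eq:LI for H}. This I would check directly using \eqref{eq:S0H}: from $S_0^H=z-\tfrac12 z^2$ and the relation $w=-\log x=z-\log z$ one has $dz/dw = z/(z-1)$, so $S_0' = (1-z)\cdot z/(z-1) = -z$, and the identity reduces to $\tfrac12 z^2 - z + (z-\tfrac12 z^2)=0$.

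The only place one could worry is convergence/well-definedness of the coefficient comparison, but since $F$ is a formal power series in $\hbar$ with a single $\hbar^{-1}$ term and the Schr\"odinger operator preserves this grading, the matching is a purely algebraic identity of formal series. In particular, there is no real obstacle beyond correctly bookkeeping the unstable terms $S_0^H$ and $S_1^H$ so that the ``stable'' restriction $a,b\ge 2$ in the sum in \eqref{eq:SmH equation} matches the unrestricted sum $\sum_{a+b=m}S_a'S_b'$ appearing here; that accounting is precisely handled by the extra $\tfrac12 S_{m-1}'$ and $S_m'$ terms, which correspond respectively to the $a=1$ or $b=1$ contribution via $S_1'$ and to the $(0,m)$, $(m,0)$ contributions via $2S_0' S_m' = -2z\, S_m'$ combined with the $\hbar^{-1}\partial_w F$ and $\tfrac12 \partial_w F$ terms. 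Thus the theorem follows entirely from Theorem~\ref{thm:SmH} together with the explicit form of $S_0^H$.
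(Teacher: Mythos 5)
Your proof is correct and follows essentially the same route as the paper: both reduce \eqref{eq:Sch H} to the hierarchy \eqref{eq:SmH equation} of Theorem~\ref{thm:SmH} together with a separate verification of the leading-order constraint $\tfrac12 (S_0')^2+S_0'+S_0^H=0$ for $S_0^H$ (the paper sums the hierarchy against $\hbar^m$ and then isolates the $S_0^H$ terms, while you expand the PDE in powers of $\hbar$ and match coefficients --- the same computation read in opposite directions). One small slip in your closing remark: the sum in \eqref{eq:SmH equation} is already unrestricted over $a+b=m+1$ (the restriction $a,b\ge 2$ appears only in the intermediate formulas \eqref{eq:SmH recursion} and \eqref{eq:SmH integral recursion}), so no further bookkeeping of the unstable terms is needed beyond what Theorem~\ref{thm:SmH} already provides.
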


\begin{rem}
Eq.(\ref{eq:Sch H}) is a heat equation,
where $\hbar$ is considered as the
\emph{time} variable of the heat conduction. 
It determines the solution uniquely with the
``initial condition''
$$
\left.Z^H\big(t(w),\hbar\big)\right|_{\hbar\sim 0}
=\exp\left(\frac{1}{\hbar}S_0^H+S_1^H\right)
$$
given by (\ref{eq:S0H}) and (\ref{eq:S1H}).
\end{rem}

\begin{proof}
Let 
$$
F^H(t,\hbar)=
\sum_{m=0} ^\infty S_m^H \hbar^{m-1}
=
\sum_{g=0} ^\infty \sum_{n=1}^\infty
\frac{1}{n!}\; \hbar^{2g-2+n}F_{g,n}^H(t,\dots,t).
$$
In terms of $F^H$, (\ref{eq:Sch H}) is equivalent
to 
\begin{equation}
\label{eq:Sch FH}
\frac{\hbar}{2}
\frac{\partial^2 F^H}{\partial w^2}
+
\frac{\hbar}{2}
\left(\frac{\partial F^H}{\partial w}\right)^2
+
\left[\left(1+\frac{\hbar}{2}\right)
\frac{\partial}{\partial w} 
-\hbar\frac{\partial}{\partial \hbar}
\right] F^H=0.
\end{equation}
We apply the operation 
$$
\sum_{m=0}^\infty \hbar^m
$$
to (\ref{eq:SmH equation}).
The left-hand side is
\begin{equation}
\label{eq:SmH lhs}
\sum_{m=0}^\infty \hbar^m \left(
-m + \frac{d}{dw}\right)S_{m+1}^H
=\left(
\frac{\partial}{\partial w}-
\hbar\frac{\partial}{\partial \hbar}
\right)
\left(
F^H
-\frac{1}{\hbar}S_0^H\right).
\end{equation}
The right-hand side gives
\begin{multline}
\label{eq:SmH rhs}
-\frac{\hbar}{2}
\left(
\frac{\partial^2}{\partial w^2}
\sum_{m=0} ^\infty S_m^H \hbar^{m-1}
+
\sum_{m=0} ^\infty
\sum_{a+b=m+1}
\frac{\partial S_a^H}{\partial w}\hbar^{a-1}
\cdot
\frac{\partial S_b^H}{\partial w}\hbar^{b-1}
+
\frac{\partial}{\partial w}
\sum_{m=0} ^\infty S_m^H \hbar^{m-1}
 \right)
 \\
 =
 -\frac{\hbar}{2}
 \left(
 \frac{\partial^2}{\partial w^2} F^H
 + \left(\frac{\partial F^H}{\partial w}\right)^2
 -\frac{1}{\hbar^2}
 \left(\frac{\partial S_0^H}{\partial w}\right)^2
 + \frac{\partial}{\partial w} F^H
 \right).
\end{multline}
If we collect all terms that contain $S_0^H(t)$ in 
(\ref{eq:SmH lhs}) and (\ref{eq:SmH rhs}),
we have an equation
$$
\left(t^2(t-1)\frac{\partial}{\partial t}
+\hbar\frac{\partial}{\partial \hbar}
\right) \frac{1}{\hbar}S_0^H
=
\frac{1}{2\hbar} 
\left(t^2(t-1)\frac{\partial }{\partial t}S_0^H
\right)^2,
$$
or equivalently, 
\begin{equation}
\label{eq:S0H equation}
S_0^H(t) = t^2(t-1)\frac{d}{dt}S_0^H(t)
-\half \big(t^2(t-1)\big)^2 
\left(\frac{dS_0^H(t)}{dt}\right)^2.
\end{equation}
We see that (\ref{eq:S0H}) is a solution to
(\ref{eq:S0H equation}).
After eliminating (\ref{eq:S0H equation})
from (\ref{eq:SmH lhs}) $=$ (\ref{eq:SmH rhs}),
we obtain (\ref{eq:Sch FH}). This completes
the proof.
\end{proof}

\section{The Schur function expansion of
the Hurwitz partition function}

\label{sect:Schur}

Let us  introduce the \emph{free energy}
of single Hurwitz numbers 
by a formal sum as
\begin{multline}
\label{eq:Hfree energy}
F^H(t_1,t_2,t_3,\dots;\hbar)
=
\sum_{g\ge 0, \; n\ge 1}
\frac{1}{n!} \;\hbar^{2g-2+n} \;
F_{g,n}^H(t_1,\dots,t_n)
\\
=
\sum_{g\ge 0, \; n\ge 1}
\frac{1}{n!} \;\hbar^{2g-2+n} 
\sum_{(\mu_1\dots,\mu_n)\in\bZ_+^n}
H_{g,n}(\mu_1\dots,\mu_n)\;
e^{-(\mu_1+\cdots+\mu_n)} 
\prod_{i=1}^n 
x_i ^{\mu_i}.
\end{multline}
The partition function we considered in
Section~\ref{sect:Hurwitz}
is the principal specialization
$$
Z^H(t,\hbar) = \exp\left(F^H(t,t,\dots:\hbar)\right).
$$

Recall 
 another  generating function of the 
Hurwitz numbers  
\cite{Kazarian,KazarianLando,O}
defined by
\begin{equation}
\label{eq:H}
\mathbf{H}(s,\bp) = \sum_{g\ge 0,\;n\ge 1}
\mathbf{H}_{g,n} (s,\mathbf{p}), 
\end{equation}
\begin{equation}
\label{eq:Hgn}
\mathbf{H}_{g,n} (s,\mathbf{p}) = 
\frac{1}{n!}
\sum_{\vec{\mu}\in\bZ_+^n} H_{g,n}(\vec{\mu}) 
\mathbf{p}_\mu s^{r(g,\mu)},
\end{equation}
where $\bp_\mu = p_{\mu_1}\cdots p_{\mu_n}$,
and 
$$
r = r(g,\mu)=2g-2+n + \sum_{i=1}^n \mu_i
$$
is again 
the number of simple ramification point of a 
Hurwitz cover of genus $g$ and profile $\mu$.

At this point we wish to go back and forth between 
the following 
two distinct points of view: One is to regard
$\vec{\mu}
=(\mu_1,\dots,\mu_n)$ as a vector consisting of
positive integers, and the other is to view $\mu$
as a \emph{partition} of length $n$.
For any function $f(\vec{\mu})$ in $\vec{\mu}$
as a vector, we
have a change of summation formula
\begin{equation}
\label{eq:resum}
\sum_{\vec{\mu}\in\bZ_+^n}f(\vec{\mu})
=
\sum_{\mu:\ell(\mu) = n}\frac{1}{\big|\Aut(\mu)\big|}
\sum_{\sigma\in S_n}f(\vec{\mu}_\sigma).
\end{equation}
Here the first sum in the right-hand side 
runs over partitions $\mu$ of 
a fixed length $n$,
the second sum is over
the symmetric group $S_n$ of $n$ letters,
$$
\vec{\mu}_\sigma = \left(\mu_{\sigma(1)},
\dots,\mu_{\sigma(n)}
\right)\in \bZ_+^n
$$ 
is the integer vector obtained by
permuting the parts of $\mu$ by $\sigma\in S_n$,
and $\Aut(\mu)$ is the permutation group
interchanging the equal parts of $\mu$.
As a partition, the length of $\mu$ is denoted
by $\ell(\mu)$, and its \emph{size}
by
$$
|\mu|=\sum_{i=1}^{\ell(\mu)} \mu_i.
$$
Often single Hurwitz numbers are labeled
by the genus $g$ and a \emph{partition}
$\mu$. In this case the expression
$$
h_{g,\mu} = \frac{r(g,\mu)!}{|\Aut(\mu)|}
H_{g,\ell(\mu)}(\mu_1,\dots,\mu_{\ell(\mu)})
$$
is used in the literature,
when we do not label the poles, but  
label the simple ramification points.
The generating function 
then has an expression in terms of
sumes over partitions:
$$
\mathbf{H}(s,\bp)
=\sum_{g=0}^\infty \sum_\mu 
h_{g,\mu}\bp_\mu \frac{s^{r(g,\mu)}}{r(g,\mu)!}.
$$

In terms of the  ELSV formula (\ref{eq:ELSV}) we have
\begin{multline}
\label{eq:HinELSV}
\mathbf{H}(s,\bp) \\
=\sum_{g\ge 0,\;n\ge 1}  \frac{1}{n!}\;
s^{2g-2+n}
\sum_{ k_1+\cdots +k_n \leq 3g-3+n} 
\la \tau_{k_1}\cdots \tau_{k_n}\Lambda_g^{\vee}
(1)\ra_{g,n}\;
\prod_{i=1}^{n}\sum_{\mu_i=1}^{\infty}
\frac{\mu_i^{\mu_i+k_i}}{\mu_i!}s^{\mu_i} p_{\mu_i}
\end{multline}
with an appropriate incorporation of
 (\ref{eq:01Hodge})
and (\ref{eq:02Hodge}).
Recall the Laplace transform (\ref{eq:FgnH})
here for comparison that is assembled into
the \emph{free energy}
\begin{multline}
\label{eq:FH in ELSV}
F^H(t_1,t_2\dots;\hbar)
:=
\sum_{g\ge 0, \; n\ge 1}
\frac{1}{n!} \;\hbar^{2g-2+n} \;
F_{g,n}^H(t_1,\dots,t_n)
\\
=
\sum_{g\ge 0,n\ge 1}
\frac{1}{n!}\;\hbar^{2g-2+n}
\sum_{ k_1+\cdots +k_n \leq 3g-3+n} 
\la \tau_{k_1}\cdots \tau_{k_n}\Lambda_g^{\vee}
(1)\ra_{g,n}
\prod_{i=1}^{n}\sum_{\mu_i=1}^{\infty}
\frac{\mu_i^{\mu_i+k_i}}{\mu_i!} \;
x_i^{\mu_i}.
\end{multline}
It is easy to see that the relation between 
the two sets of variables is exactly the 
power-sum symmetric functions.
Let us re-scale the  usual power-sum symmetric
function  $p_j$   of degree $j$ in $x_i$'s
with a scale parameter $s$ as follows: 
\begin{equation}
\label{eq:p(s)}
p_j(s) := s^{-j}\left(x_1^j+x_2^j+x_3^j +\cdots\right).
\end{equation}
Here we consider $p_j(s)$ as a degree $j$ polynomial
defined on $\bC^n$, but the dimension $n$ is unspecified.
Then for every $\mu\in\bZ_+^n$, we have
\begin{equation}
\label{eq:pderivative}
d_1\cdots d_n \;\bp_\mu(s)
=
s^{-(\mu_1+\cdots+\mu_n)}
\left(
\sum_{\sigma\in S_n} \prod _{i=1}^n
\mu_i \;x_{\sigma(i)} ^{\mu_i-1} 
\right)
dx_1\cdots dx_n
\end{equation}
as a differential form on $\bC^n$.

Now from (\ref{eq:p(s)}), 
(\ref{eq:pderivative}) and (\ref{eq:resum}),
we obtain
\begin{multline}
\label{eq:W=dH}
d_1\cdots d_n\; \mathbf{H}_{g,n}(s,\bp(s))
=
 \sum_{\mu:\ell(\mu)=n}
h_{g,\mu}\;d_1\cdots d_n\;\bp_\mu(s) \;
\frac{s^r}{r!}
\\
=
 \sum_{\mu:\ell(\mu)=n}
h_{g,\mu}\;
\frac{s^r}{r!} \; s^{-|\mu|}
\sum_{\sigma\in S_n} \prod _{i=1}^n
\mu_i \;x_{\sigma(i)} ^{\mu_i-1} dx_1\cdots dx_n
\\
=
 \sum_{\mu:\ell(\mu)=n}
\frac{1}{|\Aut(\mu)|}\;H_{g,n}(\mu_1,\dots,\mu_n)
s^{2g-2+n}
\sum_{\sigma\in S_n} \prod _{i=1}^n
\mu_i \;x_{\sigma(i)} ^{\mu_i-1} dx_1\cdots dx_n
\\
=
 \sum_{\mu\in\bZ_+^n}
H_{g,n}(\mu_1,\dots,\mu_n)
s^{2g-2+n}
 \prod _{i=1}^n
\mu_i \;x_{i} ^{\mu_i-1} dx_1\cdots dx_n
\\
= s^{2g-2+n} d_1\cdots d_n
\sum_{\mu\in\bZ_+^n}
H_{g,n}(\mu_1,\dots,\mu_n)
 \prod _{i=1}^n
x_{i} ^{\mu_i}
\\
=
 s^{2g-2+n} d_1\cdots d_n
F_{g,n}^H(t_1,\dots,t_n)
=
 s^{2g-2+n}\;
 W_{g,n}^H(t_1,\dots,t_n).
\end{multline}
This formula tells us that 
the Eynard-Orantin differential form
$W_{g,n}^H$ is the exterior
derivative of $\mathbf{H}_{g,n}(s,\bp(s))$
with the  identification (\ref{eq:p(s)}).
Moreover, 
we have
\begin{equation}
\label{eq:FgnH=Hgn}
s^{2g-2+n}\;
F_{g,n}^H(t_1,\dots,t_n) \equiv
\mathbf{H}_{g,n}(s,\bp(s))  \mod
\Ker(d_1\cdots d_n)
\end{equation}
as functions on $\bC^n$.

\begin{rem}
Let us examine (\ref{eq:FgnH=Hgn}).
For $n=1$, the power sum
(\ref{eq:p(s)}) contains only one term
 and we have
\begin{equation}
\label{eq:F1=H1}
\mathbf{H}_{g,1}(s,\bp(s))
=
s^{2g-1}
\sum_{k=1} ^\infty H_{g,1}(k) p_k(s) s^k
=
s^{2g-1}
\sum_{k=1} ^\infty H_{g,1}(k) \; x^k
=
s^{2g-1} F^H_{g,1}(t_1).
\end{equation}
Thus (\ref{eq:FgnH=Hgn}) is an equality for $n=1$.
In general what happens is
\begin{multline}
\label{eq:FgnH=Hgnex}
\mathbf{H}_{g,n}(s,\bp)
=
\frac{1}{n!}\;
s^{2g-2+n}
\sum_{\vec{\mu}\in\bZ_+^n}
H_{g,n}(\vec{\mu})  \prod_{i=1}^n \left(
x_1^{\mu_1}+x_2^{\mu_2}+\cdots
+x_n^{\mu_n}\right)
\\
=
s^{2g-2+n}
\sum_{\vec{\mu}\in\bZ_+^n}
H_{g,n}(\vec{\mu})\;
x_1^{\mu_1}x_2^{\mu_2}\cdots x_n^{\mu_n}
+(\text{terms with less than $n$ variables}).
\end{multline}
Therefore, (\ref{eq:FgnH=Hgn}) is never an
equality for $n>1$.
\end{rem}

However, the principal specialization  
$t=t_1=t_2=t_3=\cdots$ corresponds to 
evaluating
\begin{equation}
\label{eq:pj=xj}
p_j =
\left( \frac{x}{s}\right)^j.
\end{equation}
With this identification we have again an equality
\begin{multline}
\label{eq:H=FH}
\mathbf{H}(s,\bp)
\\
= \sum_{g\ge 0,n\ge 1} \frac{1}{n!}\;s^{2g-2+n}
\sum_{\vec{\mu}\in\bZ_+^n}
H_{g,n}(\vec{\mu}) \; x^{(\mu_1+\cdots+\mu_n)}
=
\sum_{g\ge 0,n\ge 1} \frac{1}{n!}\;s^{2g-2+n}
F_{g,n}^H(t,t,\dots,t).
\end{multline}

In Section~\ref{sect:Hurwitz} we noted
that the Eynard-Orantin recursion for 
Hurwitz numbers is the Laplace transform of
the cut-and-join equation (\ref{eq:CAJ}) 
\cite{EMS,MZ}. Another 
consequence of the same combinatorial equation is
a \emph{heat equation}
 \cite{Goulden,Kazarian, Zhou1}
\begin{equation}
\label{eq:heat CAJ}
\frac{\partial}{\partial s}
e^{\mathbf{H}(s,\mathbf{p})}
=
\frac{1}{2}
\left[
\sum_{i,j\ge 1} \left(
(i+j)p_ip_j\frac{\partial}{\partial p_{i+j}}
+ijp_{i+j}\frac{\partial^2}{\partial p_i \partial p_j}
\right)
\right]
 e^{\mathbf{H}(s,\mathbf{p})} ,
\end{equation}
with the initial condition $\mathbf{H}(0,\bp) = p_1$. 
An important and fundamental fact here is that 
the heat equation (\ref{eq:heat CAJ})  implies
that 
$e^{\mathbf{H}(s,\bp)}$ is  a KP $\tau$-function 
for each value of $s$
\cite{Kazarian, KazarianLando, O, Zhou1}.
Let us recall this fact here. 
 We note that a solution of the heat equation 
 is expanded by the eigenfunction of the
 second order operator. In our case of
 (\ref{eq:heat CAJ}), the eigenfunctions 
 of the cut-and-join operator on the 
 right-hand side  are given by the
 \emph{Schur functions}.

 For a partition $\mu = (\mu_1\ge \mu_2\ge\cdots)$ 
 of a finite length $\ell(\mu)$,
we define the \textbf{shifted power-sum function}
by
\begin{equation}
\label{eq:shifted power-sum}
\bp_r[\mu] := \sum_{i=1}^\infty
\left[
\left(\mu_i-i+\half\right)^r-
\left(-i+\half\right)^r
\right].
\end{equation}
This is a finite sum of $\ell(\mu)$ terms.
In this paper we consider $\bp_r[\mu]$ as
a number associated with a partition $\mu$. 
Then we have \cite{Goulden, Zhou1}
\begin{equation}
\label{eq:Schur}
\sum_{i,j\ge 1} \left(
(i+j)p_ip_j\frac{\partial}{\partial p_{i+j}}
+ijp_{i+j}\frac{\partial^2}{\partial p_i \partial p_j}
\right)
s_\mu(\bp)
= \bp_2[\mu] 
\cdot 
s_\mu(\bp),
\end{equation}
where $s_\mu(\bp)$ is the Schur function 
defined by
\begin{align*}
s_\mu(\bp) &= \sum_{|\lam|=|\mu|}
\frac{\rchi_\mu(\lam)}{z_\lam} \bp_\lam,
\\
z_\mu &= \prod_{i=1}^{\ell(\mu)} m_i! i^{m_i},
\\
m_i &={\text{ the number of parts in $\mu$ of 
length $i$,}}
\end{align*}
and $\rchi_\mu(\lam)$ is  the value of the irreducible 
character of the representation $\mu$ 
of the symmetric group evaluated at 
the conjugacy class $\lam$.
If we write
$$
\triangle =\half \sum_{i,j\ge 1} \left(
(i+j)p_ip_j\frac{\partial}{\partial p_{i+j}}
+ijp_{i+j}\frac{\partial^2}{\partial p_i \partial p_j}
\right),
$$
Then
$$
\triangle s_\mu(\bp) = \half \bp_2[\mu]
\cdot s_\mu(\bp)
$$
and 
$$
\frac{\partial}{\partial s} e^{\mathbf{H}(s,\bp)}
=\triangle e^{\mathbf{H}(s,\bp)}.
$$
Therefore, we have an expansion formula
$$
e^{\mathbf{H}(s,\bp)}
=\sum_\mu c_\mu s_\mu(\bp)e^{\half \bp_2[\mu] s}
$$
for a constant $c_\mu$ associated with every 
partition $\mu$. The constants are determined by
the initial value. Since the initial condition is
$$
\left.e^{\mathbf{H}(s,\bp)}\right|_{s=0}
=e^{p_1} = \sum_\mu c_\mu s_\mu(\bp),
$$
we conclude that
$$
c_\mu = s_\mu(1,0,0,\dots,0).
$$
This follows from the Cauchy identity
\begin{equation}
\label{eq:Cauchy}
\frac{1}{\prod_{i,j} (1-x_iy_j)}
=\sum_{\mu}s_\mu(\bp)s_\mu(\bp^y),
\end{equation}
where 
$$
p_j = \sum_{i}x_i ^j
\qquad{\text{and}}\qquad
p_j^y = \sum_{i}y_i ^j.
$$
Since we have
\begin{multline}
\label{eq:Cauchy 1}
\sum_{\mu}s_\mu(\bp)s_\mu(\bp^y)
=
\frac{1}{\prod_{i,j} (1-x_iy_j)}
=\exp\left( -\sum_{i,j}\log(1-x_iy_j)\right)
\\
=\exp\left(\sum_{i,j}\sum_{m\ge 1}\frac{1}{m}
x_j^m y_i^m\right)
=\exp\left( \sum_{m\ge 1} \frac{1}{m}
p_m p_m^y\right),
\end{multline}
the restriction of (\ref{eq:Cauchy 1})
to $p_1^y =1$ and $p_m^y=0$ for
all $m\ge 2$ reduces to
$$
e^{p_1} = \sum_\mu s_\mu(1,0,\dots,0)s_\mu(\bp).
$$
Because of the determinantal formula for
the Schur functions, $s_\mu(1,0,0,\dots,0)$'s
are the Pl\"ucker coordinate of a point of the
Sato Grassmannian. It follows that
$$
s_\mu(1,0,0,\dots,0)
e^{\half \bp_2[\mu] s}
$$ 
for all $\mu$  also form the
Pl\"ucker coordinates because
of (\ref{eq:shifted power-sum}).
Then by a theorem of Sato \cite{Sato},
\begin{equation}
\label{eq:eH as tau}
e^{\mathbf{H}(s,\bp)}
=\sum_\mu s_\mu(1,0,0,\dots,0)
 s_\mu(\bp)e^{\half \bp_2[\mu] s}
 =\sum_\mu \frac{\dim\mu}{|\mu|!}
 e^{\half \bp_2[\mu] s}s_\mu(\bp)
\end{equation}
is a $\tau$-function of the KP equations.
Here $\dim\mu$ is the dimension of the 
irreducible representation of the symmetric
group $S_{|\mu|}$ belonging to the partition 
$\mu$.

Thus we have established

\begin{thm}
\label{thm:ZH=e^H}
The Hurwitz partition function $Z^H(t,\hbar)$
of {\rm{(\ref{eq:ZH})}}
 is obtained by evaluation of
 the KP $\tau$-function 
 $e^{\mathbf{H}(s,\bp)}$
 at  $(s,\bp)=(\hbar,\bp(\hbar))$:
 \begin{equation}
\label{eq:ZH=eH}
Z^H(t,\hbar) =
e^{\mathbf{H}(\hbar,\bp(\hbar))}.
\end{equation}
Here $\bp(\hbar)$
 means the principal specialization
\begin{equation}
\label{eq:pj=xjh}
p_j =
\left( \frac{x}{\hbar}\right)^j
\end{equation}
for every $j=1,2,3,\dots.$
The $t$-variable and the $x$-variable are related by
$$
x = \frac{t-1}{t}e^{\frac{1}{t}-1}.
$$
\end{thm}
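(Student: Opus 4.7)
\bigskip

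\noindent\textbf{Proof proposal.}
The plan is to show that the theorem is essentially an assembly of identities already established in this section. The two ingredients are: the principal specialization identity (\ref{eq:H=FH}), which relates $\mathbf{H}(s,\bp)$ to the principal specialization of the Laplace-transformed Hurwitz free energies; and the Schur-function expansion (\ref{eq:eH as tau}), which identifies $e^{\mathbf{H}(s,\bp)}$ with a KP $\tau$-function for each fixed $s$. Once these are combined, the only thing left to verify is that the substitution $s=\hbar$ and $p_j=(x/\hbar)^j$ matches the definition (\ref{eq:ZH}) of $Z^H(t,\hbar)$.

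First I would carry out the algebraic matching explicitly. Starting from the definition
$$
\mathbf{H}_{g,n}(s,\bp)=\frac{1}{n!}\sum_{\vec\mu\in\bZ_+^n}H_{g,n}(\vec\mu)\,\bp_\mu\,s^{r(g,\mu)},
\qquad r(g,\mu)=2g-2+n+|\mu|,
$$
substitute the principal specialization $p_j=(x/s)^j$. Since $\bp_\mu=\prod_i p_{\mu_i}=(x/s)^{|\mu|}$, the factor $s^{r(g,\mu)}$ absorbs the power of $s^{-|\mu|}$ and leaves $s^{2g-2+n}\,x^{|\mu|}$. Recognizing the inner sum as $F_{g,n}^H(t,\dots,t)$ from (\ref{eq:FgnH}), this reproduces (\ref{eq:H=FH}):
$$
\mathbf{H}(s,\bp(s))=\sum_{g\ge 0,\,n\ge 1}\frac{1}{n!}\,s^{2g-2+n}F_{g,n}^H(t,\dots,t).
$$

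Next I would simply set $s=\hbar$. Comparing with (\ref{eq:ZH}), this is precisely $\log Z^H(t,\hbar)$, hence
$$
Z^H(t,\hbar)=\exp\!\bigl(\mathbf{H}(\hbar,\bp(\hbar))\bigr),
$$
with the relation between $t$ and $x$ obtained by combining the Lambert parametrization $x=ze^{-z}$ of (\ref{eq:Lambert}) with $z=(t-1)/t$, giving $x=\frac{t-1}{t}e^{1/t-1}$. The tau-function interpretation of $e^{\mathbf{H}(s,\bp)}$ recorded in (\ref{eq:eH as tau}) then shows that $Z^H(t,\hbar)$ is the principal specialization of a KP $\tau$-function, with the ``time'' $s$ of the cut-and-join heat equation (\ref{eq:heat CAJ}) playing simultaneously the role of the genus-expansion parameter $\hbar$.

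There is really no serious obstacle: both nontrivial ingredients, (\ref{eq:H=FH}) and (\ref{eq:eH as tau}), are already in place. The only subtlety worth emphasizing in the write-up is conceptual rather than technical, namely that the single parameter $s$ has a double role. On the one hand it is the formal variable tracking the number $r(g,\mu)$ of simple ramification points in $\mathbf{H}(s,\bp)$, which via the cut-and-join equation forces the KP integrability. On the other hand, after the principal specialization $p_j=(x/s)^j$, it reorganizes into the topological expansion parameter $\hbar$ of $Z^H$. Making this identification transparent is the entire content of the theorem.
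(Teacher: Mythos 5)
Your proposal is correct and follows essentially the same route as the paper: the paper also obtains the identity (\ref{eq:ZH=eH}) by observing that the principal specialization $p_j=(x/s)^j$ turns the congruence (\ref{eq:FgnH=Hgn}) into the equality (\ref{eq:H=FH}), so that setting $s=\hbar$ matches $\log Z^H(t,\hbar)$ term by term, with the $\tau$-function property supplied by the Schur expansion (\ref{eq:eH as tau}). Your bookkeeping of the cancellation $\bp_\mu s^{r(g,\mu)}=s^{2g-2+n}x^{|\mu|}$ and of the $x$--$t$ relation is exactly what the paper does.
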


Since we have a concrete 
expansion formula (\ref{eq:eH as tau})
for $e^{\mathbf{H}(s,\bp)}$, it is straightforward 
to find a formula for its principal specialization. 
Let us  look at (\ref{eq:Cauchy 1}) again. 
This time we apply the principal specialization 
to both $\bp$ and $\bp^y$, meaning that
we substitute 
$$
p_m=x^m \qquad {\text{ and}}\qquad
 p_m^y = y^m.
 $$
Then we have
$$
\sum_{\mu}s_\mu(\bp)s_\mu(\bp^y)
=\sum_{m=0}^\infty x^my^m
$$
after the double principal specialization. 
Therefore, the sum with respect to all partitions
$\mu$ is reduced to the sum with respect to
only one-part partitions,
i.e., $\mu = (m)$. All other partitions contribute $0$.
Thus $s_\mu=s_m=h_m$, which is the $m$-th
complete symmetric function. But because of
the principal specialization, we simply
have $h_m = x^m$. 
We note that if $\mu = (m)$, then 
(\ref{eq:shifted power-sum}) reduces to 
$$
\bp_2[(m)] = \left(m-1+\half\right)^2-
\left(-1+\half\right)^2 = m(m-1).
$$
We therefore conclude that
\begin{equation}
\label{eq:ZH in x}
Z^H(t,\hbar) = 
e^{\mathbf{H}(\hbar,\bp(\hbar))}
=\sum_{m=0}^\infty \frac{1}{m!}
e^{\half m(m-1)\hbar}\left(\frac{x}{\hbar}\right)^m.
\end{equation}
We have now established a theorem of Zhou
\cite{Zhou4}.

\begin{thm}[\cite{Zhou4}]
\label{thm:Sch H2}
The same Hurwitz partition function satisfies
a differential-difference equation
\begin{equation}
\label{eq:Sch H2}
\left(
\hbar \frac{\partial}{\partial w}+
e^{-w} e^{-\hbar\frac{\partial}{\partial w}}
\right)Z^H(t,\hbar) = 0.
\end{equation}
Here again $t$ is a function in $w$, 
which is given by
$$
t = 1 +\sum_{k=1}^\infty \frac{k^k}{k!}x^k
=1 +\sum_{k=1}^\infty \frac{k^k}{k!}e^{-wk}.
$$
The characteristic variety of this equation, 
with the identification of 
$$
z = -\hbar\frac{\partial}{\partial w},
$$
is the Lagrangian immersion $e^{-w} = ze^{-z}$.
\end{thm}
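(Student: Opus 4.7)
The plan is to exploit the closed-form expression for $Z^H(t,\hbar)$ furnished by Theorem~\ref{thm:ZH=e^H}, namely the principal-specialization evaluation
\begin{equation*}
Z^H(t,\hbar) \;=\; \sum_{m=0}^{\infty} \frac{1}{m!}\, e^{\frac{1}{2}m(m-1)\hbar}\, \hbar^{-m}\, e^{-mw},
\end{equation*}
where $x=e^{-w}$. Because every exponent in $w$ on the right-hand side is of the form $e^{-mw}$, both of the operators appearing in (\ref{eq:Sch H2}) act diagonally, term by term, with scalar coefficients. Thus the first step is to verify the equation directly from this series, without having to revisit the Schur-function expansion.

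Concretely, the shift operator acts as
$e^{-\hbar \partial/\partial w} e^{-mw} = e^{-m(w-\hbar)} = e^{m\hbar}\, e^{-mw}$,
so applying $e^{-w} e^{-\hbar \partial/\partial w}$ to the $m$-th summand and then reindexing $k=m+1$ produces
\begin{equation*}
e^{-w} e^{-\hbar \partial/\partial w}\, Z^H
= \sum_{k=1}^{\infty} \frac{k}{k!}\, e^{\frac{1}{2}k(k-1)\hbar}\, \hbar^{-(k-1)}\, e^{-kw},
\end{equation*}
using the identity $\tfrac12 m(m-1) + m = \tfrac12 (m+1)m$ that converts the exponential factor correctly. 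On the other side, $\hbar \partial/\partial w$ contributes a factor $-m\hbar$ on the $m$-th summand, yielding
\begin{equation*}
\hbar \frac{\partial}{\partial w} Z^H
= -\sum_{m=1}^{\infty} \frac{m}{m!}\, e^{\frac{1}{2}m(m-1)\hbar}\, \hbar^{-(m-1)}\, e^{-mw}.
\end{equation*}
Summing the two expressions gives $0$ term by term, which establishes (\ref{eq:Sch H2}).

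For the final assertion about the characteristic variety, the plan is to treat $\hbar$ as a fixed constant and identify $z = -\hbar\, \partial/\partial w$ as the cotangent symbol. Under this identification the total symbol of the operator $\hbar \partial/\partial w + e^{-w} e^{-\hbar \partial/\partial w}$ becomes the function $-z + e^{-w} e^{z}$ on $T^*\bC^*$. The zero locus $-z + e^{-w} e^{z} = 0$ rearranges to $e^{-w} = z e^{-z}$, which is precisely the Lagrangian immersion (\ref{eq:Lambert intro}) with $y = z$. The only subtlety worth noting is that $e^{-\hbar \partial/\partial w}$ is an infinite-order operator, so the phrase ``total symbol'' must be understood as the formal symbol obtained by replacing $\hbar \partial/\partial w$ with $-z$ in the operator and expanding; since $Z^H$ is a convergent series of exponentials $e^{-mw}$, this formal manipulation is justified analytically.

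The main point that requires care is the bookkeeping of the exponents of $\hbar$ and of the quadratic form $\tfrac12 m(m-1)$ after the shift $m \mapsto m+1$: every other step is essentially a formal identity. The genuine mathematical content of the theorem was already packaged into Theorem~\ref{thm:ZH=e^H} via Sato's characterization of KP $\tau$-functions and the Schur-function/shifted power-sum machinery, which reduces the differential-difference equation to a one-line generating-function manipulation.
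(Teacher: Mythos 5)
Your proposal is correct and follows essentially the same route as the paper: both start from the closed-form series $Z^H(t,\hbar)=\sum_{m\ge 0}\frac{1}{m!}e^{\frac{1}{2}m(m-1)\hbar}(x/\hbar)^m$ of Theorem~\ref{thm:ZH=e^H} and verify (\ref{eq:Sch H2}) by a term-by-term shift/reindexing computation, the paper phrasing it via the recursion $a_{m+1}=e^{m\hbar}a_m\,x/\hbar$ and the operator $x\frac{d}{dx}$ rather than by cancelling the two series directly. Your bookkeeping of the exponents and the symbol computation for the characteristic variety are both accurate.
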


\begin{proof} 
Let us denote
$$
a_m=
e^{\half m(m-1)\hbar}\left(\frac{x}{\hbar}\right)^m
=e^{(1+2+\cdots+(m-1))\hbar}
\left(\frac{x}{\hbar}\right)^m.
$$
Then 
$$
Z^H(t,\hbar) = \sum_{m=0}^\infty \frac{1}{m!}
a_m, 
\qquad
a_{m+1} = 
e^{m\hbar}a_m\frac{x}{\hbar},
\qquad{\text{and}}
\qquad
x\frac{d}{dx}a_m = ma_m.
$$
We note that $x\frac{d}{dx}$ operates as
the multiplication of $m$ to $a_m$.
Therefore,
\begin{multline*}
-\hbar\frac{\partial}{\partial w}Z^H(t,\hbar)
=
\hbar x\frac{d}{dx}\sum_{m=0}^\infty 
\frac{1}{m!}
a_m
=\hbar\sum_{m=0}^\infty \frac{1}{m!}a_{m+1}
=x 
\sum_{m=0}^\infty \frac{1}{m!}e^{m\hbar}a_m
\\
=x e^{\hbar x\frac{d}{dx}} Z^H(t,\hbar)
=e^{-w}e^{-\hbar\frac{\partial}{\partial w}}
Z^H(t,\hbar).
\end{multline*}
This completes the proof.
\end{proof}

\begin{rem}
The asymptotic behavior of $Z^H(t,\hbar)$ 
near $\hbar=0$ is determined by 
$$
\exp\left(\frac{1}{\hbar} S_0^H+S_1^H\right).
$$
From (\ref{eq:Sch H2}) we have
\begin{multline*}
0=\left. 
\exp\left(-\frac{1}{\hbar} S_0^H-S_1^H\right)
\left(
\hbar \frac{d}{dw}
+e^{-w}e^{-\hbar \frac{d}{dw}}
\right) 
\exp\left(\frac{1}{\hbar} S_0^H+S_1^H\right)
\right|_{\hbar = 0}
\\
=
\frac{d}{dw}S_0^H\big(t(w)\big)+
\left.
e^{-w}\exp\left(\frac{S_0^H\big(t(w-\hbar)\big)
-S_0^H\big(t(w)\big)}{\hbar}\right)
\right|_{\hbar=0}
\\
=
\frac{d}{dw}S_0^H
+e^{-w} e^{-\frac{d}{dw}S_0^H}
=-z + xe^z.
\end{multline*}
We thus recover the Lagrangian immersion 
in this way as well.
\end{rem}

\begin{rem}
We can directly verify that the
 expression (\ref{eq:ZH in x}) 
 of $Z^H(t,\hbar)$ satisfies the
 Schr\"odinger equation (\ref{eq:Sch H}). 
 Indeed, since $x=e^{-w}$, 
 \begin{multline*}
 \left[
 \half \frac{\partial^2 }{\partial w^2}+
 \left(\half +\frac{1}{\hbar}\right) 
 \frac{\partial}{\partial w}-\frac{\partial}{\partial 
 \hbar}
 \right]
 \sum_{m=0}^\infty \frac{1}{m!}e^{\half m(m-1)
 \hbar} \frac{1}{\hbar^m}e^{-mw}
 \\
 =
 \sum_{m=0}^\infty \frac{1}{m!}e^{\half m(m-1)
 \hbar}\frac{1}{\hbar^m}e^{-mw}
 \left[
 \half m^2 -m\left(\half +\frac{1}{\hbar}\right)
 -\half m(m-1) +\frac{m}{\hbar}
 \right] = 0.
 \end{multline*}
\end{rem}

\section{Conclusion}
\label{sect:conclusion}

The main purpose of this paper is to 
derive the Schr\"odinger equation of the
partition function from the integrated 
Eynard-Orantin topological recursion,
when there is an A-model counting problem
whose mirror dual is the Eynard-Orantin theory. 
We examined two different types of counting
problem of ramified covering of $\bP^1$: 
one is Grothendieck's dessins d'enfants, 
and the other single Hurwitz numbers. 
The first example leads to a Lagrangian 
immersion in $T^*\bC$ defined by
a Laurent polynomial
equation (\ref{eq:Catalan immersion}),
 while the latter
corresponds to the Lambert curve in $T^*\bC^*$
given by an exponential equation
(\ref{eq:LI for H}).

If we start with the Eynard-Orantin recursion
(\ref{eq:EO}) and define the primitive functions
$F_{g,n}$ by (\ref{eq:W=dF}), then we have the
ambiguity in the constants of integration. 
However, if we start with an A-model,
 then the primitive functions 
$F_{g,n}$'s are given by the Laplace transform 
of the solution to the A-model problem. 
The examples we have studied in this paper 
show that always there is a natural zero
$t_i=a$
of $F_{g,n}(t_1,\dots,t_n)$ in each 
variable, such as (\ref{eq:FgnC zero})
and (\ref{eq:FgnH zero}), when $2g-2+n>0$. 
Thus the integration
formula
$$
F_{g,n}(t_1,\dots,t_n)
=\int_a ^{t_1}\cdots\int_a^{t_n}
W_{g,n}(t_1,\dots,t_n)
$$
uniquely determines $F_{g,n}$ from $W_{g,n}$. 
The integral transform equation (\ref{eq:EO}) 
for $W_{g,n}$ is then equivalent to a
differential transform equation for $F_{g,n}$,
such as (\ref{eq:FC recursion}) and 
(\ref{eq:FgnH recursion}) for our examples.
Because of our assumption for the 
Lagrangian immersion that the 
Lagrangian singularities are simply ramified,
the differential recursion equation for $F_{g,n}$
is expected to be a second order PDE. 
Then by taking the principal specialization, 
we obtain a second order differential equation
in $t$ and $\hbar$. 
For our examples we have thus established
(\ref{eq:Sch C}) and (\ref{eq:Sch H}).

Although (\ref{eq:Sch C}) is holonomic
if we consider $\hbar$ a constant, 
(\ref{eq:Sch H}) is a PDE containing the
$\hbar$-differentiation as well. Therefore, 
it is not holonomic. This difference comes
from the constant term $2g-2+n$ in
the differential operator of 
the right-hand side of (\ref{eq:FgnH recursion}).
After taking the $n$-fold symmetric exterior
differentiation, this term drops, and thus 
the Eynard-Orantin recursion for Hurwitz
numbers \cite{EMS}
takes the same shape as that of
the Catalan case (\ref{eq:CEO}).

With the analysis of our examples, we
notice that the issue of the constants
of integration in (\ref{eq:W=dF}) is 
\emph{not} a simple matter. Only the
corresponding A-model can dictate which
constants of integration we should choose. 
Otherwise, the Schr\"odinger equation we
wish to establish would take a totally different
shape, depending on the choice of the constants.

Our second equation
(\ref{eq:Sch H2}) for Hurwitz numbers
is much similar to 
(\ref{eq:Sch C}) in many ways, such as
it is holonomic for each fixed $\hbar$. 
But this
differential-difference equation is not
a direct consequence of the 
differential equation (\ref{eq:FgnH recursion}),
while it recovers the Lagrangian immersion
more directly than
(\ref{eq:Sch H}). 
We also note here the commutator 
relation $[P,Q]=P$ of \cite{LMS} that we
mentioned in Introduction. We refer to
\cite{LMS} for a further integrable system
theoretic analysis of these equations.

Interestingly, we derive 
(\ref{eq:Sch H2}) from the fact that there
is another generating function for single 
Hurwitz numbers, which admits a Schur function
expansion. 
This last point is a more general feature. 
In \cite{MSS} we discover that there is 
a generalization of (\ref{eq:Sch H2}) for
the case of double Hurwitz numbers
and $r$-spin structures, which reduces
to (\ref{eq:Sch H2}) as a special case. 
A detailed analysis of double Hurwitz numbers,
or orbifold Hurwitz numbers, will be given elsewhere.

So far all these examples have a genus $0$
spectral curve. Hence the proposed algebraic 
K-theory obstruction of Gukov and
Su\l kowski \cite{GS} automatically vanishes.
It is also pointed out by Borot and 
Eynard \cite{BE2} that, for a 
higher genus spectral curve,
the definition of
the partition function of the B-model needs to 
be modified, by including a theta function 
factor known as a \emph{non-perturbative}
sector. This modification
 also assures the modular invariance of the
 partition function.
A further investigation is awaited here.

The examples we have carried
out in this paper suggest that the
expected Schr\"odinger equations for the
knot A-polynomials should form a rather 
special class
of the general Eynard-Orantin mechanism. 
This seems to be due to the integer-coefficient 
Laurent polynomial 
expression of the A-polynomials, and their 
$K_2$ Lagrangian property of Kontsevich.

\begin{appendix}
\section{Proof of the Schr\"odinger equation for 
the Catalan case}
\label{app:Catalan}

In this Appendix we give the proof of
Theorem~\ref{thm:Sch}. 
In Section~\ref{sect:CSch}
we reduced the proof to verifying 
the recursion formula (\ref{eq:S recursion}).
To prove this, 
we need the following  trivial lemma.

\begin{lem}
Let $f(t_1,\dots,t_n)$ be a symmetric function
in $n$ variables. 
Then
\begin{equation}
\begin{aligned}
\label{eq:dfdt}
\frac{d}{dt}f(t,t,\dots,t) 
&= n
\left.
\left[
\frac{\partial}{\partial u}f(u,t,\dots,t)
\right]
\right|_{u=t};
\\
\frac{d^2}{dt^2}f(t,t,\dots,t) 
&=
n\left.\left[
\frac{\partial^2}{\partial u ^2} 
f(u,t,\dots,t)
\right]\right|_{u=t}
\\
&\qquad
+
n(n-1)
\left.\left[
\frac{\partial^2}{\partial u_1 \partial u_2} 
f(u_1,u_2,t,\dots,t)
\right]\right|_{u_1=u_2=t}.
\end{aligned}
\end{equation}
For two functions in one variable 
$g(x)$ and $f(x)$,
we have
\begin{equation}
\label{eq:lhopital}
\left.
\left[
\frac{1}{x-y}\left(
g(x)\frac{df(x)}{dx}-g(y)\frac{df(y)}{dy}
\right)
\right]
\right|_{x=y}
=g'(x)f'(x)+g(x)f''(x).
\end{equation}
\end{lem}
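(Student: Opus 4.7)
The plan is to prove both parts of the lemma by direct application of the chain rule together with the symmetry of $f$. For the first identity in (\ref{eq:dfdt}), set $\phi(t)=f(t,t,\dots,t)$, the restriction of $f$ to the diagonal. The chain rule applied to the composition $t\mapsto(t,t,\dots,t)\mapsto f$ gives $\phi'(t)=\sum_{i=1}^{n}(\partial_i f)(t,\dots,t)$. Symmetry of $f$ under permutation of its arguments forces each $(\partial_i f)(t,\dots,t)$ to take a common value, namely $(\partial_u f)(u,t,\dots,t)|_{u=t}$, which produces the factor $n$ in the claimed formula.

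For the second identity, differentiate $\phi'(t)$ once more; the chain rule (together with Clairaut's theorem) yields $\phi''(t)=\sum_{i,j=1}^{n}(\partial_i\partial_j f)(t,\dots,t)$. Split this double sum into the $n$ diagonal terms with $i=j$ and the $n(n-1)$ off-diagonal terms with $i\neq j$. By symmetry, all diagonal terms equal $(\partial_u^{2}f)(u,t,\dots,t)|_{u=t}$ and all off-diagonal terms equal $(\partial_{u_1}\partial_{u_2}f)(u_1,u_2,t,\dots,t)|_{u_1=u_2=t}$, yielding the second line of (\ref{eq:dfdt}).

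For (\ref{eq:lhopital}), introduce $h(x)=g(x)f'(x)$. The bracketed expression is the value at $y=x$ of the Newton quotient $\frac{h(x)-h(y)}{x-y}$, whose limit as $y\to x$ is $h'(x)$ by either the mean value theorem or a first-order Taylor expansion. The product rule then gives $h'(x)=g'(x)f'(x)+g(x)f''(x)$, which is the right-hand side.

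Each step is a direct application of elementary calculus, so there is no genuine technical obstacle; the only point requiring care is bookkeeping of the partial-derivative notation (the single-variable $\partial/\partial u$ used in the diagonal term versus the two-variable $\partial/\partial u_1$, $\partial/\partial u_2$ used for the off-diagonal term), which is a notational rather than a mathematical issue. The lemma's usefulness is not in its depth but in the fact that the two identities in (\ref{eq:dfdt}) are precisely what is needed to convert principal-specialization derivatives into mixed partials appearing on the right-hand side of (\ref{eq:FC recursion}), and (\ref{eq:lhopital}) is what handles the $i\neq j$ terms of that recursion in the confluent limit $t_i,t_j\to t$.
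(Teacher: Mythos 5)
Your proof is correct and follows essentially the same route as the paper: chain rule plus symmetry for the two identities in (\ref{eq:dfdt}), and the confluent limit of a difference quotient for (\ref{eq:lhopital}) (the paper invokes l'H\^opital's rule where you recognize the Newton quotient of $h=gf'$ directly, which is the same computation). Your version of the second-derivative formula, splitting the full double sum $\sum_{i,j}$ into $n$ diagonal and $n(n-1)$ off-diagonal terms, is in fact slightly cleaner than the paper's expansion.
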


\begin{proof}
For any function $f$ we have
$$
\frac{d}{dt}f(t,t,\dots,t) 
= 
\left.
\left[\sum_{j=1} ^n
\frac{\partial}{\partial t_j}f(t_1,t_2,\dots,t_n)
\right]
\right|_{t_1=t_2=\cdots=t_n=t}.
$$
Therefore,
\begin{multline*}
\frac{d^2}{dt^2}f(t,t,\dots,t) 
= 
\left.
\left[\left(\sum_{j=1} ^n
\frac{\partial}{\partial t_j}
\right)^2f(t_1,t_2,\dots,t_n)
\right]
\right|_{t_1=t_2=\cdots=t_n=t}
\\
=
\left.
\left[\left(\sum_{j=1} ^n
\frac{\partial^2}{\partial t_j^2}
+2\sum_{i\ne j}
\frac{\partial^2}{\partial t_i\partial t_j}
\right)
f(t_1,t_2,\dots,t_n)
\right]
\right|_{t_1=t_2=\cdots=t_n=t}.
\end{multline*}
Eq.(\ref{eq:dfdt}) holds when $f$ is symmetric.
The second equation (\ref{eq:lhopital})
follows from l'H\^opital's
rule.
\end{proof}

Now we are ready to give the proof of the 
recursion (\ref{eq:S recursion}).

\begin{proof}[Proof of {\rm{(\ref{eq:S recursion})}}]
The left-hand side of (\ref{eq:S recursion}) is
$$
\frac{d}{dt}S_{m+1}
=\sum_{2g-2+n=m}
\frac{1}{n!}\;\frac{d}{dt}F_{g,n}(t,\dots,t)
=\sum_{2g-2+n=m}\frac{1}{(n-1)!}
\left.
\left(
\frac{\partial}{\partial t_1}F_{g,n}(t_1,t,\dots,t)
\right)
\right|_{t_1=t}.
$$
Thus we apply the operation 
$$
\sum_{2g-2+n=m}\frac{1}{(n-1)!}
$$
to each line of the right-hand side of
(\ref{eq:FC recursion})
and set all variables equal to $t$.

From Line $1$ of the right-hand side of
(\ref{eq:FC recursion}), we have
\begin{multline*}
-\frac{1}{16}\sum_{2g-2+n=m}\frac{1}{(n-1)!}
\sum_{j=2} ^n
\frac{t_j}{(t_1+t_j)(t_1-t_j)}
\\
\times
\left.\left(
\frac{(t_1^2-1)^3}{t_1^2}\frac{\partial}{\partial t_1}
F^C_{g,n-1}(t_{[\hat{j}]})
-
\frac{(t_j^2-1)^3}{t_j^2}\frac{\partial}{\partial t_j}
F^C_{g,n-1}(t_{[\hat{1}]})
\right)
\right|_{t_1=\dots=t_n=t}
\\
=
-\frac{1}{32}\sum_{2g-2+n=m}\frac{1}{(n-2)!}
\Bigg[
\left(
\frac{d}{dt}\;\frac{(t^2-1)^3}{t^2}\right)
\frac{\partial}{\partial t_1}
F^C_{g,n-1}(t_1,t,\dots,t)
\\
+
\frac{(t^2-1)^3}{t^2}
\frac{\partial^2}{\partial t_1^2}
F^C_{g,n-1}(t_1,t,\dots,t)
\Bigg]\Bigg|_{t_1=t}
\\
=
-\frac{1}{32}
\left(
\frac{d}{dt}\;\frac{(t^2-1)^3}{t^2}\right)
\sum_{2g-2+n=m}\frac{1}{(n-1)!}
\frac{d}{dt}
F^C_{g,n-1}(t,\dots,t)
\\
-\frac{1}{32}
\frac{(t^2-1)^3}{t^2}
\sum_{2g-2+n=m}\frac{1}{(n-2)!}
\frac{\partial^2}{\partial t_1^2}
F^C_{g,n-1}(t_1,t,\dots,t)
\Bigg]\Bigg|_{t_1=t}
\\
=
-\frac{1}{32}
\left(
\frac{d}{dt}\;\frac{(t^2-1)^3}{t^2}\right)
\sum_{2g'-2+n'=m-1}\frac{1}{n'!}
\frac{d}{dt}
F^C_{g',n'}(t,\dots,t)
\\
-\frac{1}{32}
\frac{(t^2-1)^3}{t^2}
\sum_{2g'-2+n'=m-1}\frac{1}{(n'-1)!}
\frac{\partial^2}{\partial t_1^2}
F^C_{g',n'}(t_1,t,\dots,t)
\Bigg]\Bigg|_{t_1=t}
\\
=
-\frac{1}{32}
\left(
\frac{d}{dt}\;\frac{(t^2-1)^3}{t^2}\right)
\frac{d}{dt}S_m
\\
-\frac{1}{32}
\frac{(t^2-1)^3}{t^2}
\sum_{2g'-2+n'=m-1}\frac{1}{(n'-1)!}
\frac{\partial^2}{\partial t_1^2}
F^C_{g',n'}(t_1,t,\dots,t)
\Bigg]\Bigg|_{t_1=t}.
\end{multline*}
From Line $2$ we obtain
\begin{multline*}
-\frac{1}{16}\sum_{2g-2+n=m}\frac{1}{(n-1)!}
\sum_{j=2} ^n
\frac{(t_1^2-1)^2}{t_1^2}\frac{\partial}{\partial t_1}
F^C_{g,n-1}(t_{[\hat{j}]})\Bigg|_{t_1=\dots=t_n=t}
\\
=
-\frac{1}{16}
\frac{(t^2-1)^2}{t^2}\sum_{2g-2+n=m}
\frac{1}{(n-2)!}
\frac{\partial}{\partial t_1}
F^C_{g,n-1}(t_1,t,\dots,t)\Bigg|_{t_1=t}
\\
=
-\frac{1}{16}
\frac{(t^2-1)^2}{t^2}\frac{d}{dt}S_m.
\end{multline*}
Line $3$ produces
\begin{multline*}
-\frac{1}{32}\;
\sum_{2g-2+n=m}\frac{1}{(n-1)!}
\left[\frac{(t_1^2-1)^3}{t_1^2}
\left.
\frac{\partial^2}{\partial u_1\partial u_2}
F^C_{g-1,n+1}(u_1,u_2,t_2, t_3,\dots,t_n)
\right]
\right|_{u_1=u_2=t_1=\cdots=t_n=t}
\\
=
-\frac{1}{32}\;
\frac{(t^2-1)^3}{t^2}
\sum_{2g-2+n=m}\frac{1}{(n-1)!}
\left.
\left[
\frac{\partial^2}{\partial u_1\partial u_2}
F^C_{g-1,n+1}(u_1,u_2,t,\dots,t)
\right]
\right|_{u_1=u_2=t}
\\
=
-\frac{1}{32}\;
\frac{(t^2-1)^3}{t^2}
\sum_{2g'-2+n'=m-1, n'\ge 2}\frac{1}{(n'-2)!}
\left.
\left[
\frac{\partial^2}{\partial u_1\partial u_2}
F^C_{g',n'}(u_1,u_2,t,\dots,t)
\right]
\right|_{u_1=u_2=t}.
\end{multline*}
Finally, Line $4$ gives
\begin{multline*}
-\frac{1}{32}
\sum_{2g-2+n=m}\frac{1}{(n-1)!}
\Bigg[\frac{(t_1^2-1)^3}{t_1^2}
\\
\times
\sum_{\substack{g_1+g_2=g\\
I\sqcup J=\{2,3,\dots,n\}}}
^{\rm{stable}}
\left.
\frac{\partial}{\partial t_1}
F^C_{g_1,|I|+1}(t_1,t_I)
\frac{\partial}{\partial t_1}
F^C_{g_2,|J|+1}(t_1,t_J)\Bigg]
\right|_{t_1=\cdots=t_n=t}
\\
=
-\frac{1}{32}
\frac{(t^2-1)^3}{t^2}
\sum_{2g-2+n=m}\frac{1}{(n-1)!}
\\
\times
\sum_{\substack{g_1+g_2=g\\
n_1+n_2=n-1}}
^{\rm{stable}}
\left.
\Bigg[
\binom{n-1}{n_1}
\frac{\partial}{\partial t_1}
F^C_{g_1,n_1+1}(t_1,t,\dots,t)
\frac{\partial}{\partial t_1}
F^C_{g_2,n_2+1}(t_1,t,\dots,t)\Bigg]
\right|_{t_1=t}
\\
=
-\frac{1}{32}
\frac{(t^2-1)^3}{t^2}
\\
\times
\sum_{2g-2+n=m}
\sum_{\substack{g_1+g_2=g\\
n_1+n_2=n-1}}
^{\rm{stable}}
\left.
\Bigg[
\frac{1}{n_1!}
\frac{\partial}{\partial t_1}
F^C_{g_1,n_1+1}(t_1,t,\dots,t)
\frac{1}{n_2!}
\frac{\partial}{\partial t_1}
F^C_{g_2,n_2+1}(t_1,t,\dots,t)\Bigg]
\right|_{t_1=t}
\\
=
-\frac{1}{32}
\frac{(t^2-1)^3}{t^2}
\\
\times
\sum_{2g-2+n=m}
\sum_{\substack{g_1+g_2=g\\
n_1+n_2=n-1}}
^{\rm{stable}}
\frac{1}{(n_1+1)!}
\frac{d}{dt}
F^C_{g_1,n_1+1}(t,\dots,t)
\frac{1}{(n_2+1)!}
\frac{d}{dt}
F^C_{g_2,n_2+1}(t,\dots,t)
\\
=
-\frac{1}{32}
\frac{(t^2-1)^3}{t^2}
\sum_{\substack{a+b=m+1\\a,b\ge 2}}
\frac{dS_a}{dt}
\frac{dS_b}{dt}.
\end{multline*}
In the last line we note that unstable geometries
$(g,n) = (0,1)$ and $(0,2)$ are included only
in $S_0$ and $S_1$. 
This line gives the correct contribution of the
product term in (\ref{eq:S recursion}).

Since
$$
-\frac{1}{32}
\left(
\frac{d}{dt}\;\frac{(t^2-1)^3}{t^2}\right)
-\frac{1}{16}\;\frac{(t^2-1)^2}{t^3}
=
-\frac{1}{16}\;\frac{(t^2-1)^2}{t^3}(2t^2+t+1),
$$
we have the correct term for $dS_m/dt$ in 
(\ref{eq:S recursion}). The remaining
terms are second derivatives, and we calculate
\begin{multline*}
-\frac{1}{32}
\frac{(t^2-1)^3}{t^2}
\sum_{2g'-2+n'=m-1}\frac{1}{(n'-1)!}
\frac{\partial^2}{\partial t_1^2}
F^C_{g',n'}(t_1,t,\dots,t)
\Bigg|_{t_1=t}
\\
-\frac{1}{32}\;
\frac{(t^2-1)^3}{t^2}
\sum_{2g'-2+n'=m-1, n'\ge 2}\frac{1}{(n'-2)!}
\left.
\left[
\frac{\partial^2}{\partial u_1\partial u_2}
F^C_{g',n'}(u_1,u_2,t,\dots,t)
\right]
\right|_{u_1=u_2=t}
\\
=
-\frac{1}{32}
\frac{(t^2-1)^3}{t^2}
\sum_{2g'-2+n'=m-1}\frac{1}{n'!}
\\
\times
\left(
n'\frac{\partial^2}{\partial t_1^2}
F^C_{g',n'}(t_1,t,\dots,t)
\Bigg|_{t_1=t}
+
n'(n'-1)
\left.
\left[
\frac{\partial^2}{\partial u_1\partial u_2}
F^C_{g',n'}(u_1,u_2,t,\dots,t)
\right]
\right|_{u_1=u_2=t}
\right)
\\
=
-\frac{1}{32}
\frac{(t^2-1)^3}{t^2}
\frac{d^2}{dt^2}S_m.
\end{multline*}
This completes the derivation of
(\ref{eq:S recursion})
from (\ref{eq:FC recursion}).
\end{proof}

%*****************************************************
\section{Hierarchy of equations for $S_m$}

In this paper we proved that the Catalan and Hurwitz partition functions $Z$ satisfy appropriate Schr\"odinger equations, which can be written as
\begin{equation}
\widehat{A}\, Z = 0.  \label{AZ}
\end{equation}
In general such partition functions have an expansion $Z=\exp F = \exp\Big(\sum_{m=0}^{\infty} \hbar^{m-1} S_m\Big)$, where $S_m$ are expressed in terms of $F_{g,n}$ as in (\ref{eq:SmC})
\begin{equation}
S_m = \sum_{2g-2+n=m-1}\frac{1}{n!}\; F_{g,n},    \label{Sm-def}  
\end{equation}
while $\widehat{A}$ is a differential (or a difference-differential) operator expressed in terms of $x$ and $\hbar\frac{d}{dx}$. Our proofs relied on the knowledge of the recursion equations satisfied by $F_{g,n}$, and we did not need to determine coefficients $S_m$ explicitly. However, from the viewpoint of the asymptotic expansion in $\hbar$, these are the coefficients $S_m$ which play the fundamental role. They can be determined order by order in $\hbar$ once the form of the operator $\widehat{A}$ is known, or in turn -- if the form of $S_m$ is known, it allows to determine, also order by order in $\hbar$, the operator $\widehat{A}$ from the knowledge of its symbol $A=A(x,y)$. The relation between $S_m$ and $\widehat{A}$ can be encoded in a hierarchy of differential equations which was analyzed in \cite{GS}. In what follows we summarize the structure of this hierarchy and use it to determine several coefficients $S_m$ in the examples considered in this paper.

As $\widehat{A}$ is an operator expression, we should specify the ordering of $x$ and $\hbar\frac{d}{dx}$ operators it is built from. Let us choose the ordering such that, in each monomial summand, all $\hbar\frac{d}{dx}$ are to the right of $x$. In general, the Schr\"odinger operator can be written then in the form
\begin{equation}
\widehat{A} \; = \; \widehat{A}_0 + \hbar \widehat{A}_1 + \hbar^2 \widehat{A}_2 + \ldots \,,
\label{Ahatpert}
\end{equation}
which reduces in the $\hbar\to 0$ limit to the symbol $A_0=\widehat{A}_0=A=A(x,y)$. The examples considered in this paper are in fact quite special -- in both Catalan and Hurwitz case all $A_k=0$ for $k\geq 1$, and the issue of ordering is irrelevant (because monomials involving both $x$ and $\hbar\frac{d}{dx}$ do not arise), so that the Schr\"odinger operator can be obtained from the symbol $A_0$ just by the substitution $y\to\hbar \frac{d}{dx}$. Nonetheless, let us recall the most general form of the hierarchy of differential equations which relates a general operator of the form (\ref{Ahatpert}) to the coefficients $S_m$. This hierarchy can be written as \cite{GS}
\begin{equation}
\sum_{r=0}^n \D_{r} A_{n-r} \; = \; 0 \,,
\label{hierarchy-SA}    
\end{equation}
where $A_{n-r}$ are symbols of the operators $\widehat{A}_{n-r}$, and $\D_r$ are differential operators of degree $2r$, which can be
written as polynomials in $\partial_y \equiv \frac{\partial}{\partial y}$,
whose coefficients are polynomial expressions in functions $S_m$ and their derivatives (in what follows we denote $'\equiv\frac{d}{dx}$). 
The operators $\D_r$ are defined via the generating function
\begin{equation}
\sum_{r=0}^{\infty} \hbar^r \D_r \; = \;
\exp \left( \sum_{n=1}^{\infty} \hbar^n \frak{d}_n \right) \,, \nonumber
\label{Sigma-def}
\end{equation}
where
\be
\frak{d}_n \; = \; \sum_{r=1}^{n+1} \frac{S_{n+1-r}^{(r)}}{r!} (\partial_y)^r \,.  \nonumber
\ee
In particular, for small values of $n$ we get
\bea
\frak{d}_1 & = & \frac{1}{2} S''_0 \partial_y^2 + S'_1 \partial_y \,, \nonumber \\
\frak{d}_2 & = & \frac{1}{6}S'''_0 \partial_y^3 + \frac{1}{2} S''_1 \partial_y^2 + S'_2 \partial_y \,, \nonumber \\
\frak{d}_3 & = & \frac{1}{4!}S^{(4)}_0 \partial_y^4 +\frac{1}{3!}S'''_1 \partial_y^3 + \frac{1}{2} S''_2 \partial_y^2 + S'_3 \partial_y \,, \nonumber
\eea
so that 
\begin{subequations}\label{ddd}
\bea
\D_0 & = & 1 \,, \nonumber \\
\D_1 & = & \frac{S''_0}{2}  \partial_y^2 + S'_1 \partial_y \,,  \nonumber \\
\D_2 & = & \frac{(S''_0)^2}{8} \partial_y^4 +  \frac{1}{6}\big(S'''_0 + 3S''_0 S'_1 \big) \partial_y^3
+ \frac{1}{2}\big(S''_1 + (S'_1)^2 \big) \partial_y^2 + S'_2 \partial_y \,, \nonumber \\
& \vdots & \nonumber
\eea
\end{subequations}
In consequence, at each order $\hbar^n$ in \eqref{hierarchy-SA} we get:
\bea
\hbar^0 & : \qquad & A \; = \; 0 \,, \label{Aclass} \\
\hbar^1 & : \qquad & \Big(\frac{S''_0}{2}\partial_y^2 + S'_1 \partial_y \Big) A + A_1 \; = \; 0 \,, \label{hier-eq2v} \\
& & \quad \vdots \nonumber \\
\hbar^n & : \qquad & \D_n A + \D_{n-1} A_1 + \ldots + A_n \; = \; 0 \,, \label{hier-SA-Cstar} \\
& & \quad \vdots \nonumber
\eea

We can now use the above formalism to analyze Schr\"odinger equations for the generalized Catalan and Hurwitz numbers. We already stressed that these equations are special as they have no $\hbar$ corrections, i.e. all $A_k=0$ for $k\geq 1$. We can test this statement -- or, in other words, reconstruct entire $\widehat{A}$ from the form of $A$ -- using (\ref{hier-SA-Cstar}). In particular, from the Proposition \ref{prop:F01 F02}, for Catalan numbers we have
\be
S^C_0 = -\frac{1}{2}z^2 + \log z, \qquad \quad S^C_1 = -\frac{1}{2}\log (1-z^2).  \nonumber
\ee
Plugging into (\ref{hier-eq2v}) expressed in terms of $z$ and $\frac{d}{dx}=\big(\frac{dx}{dz}\big)^{-1}\frac{d}{dz}$ (and $x(z)=z+z^{-1}$), we indeed find $A_1=0$. From the knowledge of the recursion relations for $F_{g,n}$ we could now reconstruct other $S_m$ via (\ref{Sm-def}), and use the above hierarchy to show that higher $A_k$ vanish as well. However, from our earlier considerations we have essentially proven this, using the form of recursion relations for $F_{g,n}$, without the need of writing down $S_m$ explicitly. Nonetheless, in various situations it is necessary to know the form of coefficients $S_m$ or the recursion they satisfy, so we can also use the above formalism in this context. The equation (\ref{hier-SA-Cstar}) indeed gives a recursion directly for $S_m$, which in addition (due to $A_k=0$ for $k\geq 1$) reduces simply to $\D_n A=0$. Expressing this recursion in terms of $z$-variable, we find in particular
\bea
\frac{dS^C_2}{dx} &=& \frac{z^5 (2 z^2+3)}{(z^2-1)^5}, \nonumber \\
\frac{dS^C_3}{dx} &=& -\frac{5 z^7 (3 + 7 z^2 + 2 z^4)}{(z^2-1)^8}, \nonumber \\
\frac{dS^C_4}{dx} &=& \frac{1 - 9 z^2 + 36 z^4 - 84 z^6 - 4599 z^8 - 13005 z^10 - 4440 z^{12}}{360 (-1 + z^2)^9}. \nonumber
\eea
Moreover, from (\ref{eq:FgnC zero}) we have $S^C_m(z=0)\, = \, 0$, which fixes a constant of integration, so that integrating the above formulas gives
\bea
S^C_2 &=& \frac{z^4 (9 + z^2)}{12 (1 - z^2)^3}, \nonumber \\
S^C_3 &=& \frac{5 z^6(1 + z^2)}{2 (z^2)^6 - 1}, \nonumber \\
S^C_4 &=& \frac{z^8 (-4725 - 12879 z^2 - 4524 z^4 + 36 z^6 - 9 z^8 + z^{10})}{360 (z^2-1)^9}. \nonumber
\eea

This exercise can also be repeated for the case of the generalized Hurwitz numbers. From (\ref{eq:F01H}) and (\ref{eq:F02H diag}) we have
\be
S^H_0 = z-\frac{1}{2}z^2, \qquad \quad S^H_1 = -\frac{1}{2}\Big(z + \log (1-z)\Big) .    \nonumber
\ee
Plugging into (\ref{hier-eq2v}) confirms that the first correction to the classical Hurwitz curve vanishes, $A_1=0$, as it should. Solving further the hierarchy (\ref{hier-SA-Cstar}), which again reduces to $\D_n A=0$, results in
\bea
\frac{dS^H_2}{dx} &=& \frac{z^3 (4 + z^2)}{8 (1 - z)^5}, \nonumber \\
\frac{dS^H_3}{dx} &=& -\frac{z^4(12 + 8 z + 9 z^2 + z^4)}{16 (z-1)^8}, \nonumber \\
\frac{dS^H_4}{dx} &=& \frac{z^5(192 + 352 z + 376 z^2 + 104 z^3 + 76 z^4 + 5 z^6)}{128 (1 - z)^{11}}. \nonumber
\eea
These equations can be further integrated to get $S^H_m$, with the integration constant fixed by $S^H_m(z=0)\,=\,0$, which follows from (\ref{eq:FgnH zero}).

\bigskip

%*****************************************************

\end{appendix}

\begin{ack}
The authors thank the Banff International 
Research Center in Alberta and the Hausdorff 
Research Institute for 
Mathematics in Bonn for their
support and hospitality, where this collaboration
was started. They also thank 
Ga\"etan Borot,
Vincent Bouchard, 
Bertrand Eynard,
 Sergei Gukov,
  Jerry Kaminker,
   Maxim Kontsevich,
 Xiaojun Liu, 
 Marcos Mari\~no,
 Michael Penkava,
Anne Schilling, 
Sergey Shadrin, 
and
Don Zagier
for useful discussions. 
The research of M.M.\ has been 
supported by NSF DMS-1104734, DMS-1104751,
Max-Planck Institut f\"ur Mathematik in Bonn,
 the Beijing International Center
for Mathematical Research, 
American Institute of Mathematics in Palo Alto,
the University of Salamanca, 
and  Universiteit van Amsterdam. 
The research 
of P.S.\ has been supported by 
the DOE grant DE-FG03-92-ER40701FG-02, 
the European Commission under the Marie-Curie 
International Outgoing Fellowship Programme, 
and the Foundation for Polish Science.
\end{ack}

%Bibliography

\providecommand{\bysame}{\leavevmode\hbox to3em{\hrulefill}\thinspace}

\bibliographystyle{amsplain}

\end{document}